\newtheorem{proposition}{Proposition} 
\newtheorem{lemma}{Lemma} 
\newtheorem{definition}{Definition} 
\newcommand{\eg}{\mbox{e.g.} }
\newcommand{\ie}{\mbox{i.e.} }
\newcommand{\cf}{\mbox{cf.} }
\newcommand{\vx}{\ensuremath{\mathbf{x}}}
\newcommand{\vxi}{\ensuremath{\mathbf{x}_i}}
\newcommand{\vxj}{\ensuremath{\mathbf{x}_j}}
\newcommand{\vxij}{\ensuremath{\mathbf{x}_{ij}}}
\newcommand{\udij}{\ensuremath{\hat{\mathbf{d}}_{ij}}}
\newcommand{\vMij}{\ensuremath{\mathbf{M}_{ij}}}
\newcommand{\vSij}{\ensuremath{\mathbf{S}_{ij}}}
\newcommand{\vA}{\ensuremath{\mathbf{A}}}
\newcommand{\vT}{\ensuremath{\mathbf{T}}}
\newcommand{\vCT}{\ensuremath{\mathbf{C}_T}}
\newcommand{\uRif}{\ensuremath{\hat{\mathbf{R}}_{i0}}}
\newcommand{\uRi}{\ensuremath{\hat{\mathbf{R}}_{i}}}
\newcommand{\uRj}{\ensuremath{\hat{\mathbf{R}}_{j}}}
\newcommand{\vvi}{\ensuremath{\mathbf{v}_i}}
\newcommand{\vFi}{\ensuremath{\mathbf{F}_i}}
\newcommand{\vFj}{\ensuremath{\mathbf{F}_j}}
\newcommand{\vFiji}{\ensuremath{\mathbf{F}_{ij}^{\mathrm{(int)}}}}
\newcommand{\vFjii}{\ensuremath{\mathbf{F}_{ji}^{\mathrm{(int)}}}}
\newcommand{\Fijhor}{\ensuremath{F_{ij}^{\mathrm{(hor)}}}}
\newcommand{\Fjihor}{\ensuremath{F_{ji}^{\mathrm{(hor)}}}}
\newcommand{\Fijver}{\ensuremath{F_{ij}^{\mathrm{(ver)}}}}
\newcommand{\Fjiver}{\ensuremath{F_{ji}^{\mathrm{(ver)}}}}
\newcommand{\vFifr}{\ensuremath{\mathbf{F}_{i}^{\mathrm{(drag)}}}}
\newcommand{\vFifb}{\ensuremath{\mathbf{F}_{i}^{\mathrm{(loc)}}}}
\newcommand{\vFjfb}{\ensuremath{\mathbf{F}_{j}^{\mathrm{(loc)}}}}
\newcommand{\vFist}{\ensuremath{\mathbf{F}_{i}^{\mathrm{(st)}}}}
\newcommand{\finth}{\ensuremath{f_\mathrm{int}}}
\newcommand{\floc}{\ensuremath{f_\mathrm{loc}}}
\newcommand{\vffi}{\ensuremath{\mathbf{f}_i}}
\newcommand{\vffj}{\ensuremath{\mathbf{f}_j}}
\newcommand{\vffij}{\ensuremath{\mathbf{f}_{ij}}}
\newcommand{\Rij}{\ensuremath{R_{ij}}}
\newcommand{\dij}{\ensuremath{d_{ij}}}
\newcommand{\Rif}{\ensuremath{ R_{i0} }} 
\newcommand{\Rjf}{\ensuremath{ R_{j0} }}
\newcommand{\Rkf}{\ensuremath{ R_{k0} }}
\newcommand{\RiT}{\ensuremath{ R_{iT} }}
\newcommand{\RjT}{\ensuremath{ R_{jT} }}
\newcommand{\deltij}{\ensuremath{\delta_{ij}}}
\newcommand{\deltif}{\ensuremath{\delta_{i0}}}
\newcommand{\deltjf}{\ensuremath{\delta_{j0}}}
\newcommand{\deltmi}{\ensuremath{\delta_{ij}^{\mathrm{(min)}}}}
\newcommand{\deltcr}{\ensuremath{\delta_{ij}^{\mathrm{(crit)}}}}
\newcommand{\deltru}{\ensuremath{\delta_{ij}^{\mathrm{(rup)}}}}
\newcommand{\zzr}{\ensuremath{z_\mathrm{max}}}
\newcommand{\mcVi}{\ensuremath{\mathcal{V}_i}}
\newcommand{\mcVj}{\ensuremath{\mathcal{V}_j}}
\newcommand{\mcVic}{\ensuremath{\overline{\mathcal{V}}_i}}
\newcommand{\mcVjc}{\ensuremath{\overline{\mathcal{V}}_j}}
\newcommand{\mcP}{\ensuremath{\mathcal{P}}}
\newcommand{\mcPm}{\ensuremath{\mathcal{P}_{\mathrm{max}}}}
\newcommand{\mcPi}{\ensuremath{\mathcal{P}_i}}
\newcommand{\mcPj}{\ensuremath{\mathcal{P}_j}}
\newcommand{\mcPk}{\ensuremath{\mathcal{P}_k}}
\newcommand{\smcPm}{\ensuremath{\sqrt{\mathcal{P}_{\mathrm{max}}}}}
\newcommand{\mcBri}{\ensuremath{ \mathcal{B}_{r_i} (\vxi) }}
\newcommand{\mcBrj}{\ensuremath{ \mathcal{B}_{r_j} (\vxj) }}
\newcommand{\mcBRif}{\ensuremath{ \mathcal{B}_{ R_{i0} } (\vxi) }}
\newcommand{\mcBRjf}{\ensuremath{ \mathcal{B}_{ R_{j0} } (\vxj) }}
\DeclareMathOperator{\Var}{Var}
\newcommand{\rhon}{\ensuremath{ \tilde{\rho} }}
\newcommand{\rhofi}{\ensuremath{ \rho_{i0} }}
\newcommand{\thm}{\ensuremath{\theta_T}}
\newcommand{\phif}{\ensuremath{\phi_{i0}}}
\newcommand{\phjf}{\ensuremath{\phi_{j0}}}
\newcommand{\ddd}{\ensuremath{\mathrm{d}}}
\newcommand{\Delmi}{\ensuremath{\Delta_\mathrm{min}}}
\newcommand{\Delcr}{\ensuremath{\Delta_\mathrm{crit}}}
\newcommand{\Delij}{\ensuremath{\Delta_{ij}}}
\newcommand{\qqq}{\ensuremath{Q}}
\newcommand{\qnb}{\ensuremath{Q_\mathrm{nb}}}
\newcommand{\lamwid}{\ensuremath{\mathcal W}}
\newcommand{\covl}{\ensuremath{\mathcal Z} }
\newcommand{\sprod}{\ensuremath{\mathcal T_0}}
\begin{document}

\title{\centering
  Generalized Voronoi Tessellation as a Model of Two-dimensional 
  Cell Tissue Dynamics
}
%
%
\makeatletter
\renewcommand{\@fnsymbol}[1]{\@arabic{#1}}
\makeatother
\author{
  Martin Bock\thanks{ corresponding author, email \texttt{mab@uni-bonn.de}
      } $^,$\thanks{ Universit\"at Bonn, Theoretische Biologie,
                     Kirschallee 1-3, 53115 Bonn, Germany } \and
  Amit Kumar Tyagi\footnotemark[2] \and
  Jan-Ulrich Kreft\thanks{ Centre for Systems Biology,
      School of Biosciences, University of Birmingham, United Kingdom } \and
  Wolfgang Alt\footnotemark[2]
}
\date{\today}
\maketitle

\begin{abstract}
  Voronoi tessellations have been used to model the geometric 
  arrangement of cells in morphogenetic or cancerous tissues, 
  however so far only with flat hypersurfaces as cell-cell contact borders.
  In order to reproduce the experimentally observed 
  piecewise spherical boundary shapes, we develop a consistent 
  theoretical framework of multiplicatively weighted 
  distance functions, defining generalized finite Voronoi 
  neighborhoods around cell bodies of varying 
  radius, which serve as heterogeneous generators of the 
  resulting model tissue.
  The interactions between cells are represented by adhesive 
  and repelling force densities on the cell contact borders. 
  In addition, protrusive locomotion forces are implemented along 
  the cell boundaries at the tissue margin, 
  and stochastic perturbations allow for non-deterministic motility effects. 
  Simulations of the emerging system of stochastic differential 
  equations for position and velocity of cell centers show 
  the feasibility of this Voronoi method generating realistic cell shapes. 
  In the limiting case of a single cell pair in brief contact, 
  the dynamical nonlinear Ornstein-Uhlenbeck 
  process is analytically investigated. 
  In general, topologically distinct tissue conformations are observed,
  exhibiting stability on different time scales, and 
  tissue coherence is quantified by suitable characteristics. 
  Finally, an argument is derived 
  pointing to a tradeoff in natural tissues 
  between cell size heterogeneity and 
  the extension of cellular lamellae.
\end{abstract}

\tableofcontents

\section{Introduction}

A Voronoi tessellation is a partition of space according to certain 
neighborhood relations of a given set of generators (points) in 
this space. Initially proposed by Dirichlet 
for special cases \cite{Dirichlet1850}, the method was established by 
Voronoi more than 100 years ago \cite{Voronoi1908}. 
The geometric dual of the Voronoi tessellation
was proposed by Delaunay in 1934 --- and therefore is called 
Delaunay triangulation. It connects those points of 
a Voronoi tessellation that share a common 
border.
Since the latter can be directly constructed out of the former, 
both terms are sometimes used equivalently. 
In the following years, the method was rediscovered 
throughout other fields, which accounts for many other names 
designating the very concept, such as 
Thiessen polygons \cite{Thiessen1911} in meteorology or 
Wigner-Seitz cells \cite{WignerSeitz1933} 
and Brillouin zones \cite{Brillouin1930} in solid state physics.
With the technological and scientific advance, 
the method became feasible in 
computational geometry \cite{Shamos1975}, and since then has 
widely evolved, \cf \cite{Bernal1992,AurenhammerKlein1996}, 
making it appealing for biological applications.

In particular, Voronoi tessellations have been applied to 
represent various aggregates of cells and swarming animals. 
Initially, Honda proposed the method 
in two spatial dimensions \cite{Honda1978}. 
The first applications to biological tissue were 
cell sorting simulations, however 
starting from artificially shaped quadratic cells \cite{Sulsky1984}. 
Then, morphogenesis and its 
underlying intercellular mechanisms were studied starting from 
a pure Delaunay mesh and simulating vertex dynamics 
\cite{WelikyOster1990,Weliky1991}, yet without 
using Voronoi tessellations explicitly.
In contrast, by applying transformation rules like mitosis 
combined with Monte-Carlo dynamics, evolving multicellular tissue 
was represented by Voronoi tessellations \cite{Drasdo1995}. 
In particular, growth instabilities, blastula formation 
and gastrulation could be conceived within 
this framework \cite{DrasdoForgacs2000}. 
Similar effects were reproduced by using vertex dynamics \cite{Brodland2002}.
Moreover, cell organization 
in the intestinal crypt was modelled using spring forces 
and restricting the motion to a cylindrical surface \cite{Meineke2001}. 
An application to bird swarming together with the proposal of 
a continuum formulation was given in \cite{Alt2003}. 
Finally, the influence of shear stress on the evolution of 
two-dimensional tissues was studied \cite{Dieterich2004}.

Only quite recently Voronoi tessellations
have been extended to be used as a model for three dimensional 
tissue, again using vertex dynamics \cite{Honda2004}.
Other authors use optimized kinetic algorithms \cite{Schaller2004,Beyer2005} 
to employ generalized Voronoi tessellations (discussed as 
difference method in this article), with cell-cell 
and cell-matrix adhesion \cite{Schaller2005}.
Marginal cells have been closed by 
prescribing a maximal cell radius, enabling the 
study of the growth dynamics of epithelial 
cell populations \cite{Galle2005}. 
So far, however, the cell-cell boundaries 
were exclusively represented by flat hypersurfaces. 

In contrast, when observing two-dimensional monolayers 
of keratinocytes, for example \cf \cite{Young2000,Marie2003,Tinkle2008}, 
the cell-cell contact borders visible from staining 
cadherin-complexes frequently appear as circular arcs, whose 
shape and length seems to be determined by the constellation 
and size distribution of neighboring cells.
Moreover, the forces between such cells are 
influenced by filament networks or bundles 
meeting at the cell-cell junctions and eventually 
balanced by elastic counterforces 
\cite{Alberts,Ananthakrishnan2007,Koestler2008,Sivaramakrishnan2008}.
Therefore, a geometrical and dynamical modeling framework 
is required that reproduces the observed cell shapes and 
simultaneously allows for quantifying the cell-cell 
interaction forces as well as the active locomotion 
forces appearing at the free cell boundaries. 
Here we present a simple and effective solution of this 
task by using a suitably weighted Voronoi tessellation.

This article is organized as follows:
In section \ref{sec:voronoi_tesselation} Voronoi tessellations
are introduced in a general manner. 
Next, two types of weighted square distance functions are introduced, 
using the method of difference and quotient, respectively, and 
their particular consequences for cell tissue modeling are investigated. 
Inspired by the intricate interplay between cytoskeletal filament bundles and 
cadherin-catenin cohesion or integrin adhesion sites, 
the forces on the intercellular and exterior cell borders  
are proposed in section \ref{sec:forces} 
after discussing the emergence of cell shape within our model. 
Then the dynamics 
of a whole cell aggregate is defined, directly leading to 
analytical results on cell pair contacts in section \ref{sec:pair_contacts}.
After simulation studies of meta-stable states during tissue 
equilibration and robustness of tissue formation under 
the influence of various model parameters in section \ref{sec:sim} 
we conclude with a discussion of our results in section \ref{sec:results}.

\section{Generalized Voronoi tessellations}\label{sec:voronoi_tesselation}

Let $\{ g_i : i = 1 \dots N \}$ denote a finite 
set of $N$ generators or 
points $\vxi$ in Euclidean, 
$n$-dimensional space $\mathbb{R}^n$. 
\begin{definition}\label{def:Vi}
  The Voronoi cell of a generator $g_i = \vxi$ is defined as
  \begin{equation}\label{eq:VNG}
  \mcVi = \left\{
      \vx \in \mathbb{R}^n :
        \mcPi (\vx) < \mcPj (\vx) 
      \quad \forall j \neq i
    \right\},
  \end{equation}
  where $\mcPi$ $(i = 1 \dots N)$ is a given set of continuous, generalized
  square distance functions on $\mathbb{R}^n$ 
  with the property that $\forall i: \; \vxi \in \mcVi$.
\end{definition}\label{defi:gen_voronoi}
Thus, $\mcVi$ represents an open neighborhood of $\vxi$, 
containing all points $\vx$ that are $\mcP$-closer 
to $\vxi$ than to any other $\vxj$.
\begin{definition}\label{defi:gen_vor_boundary}
  The contact border between two points $\vxi$ and 
  $\vxj$ is defined as the intersection of the closures of 
  $\mcVi, \mcVj$:
  \begin{equation}\label{BG}
    \Gamma_{ij} = \mcVic \cap \mcVjc \quad
        \text{with} \; \; i \neq j.
  \end{equation}
  The total boundary of the Voronoi neighborhood around $\vxi$ then is
  \begin{equation*}
    \partial \mcVi = \bigcup_{j \neq i} \Gamma_{ij} \; .
  \end{equation*}
\end{definition}
The contact border 
$\Gamma_{ij}$ therefore is the set of all points $\mcP$-equidistant
from $\vxi$ and $\vxj$, namely
\begin{equation}\label{eq:pi_eq_pj}
  \Gamma_{ij} = \left\{ \vx \in \mathbb{R}^n : 
                        \mcPi (\vx) = \mcPj (\vx) \leq \mcPk (\vx) \;
                        \forall k \neq i,j
                \right\}.
\end{equation}
The Voronoi tessellation in general form is then given by 
$\{ \mcVi, \Gamma_{ij}; \; i,j = 1, \dots, N \}$ and covers 
the whole space, where so-called marginal 
neighborhoods $\mathcal{V}_i$ extend to infinity.
Depending on the particular choice of the generalized square 
distance functions, 
$\Gamma_{ij}$ can take various shapes. In the standard Euclidean case, 
$\mathcal{P}_i (\mathbf{x}) = | \mathbf{x} - \mathbf{x}_i |^2$, 
the contact border $\Gamma_{ij}$ is the perpendicular 
bisector of the line segment from $\vxi$ to $\vxj$, an $(n-1)$-plane. 
Then $\mathcal{V}_i$ is bounded by a convex, not necessarily finite 
polytope and is called the (classical) Voronoi neighborhood \cite{Voronoi1908} 
or Dirichlet domain \cite{Dirichlet1850}.

The modeling aim here is to represent biological eukaryotic cells 
in connected $3$-dimensional tissues or confluent $2$-dimensional 
cell monolayers as Voronoi neighborhoods $\mcVi$, as in the 
$2$-dimensional pictures in figure \ref{fig:stratum_spinosum}. 
\begin{figure}
  \centering
  \includegraphics[width=1.0\textwidth]{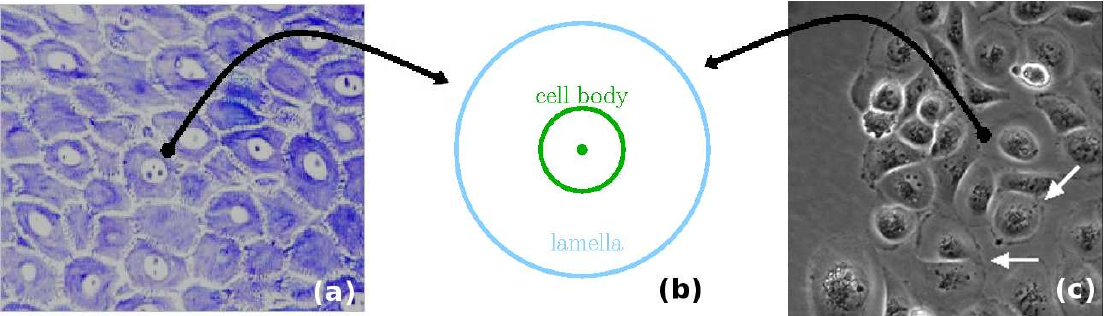}
  \caption{(a) Micrograph of human
           keratinocytes in a section of stratum spinosum \emph{in vivo} 
           (reproduced from \cite{Young2000}), 
           and (c) phase contrast microscopic 
           photograph of human keratinocytes in-vitro 
           (by courtesy of Institute of Cell Biology, Bonn University). 
           Image (c) is extracted from the supplementary movie 
           \href{http://www.theobio.uni-bonn.de/people/mab/sup01/index.html}{\texttt{mov0.avi}}
           of the dynamics at the edge of an almost confluent monolayer 
           in a wound scratch assay.
           White arrows in (c) indicate a round 
           cell (upper) and a cell pair competing for its influence 
           region (lower).
           According to these observations,  
           tissue cells in a two-dimensional geometry are modelled as 
           Voronoi neighborhoods $\mcVi$ containing the 
           ball shaped cell bodies $\mcBri$ surrounded by a so-called 
           lamella (b). 
          }\label{fig:stratum_spinosum}
\end{figure}
In a minimal approach we define the points $\vxi$ as
centers of the visible, mostly ball-shaped cell bodies, containing the 
cell nuclei plus other cell organelles such as mitochondria 
or the Golgi apparatus. 
By attributing a finite radius $r_i > 0$ to each $\vxi$, 
the Voronoi concept is extended to generators $g_i = \mcBri$ 
of positive finite volume, being  
a suitable representation of \emph{cell bodies}. 
Since these are rather solid in comparison to the rest of the cell,
it is assumed that the $\mcBri$ do not overlap. Then 
$\mcVi \setminus \mcBri$ represents the \emph{protoplasmic region} of the 
cell $i$, which for $n=2$ appears as a flat \emph{lamella} in light microscopy.
Clearly, the natural condition $\overline{\mcBri} \subset \mcVi$ requires that 
\begin{equation}\label{eq:nuclei_non_overlap}
  \forall \vx \in \overline{\mcBri} \quad
  \forall j \neq i: \quad
  \mcPj (\vx) > \mcPi (\vx).
\end{equation}
It shall be seen later, that this condition is fulfilled for the 
chosen generalized square distance functions.

Furthermore, certain weights $w_i \in \mathbb{R}^+$, 
on which the square distance function $\mcPi$ may depend, 
are assigned to each cell $i$.
These weights $w_i = w (r_i)$ are assumed to be 
strictly monotonically increasing 
functions of the cell body radius $r_i$, 
with the intended effect that 
stronger weights induce larger cell sizes, 
by shifting the Voronoi contact borders outwards. 
Importantly, different choices of how $\mcPi$ depends on $w_i$ 
could lead to different cell shapes.
Out of many possible generalizations of Voronoi tessellations
(for a review see \cite{AurenhammerKlein1996}), 
we only discuss two straight-forward ways here, which are determined by 
the set of all cell center positions, body radii and weights 
$\lbrace g_i = (\mcBri, w_i) \rbrace$.

\subsection{Difference method} 

The partition of space into cells is obtained by a Voronoi tessellation 
using the Euclidean square distance function with subtracted weights
\begin{equation}\label{eq:diffrule}
  \mcPi (\vx) = ( \vx - \vxi )^2 - w_i^2,
\end{equation}
which has previously been used in \cite{Honda1978,Honda2004} 
without and in \cite{Schaller2005,Beyer2007} with weights $w_i$. 

For the following we denote 
$\vxij = ( \vxi + \vxj )/2$ the cell pair mid-point, 
$\dij = | \vxi - \vxj |$ the Euclidean cell center distance, 
and $\udij = ( \vxi - \vxj ) / \dij$ the unit vector 
of the oriented axis connecting the two cell centers. 
Thus, from equation (\ref{eq:pi_eq_pj}) the condition for a point
$\vx$ to be located on the contact border $\Gamma_{ij}$ 
reads as 
\begin{equation}\label{eq:diffrule_boundary}
  \bigl( \vx - \vxij \bigr) \cdot \udij = - \frac{ w_i^2 - w_j^2}{ 2 \dij }, 
\end{equation}
being equivalent to a linear hyper-plane equation.
As for the classical Voronoi partition, 
the contact $(n-1)$-plane between two
neighboring cells $i,j$ is perpendicular to the
vector connecting the cell centers. However, now the position 
of the contact border plane along the connecting vector, 
and thereby the sizes of the Voronoi cells, 
depend on the weights.
\begin{figure}[htb]
  \centering
  \includegraphics[width=0.6\textwidth]{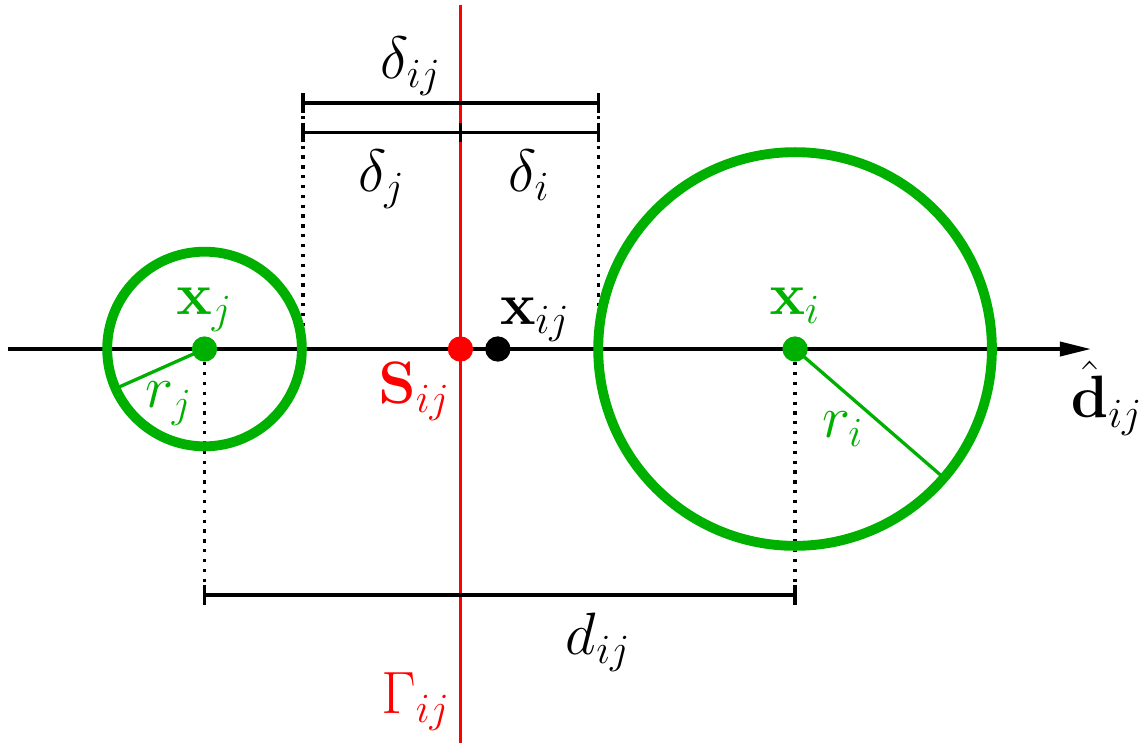}
    \caption{Geometry of a cell pair and its border using the
             difference method. There is a distinct direction $\udij$ 
             given by the axis line connecting $\vxj$ and $\vxi$. 
             The circular cell bodies around $\vxi, \vxj$ 
             and their radii $r_i, r_j$ are indicated 
             by green lines, the contact hyperplane $\Gamma_{ij}$ 
             by a red one.
             Other quantities are explained in the text. 
            }\label{fig:diffrule}
\end{figure}
In figure \ref{fig:diffrule} the geometry of a cell pair 
with its separating border is illustrated.  

In order for a distance function to yield a partition 
describing an aggregate of biological cells in living tissue, 
the border $\Gamma_{ij}$ has to be located 
between the surfaces of the non-overlapping cell bodies.
With the corresponding distances as denoted in figure \ref{fig:diffrule}, 
this is equivalent to
\begin{equation}\label{eq:drCond0}
  \delta_i > 0 \quad \wedge \quad \delta_j > 0.
\end{equation}
These constraints have
consequences for possible choices of the weights:
\begin{lemma}\label{lem:diff_w_eq_r}
  Let $\lbrace \mcVi, \Gamma_{ij} \rbrace$ be
  a Voronoi tessellation of non-overlapping generators 
  $\lbrace g_i = (\mcBri, w_i) \rbrace$ 
  constructed from $\mcPi$ according to the difference method 
  in definition (\ref{eq:diffrule})
  with positive weights $w_i$. 
  Then the inequalities (\ref{eq:drCond0}) are satisfied for all cell pairs
  $i \neq j$ with arbitrarily small but positive cell body distance 
  $\deltij = \delta_i + \delta_j$, if and only if
  \begin{equation}\label{eq:diff_wi_eq_ri}
    \forall i: \quad w_i = r_i.
  \end{equation}
\end{lemma}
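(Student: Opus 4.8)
The plan is to reduce the two-dimensional geometry to a one-dimensional problem along the connecting axis \udij, turn the positivity conditions (\ref{eq:drCond0}) into explicit inequalities in the body distance \deltij, and then exploit the regime $\deltij \to 0^+$. First I would locate where the contact border meets the axis. Writing a point of the axis as $\vx = \vxij + t\,\udij$, equation (\ref{eq:diffrule_boundary}) places $\Gamma_{ij}$ at the signed coordinate $t^\ast = (w_j^2 - w_i^2)/(2\dij)$, measured from the midpoint \vxij\ towards \vxi. Since \mcBri\ is centred at $t = \dij/2$ and \mcBrj\ at $t = -\dij/2$, the body surfaces facing the border sit at $t = \dij/2 - r_i$ and $t = -\dij/2 + r_j$, so the two gap widths are
\begin{equation*}
  \delta_i = \frac{\dij}{2} - r_i - \frac{w_j^2 - w_i^2}{2\dij}, \qquad
  \delta_j = \frac{\dij}{2} - r_j + \frac{w_j^2 - w_i^2}{2\dij},
\end{equation*}
and one checks immediately that $\delta_i + \delta_j = \dij - r_i - r_j = \deltij$, as asserted.

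For the sufficiency (``if'') direction I would substitute $w_i = r_i$, $w_j = r_j$ and clear the denominator, obtaining $2\dij\,\delta_i = (\dij - r_i)^2 - r_j^2 = (\dij - r_i - r_j)(\dij - r_i + r_j)$. Whenever the bodies do not overlap, that is $\dij > r_i + r_j$, both factors are strictly positive and hence $\delta_i > 0$; by the symmetry $i \leftrightarrow j$ also $\delta_j > 0$. Notice that this already establishes (\ref{eq:drCond0}) for \emph{every} positive \deltij, not merely arbitrarily small ones.

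For the necessity (``only if'') direction I would insert $\dij = r_i + r_j + \deltij$ and expand the cleared numerators in powers of \deltij. Collecting terms yields
\begin{equation*}
  2\dij\,\delta_i = (r_j^2 - w_j^2) - (r_i^2 - w_i^2) + 2 r_j\,\deltij + \deltij^2,
\end{equation*}
together with the analogous expression for $\delta_j$ obtained by interchanging $i$ and $j$. As $\deltij \to 0^+$ the linear and quadratic contributions vanish while $\dij \to r_i + r_j > 0$, so the sign of each gap is governed by the constant term alone: $\delta_i > 0$ for all small \deltij\ forces $r_j^2 - w_j^2 \ge r_i^2 - w_i^2$, whereas $\delta_j > 0$ forces the opposite inequality. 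Both together give $r_i^2 - w_i^2 = r_j^2 - w_j^2$ for every pair $i \neq j$.

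The main obstacle is the last step, promoting this pairwise identity to $w_i = r_i$. Since the paired cells are arbitrary and the weight is a fixed, strictly increasing function $w_i = w(r_i)$ of the body radius, the quantity $r^2 - w(r)^2$ must equal one and the same constant $c$ for all admissible radii, i.e.\ $w(r) = \sqrt{r^2 - c}$. Requiring $w$ to be real and positive down to arbitrarily small bodies forces $c \le 0$, and the natural normalization that a vanishing cell body carry no weight, $w(r) \to 0$ as $r \to 0$ (so that the construction degenerates to the classical point Voronoi tessellation), pins $c = 0$. This yields $w_i = r_i$ and closes the equivalence.
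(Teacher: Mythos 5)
Your reduction to the axis coordinate and the resulting gap formulas are the paper's computation in different clothing: clearing the denominator in your expression for $\delta_i$ reproduces exactly the paper's representation
\begin{equation*}
  \delta_i = \frac{\deltij(\deltij + 2 r_j) + (w_i^2 - w_j^2) - (r_i^2 - r_j^2)}{2(r_i + r_j + \deltij)},
\end{equation*}
and your limiting argument $\deltij \to 0^+$, giving $w_i^2 - r_i^2 \geq w_j^2 - r_j^2$ from $\delta_i > 0$ and equality after swapping $i$ and $j$, is verbatim the paper's necessity step. Your explicit factorization proving the ``if'' direction (which the paper leaves implicit in its displayed formula) is also correct.

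The gap is in your final step. Having derived $w_k^2 = r_k^2 + C$ for a joint constant $C$, you pin $C = 0$ by requiring that $w$ be real and positive down to arbitrarily small radii and that $w(r) \to 0$ as $r \to 0$. Neither assumption is available: the lemma concerns a fixed finite set of generators, so the pairwise identity constrains $w$ only at the finitely many radii $r_1, \dots, r_N$ actually present, and the paper nowhere postulates a normalization at $r = 0$ (only positivity and strict monotonicity of $w(\cdot)$). Moreover, no argument about the weight function alone can close this: the choice $w_i = \sqrt{r_i^2 + C}$ with $C > 0$ satisfies every hypothesis of the lemma and yields exactly the same gaps $\delta_i, \delta_j$, so taken literally it is a counterexample to necessity. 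What rescues the statement --- and what the paper actually uses --- is that adding the constant $C$ to every $w_i^2$ subtracts the same constant from every $\mcPi$, hence changes no comparison $\mcPi(\vx) < \mcPj(\vx)$ and leaves the tessellation, and therefore all the $\delta$'s, invariant. The additive constant is pure gauge, and the conclusion $w_i = r_i$ is to be read as the canonical normalization $C = 0$ of that gauge freedom. Replacing your two ad hoc assumptions on $w$ near $r=0$ by this one-line invariance observation completes the proof from the stated hypotheses.
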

\begin{proof}: 
  The contact border equation $\mcPi = \mcPj$ evaluated at 
  the intersection point $\vx = \vSij$, see figure \ref{fig:diffrule}, 
  yields $( r_i + \delta_i )^2 - w_i^2 = ( r_j + \delta_j )^2 - w_j$.
  Together with $\delta_i + \delta_j = \deltij$ we obtain the 
  representation
  \begin{equation}\label{eq:diff_deltaij}
    \delta_{i} =
      \frac{ \deltij(\deltij+2r_j) + (w_i^2-w_j^2) - (r_i^2-r_j^2) }{
             2(r_i+r_j+\deltij) } 
    > 0.
  \end{equation}
  Since $\deltij > 0$ can be arbitrarily small for fixed $r_i, r_j$ and
  $w_i, w_j$, the condition $\delta_i > 0$ implies 
  $w_i^2 - r_i^2 \geq w_j^2 - r_j^2$. By exchanging $i$ and $j$, 
  the second condition $\delta_j > 0$ enforces the equality and 
  the existence of a joint constant $C$ with
  \begin{equation*}
    w_k^2 = r_k^2 + C \quad \text{for} \quad k = i,j.
  \end{equation*}
  Since the definition \ref{def:Vi} of a Voronoi cell 
  is independent of such an additive constant in equation (\ref{eq:diffrule}), 
  we can set $C=0$ and obtain the result of the lemma. 
\end{proof}

Thus, while a Voronoi tessellation can be formally defined 
using arbitrary subtractive weights, the constraint of 
non-overlapping cell bodies leads to the unique 
choice of weights 
$w_i = r_i$. 
Furthermore, these weights imply a simple characterization 
of the cell bodies $\mcBri = \{ \vx: \; \mcPi (\vx) < 0 \}$,
so that inequality (\ref{eq:nuclei_non_overlap}) is 
fulfilled under the assumption 
$\overline{\mcBri} \cap \overline{\mcBrj} = \emptyset$.
An illustration of a two-dimensional Voronoi tessellation with such weights
is shown in figure \ref{fig:tess}(a). 
The geometric interpretation of this choice of weights gives rise to 
the ``empty orthosphere criterion'' for a regular triangulation 
in \cite{Schaller2005,SchallerThesis,Beyer2007}, since 
the squared radius of the ``orthosphere'' equals the $\mcP$-distance 
of three or more neighboring generators from 
their common Voronoi border junction,  
consisting of red lines in figure \ref{fig:tess}(a), 
compare the analogous figure 1 \mbox{in} \cite{Beyer2007}.  

From equation (\ref{eq:diff_deltaij}) and condition 
(\ref{eq:diff_wi_eq_ri}) we obtain the dependence 
\begin{equation*}
  \delta_i =
  \frac{ \deltij ( \deltij + 2 r_{j} ) }{
         2 ( r_i + r_j + \deltij ) } \quad
  \Rightarrow \quad
  \frac{\delta_i}{\delta_j} = \frac{ \deltij + 2 r_j }{ \deltij + 2 r_i }. 
\end{equation*}
For one, if $r_i > r_j$ (as in figure \ref{fig:diffrule}), 
then $\delta_i < \delta_j$. Furthermore, for fixed $\deltij > 0$ 
and $r_j$, $\delta_i$ is monotonically decreasing in $r_i$.
This means that for growing 
cell body radius $r_i > r_j$, the distance $\delta_i$ 
between cell body and contact border $\Gamma_{ij}$ (attained 
at $\vSij$) would shrink, thus also the size of the protoplasmic 
region $\mcVi \setminus \mcBri$. 
However, such a behavior is 
contradictory to empirical observations: 
If two cells $i$ and $j$ touch each other, 
then the cell with a larger cell body should also have 
a wider cytoplasmic region along the contact border, 
see \cite{Tinkle2008} (figure 5E) and figure \ref{fig:stratum_spinosum}.
Therefore, the difference method is not appropriate and 
an alternative method is required.

\begin{figure}[h!tb]
  \centering
  \includegraphics[width=0.8\textwidth]{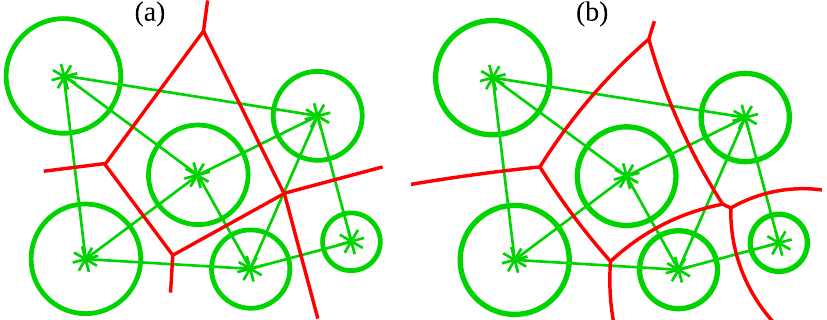}
  \caption{Comparison of two generalized Voronoi tessellations
           from the same set of generators.
           While the difference method (a) yields polygonal cells 
           the quotient method (b) yields cells with piecewise 
           spherical boundaries.
           In both cases, the weights are given by $w_i = r_i$.
           The cells are described by their cell center (green star) 
           and their body (thick green circles). The Voronoi 
           borders between cells are red lines, 
           while the neighbor relations 
           (\ie those cells that share a common border) 
           are indicated by thin green lines connecting their centers.
          }\label{fig:tess}
\end{figure}

\subsection{Quotient method}

In the previous section it was found that, with subtractive
weights in the $\mcP$-distance of the generalized Voronoi 
tessellation, the emerging cell contact border are planar 
surfaces. In contrast, if one divides the Euclidean distance by 
weights, then the cell contacts are spherical  
with the generalized square distance function defined as 
\begin{equation}\label{eq:ratrule}
  \mcPi ( \vx ) = \frac{ ( \vx - \vxi )^2 }{ w_i^2 }.
\end{equation}
Having its roots in computational geometry 
(see \cite{Aurenhammer1983} and references therein), 
this method was introduced as a model for attraction 
domains of restaurants \cite{Ash1986} more than 20 years ago. 
To our knowledge, it so far has not 
been used for physical or biological applications.

For simplicity of calculation, let the midpoint $\vxij := 0$ be 
the origin of the coordinate system, while 
$\udij$ remains the oriented cell-cell axis
(see figure \ref{fig:ratrule}). 
\begin{figure}[htb]
  \centering
  \includegraphics[width=0.6\textwidth]{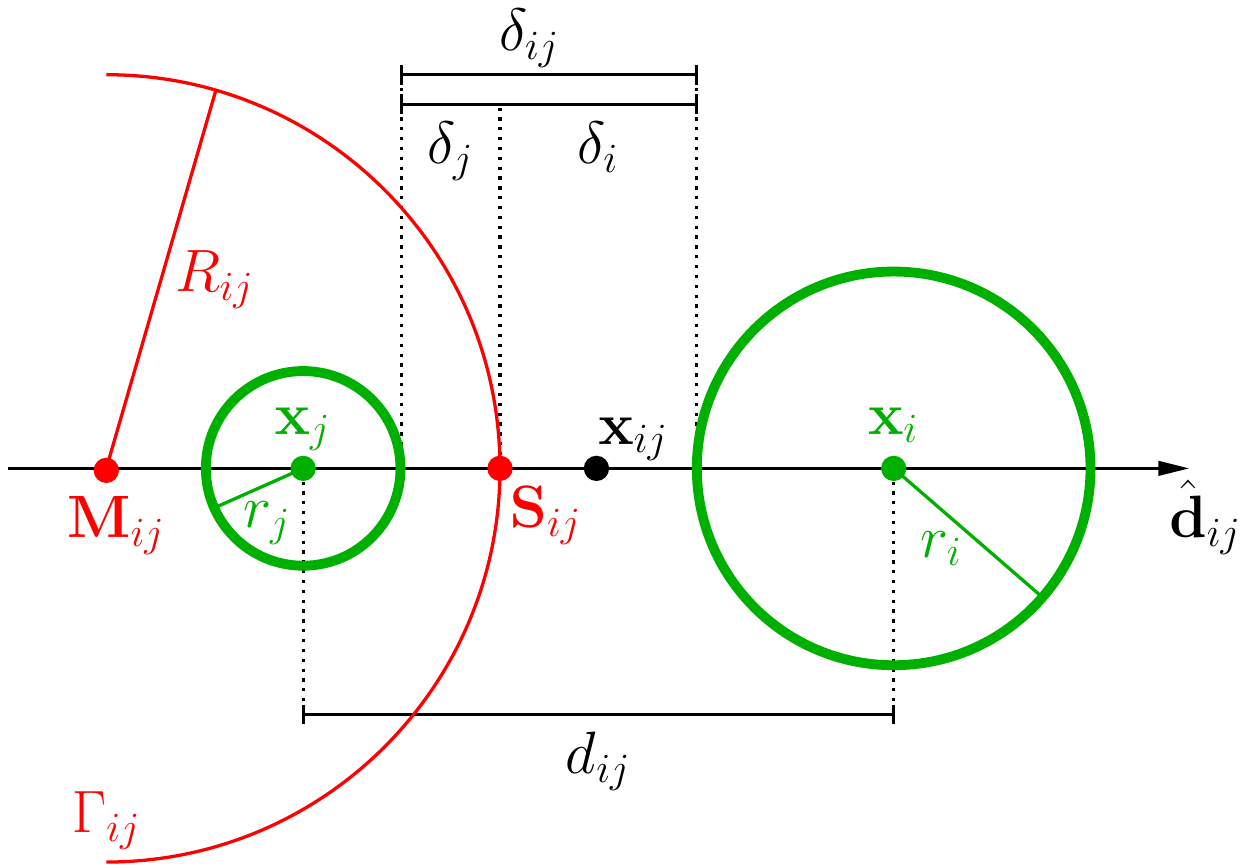}
    \caption{Geometry of cell pair and its pair contact 
             border using the quotient method. 
             In contrast to the difference method, this contact 
             border $\Gamma_{ij}$ is a sphere around 
             $\vMij$ with radius $\Rij$. Its two-dimensional 
             section is drawn in a red line.
            }\label{fig:ratrule}
\end{figure}
Starting from $\mcPi(\vx) = \mcPj(\vx)$ with $w_i \neq w_j$ 
one arrives at the equivalent condition
\begin{equation}\label{eq:ratrule_boundary}
  \left( \vx - \vMij \right)^2 = \Rij^2
\end{equation}
for the point $\vx$ to be on the border $\Gamma_{ij}$. 
Clearly, equation (\ref{eq:ratrule_boundary}) 
describes an $n$-sphere around
\begin{equation}\label{eq:ratrule_MR} 
  \vMij = - \frac{ w_i^2 + w_j^2 }{ w_i^2 - w_j^2 } \vxi \qquad
  \text{with radius} \qquad
  \Rij = \frac{ w_i w_j }{ (w_i^2 - w_j^2) } \dij,
\end{equation}
where $\dij = | \vxi - \vxj | = 2 |\vxi|$, 
resulting in a so called \emph{circular Voronoi tessellation}. 
Assuming $r_i > r_j$ (as in figure \ref{fig:ratrule}), also the 
weights fulfill $w_i > w_j$ according to our monotonicity assumption 
on $w_i = w(r_i)$. Thus, from equation (\ref{eq:ratrule_MR}) the 
center $\vMij$ of the hypersphere $\Gamma_{ij}$ is always situated 
on the side of the cell $j$ with the smaller radius $r_j$.
The contact sphere intersects 
the cell center connection segment $\overline{ \vxj, \vxi }$
at a unique contact point determined by 
\begin{equation}\label{eq:ratrule_Sij}
  \vSij \cdot \udij = - \frac{ w_i - w_j }{ w_i + w_j } \cdot \frac{\dij}{2}.
\end{equation}
Similar as for the difference method (see figure \ref{fig:diffrule}), 
$\vSij$ is situated on the side 
of the smaller cell body from the mid point $\vxij = 0$. 
Thus, the contact sphere contains the body of the cell with 
smaller weight, as indicated in figure \ref{fig:ratrule}. 
Once both weights are equal, equation (\ref{eq:pi_eq_pj}) 
and thereby equation (\ref{eq:ratrule_Sij}) simplify to 
\begin{equation}\label{eq:ratrule_wi_eq_wj}
  \vx \cdot \udij = \vSij \cdot \udij = 0 \qquad \text{for} \; \, w_i = w_j,
\end{equation}
revealing $\Gamma_{ij}$ as the classical Voronoi bisector 
line without weights.

In analogy to Lemma \ref{lem:diff_w_eq_r} for the difference method, 
we obtain the same unique specification of weight functions here as well:
\begin{lemma}\label{lem:rat_w_eq_r}
  Let $\{ \mcVi, \Gamma_{ij} \}$ be a Voronoi tessellation 
  of non-overlapping generators $\{ g_i = (\mcBri, w_i) \}$ 
  constructed from $\mcPi$ according to the quotient method 
  in definition (\ref{eq:ratrule}) 
  with positive weights $w_i$.
  Then the inequalities (\ref{eq:drCond0}) are satisfied 
  for all cell pairs $i \neq j$ 
  with arbitrarily small but positive 
  cell body distance $\deltij = \delta_i + \delta_j$ 
  if and only if
  \begin{equation}\label{eq:rat_wi_eq_ri}
    \forall i: \quad w_i = r_i.
  \end{equation}
\end{lemma}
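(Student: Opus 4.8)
The plan is to mirror the proof of Lemma \ref{lem:diff_w_eq_r}, replacing the planar-border algebra by the spherical-border algebra of the quotient method. First I would keep the coordinate choice $\vxij = 0$ used in the text, so that $\vxi = \tfrac{\dij}{2}\udij$ and $\vxj = -\tfrac{\dij}{2}\udij$, and read off the axial location of the contact point from equation (\ref{eq:ratrule_Sij}), namely $\vSij\cdot\udij = -\tfrac{w_i - w_j}{w_i + w_j}\cdot\tfrac{\dij}{2}$. The cell-body surfaces of $i$ and $j$ meet the axis at axial coordinates $\tfrac{\dij}{2} - r_i$ and $-\tfrac{\dij}{2} + r_j$ respectively, so the two gaps in figure \ref{fig:ratrule} are
\begin{equation*}
  \delta_i = \frac{w_i}{w_i + w_j}\,\dij - r_i, \qquad
  \delta_j = \frac{w_j}{w_i + w_j}\,\dij - r_j .
\end{equation*}
Adding these confirms $\delta_i + \delta_j = \dij - r_i - r_j = \deltij$, consistent with the definition of the cell-body distance, so that substituting $\dij = r_i + r_j + \deltij$ yields the clean representation
\begin{equation*}
  \delta_i = \frac{(w_i r_j - w_j r_i) + w_i\,\deltij}{w_i + w_j},
\end{equation*}
together with the analogous expression for $\delta_j$ obtained by interchanging $i$ and $j$. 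This is the quotient-method counterpart of equation (\ref{eq:diff_deltaij}).

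With this formula in hand the argument is the same two-sided limiting argument as before. For the ``only if'' direction, holding $r_i,r_j,w_i,w_j$ fixed and letting $\deltij \to 0^+$, continuity and the requirement $\delta_i > 0$ force $w_i r_j - w_j r_i \geq 0$ in the limit; applying the same reasoning to $\delta_j > 0$ forces $w_j r_i - w_i r_j \geq 0$. Together these give $w_i r_j = w_j r_i$, \ie $w_i / r_i = w_j / r_j$ for every pair, so that $w_i = c\,r_i$ with a single constant $c > 0$. Since $\mcPi$ in equation (\ref{eq:ratrule}) carries the factor $1/c^2$ uniformly across all generators, the defining inequalities $\mcPi < \mcPj$ of Definition \ref{def:Vi} are unaffected by this constant, so I may normalize $c = 1$ and obtain $w_i = r_i$.

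For the converse (``if'') direction I would simply substitute $w_i = r_i$ and $w_j = r_j$ into the representation above, whereupon the term $w_i r_j - w_j r_i$ vanishes and $\delta_i = \dfrac{r_i\,\deltij}{r_i + r_j} > 0$, $\delta_j = \dfrac{r_j\,\deltij}{r_i + r_j} > 0$ for every positive $\deltij$, establishing (\ref{eq:drCond0}).

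The only genuine obstacle is the geometric bookkeeping in the first step: one must verify that $\delta_i,\delta_j$ are the axial gaps measured from the two body surfaces to the single on-axis contact point $\vSij$, and that the spherical border (\ref{eq:ratrule_boundary}) indeed meets the cell-center axis exactly at $\vSij$, so that this one-dimensional reduction faithfully encodes condition (\ref{eq:drCond0}). Once the sign conventions along $\udij$ are pinned down, the remaining algebra and the limiting argument are routine and structurally identical to Lemma \ref{lem:diff_w_eq_r}.
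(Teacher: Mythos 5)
Your proof is correct and follows essentially the same route as the paper: you arrive at the identical key representation $\delta_i = \bigl( w_i \deltij + (w_i r_j - w_j r_i) \bigr) / (w_i + w_j)$ (the paper's equation (\ref{eq:ratrule_deltaij}), obtained there by evaluating $\mcPi = \mcPj$ at $\vSij$ rather than via the axial coordinate formula (\ref{eq:ratrule_Sij}), a purely cosmetic difference), and then apply the same two-sided limiting argument $\deltij \to 0^+$ followed by the same normalization of the multiplicative constant using the scale invariance of Definition \ref{def:Vi}. The only additions are that you spell out the ``if'' direction explicitly, which the paper leaves implicit in the same formula, and your signs are stated consistently, whereas the paper's intermediate inequality ``$r_i w_j \geq r_j w_i$'' appears to have $i$ and $j$ transposed.
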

\begin{proof}:
  The contact border equation $\mcPi = \mcPj$ evaluated 
  for the point $\vx = \vSij$, see figure \ref{fig:ratrule}, 
  yields $(r_i +  \delta_i)/w_i = (r_j + \delta_j)/w_j$. Together
  with $\delta_i + \delta_j = \deltij$ we obtain the representation 
  \begin{equation}\label{eq:ratrule_deltaij}
    \delta_{i} = \deltij \frac{w_{i}}{ w_i + w_j } + 
                 \frac{ r_j w_i - r_i w_j }{ w_i + w_j } > 0. 
  \end{equation}
  Since $\deltij > 0$ can be arbitrarily small for fixed $r_i, r_j$ 
  and $w_i, w_j$, the condition $\delta_i > 0$ implies 
  $r_i w_j \geq r_j w_i$. By exchanging $i$ and $j$, 
  the second condition $\delta_j > 0 $ enforces equality and 
  the existence of a joint positive constant with
  \begin{equation*}
    w_k = r_k \cdot C \quad \text{for} \quad k = i,j.
  \end{equation*}
  Since the definition \ref{def:Vi} of a Voronoi cell is 
  independent of such a multiplicative constant in equation (\ref{eq:ratrule}), 
  we can set $C=1$ and obtain the result of the lemma. 
\end{proof}
Thus, further on we can choose the weights $w_i = r_i$ when 
using the quotient method. Then the cell bodies are characterized 
as $\mcBri = \{ \vx: \, \mcPi (\vx) < 1 \}$, so that inequality 
(\ref{eq:nuclei_non_overlap}) is fulfilled under the assumption 
$\overline{\mcBri} \cap \overline{\mcBrj} = \emptyset$.
The emerging circular Voronoi tessellation 
is illustrated in figure \ref{fig:tess}(b).
Rewriting equation (\ref{eq:ratrule_deltaij}) for i and j yields
\begin{equation*}
  \delta_{i,j} = \deltij \frac{ r_{i,j} }{ r_i + r_j },
\end{equation*}
and thus the regular partition property
\begin{equation}\label{eq:ratrule_part}
  \frac{ \delta_j }{ \delta_i } = \frac{ r_j }{ r_i },
\end{equation}
meaning that the partition of the distance between cell bodies 
$\mcBri$ and $\mcBrj$ by the contact arc
is proportional to the ratio of body radii. 
Most importantly, in contrast to the difference method, 
the $\mcP$-distance in equation (\ref{eq:ratrule}) 
ensures that $\delta_i$ monotonically increases with $r_i$, while 
$\delta_j = \deltij - \delta_i$ decreases, if $r_j > 0$ and $\deltij > 0$ 
are held fixed. As a consequence, the cell $i$ grows with increasing 
$r_i$. This property can also be observed for in-vivo cell monolayers, 
\begin{figure}[bthp]
  \centering
  \includegraphics[width=0.9\textwidth]{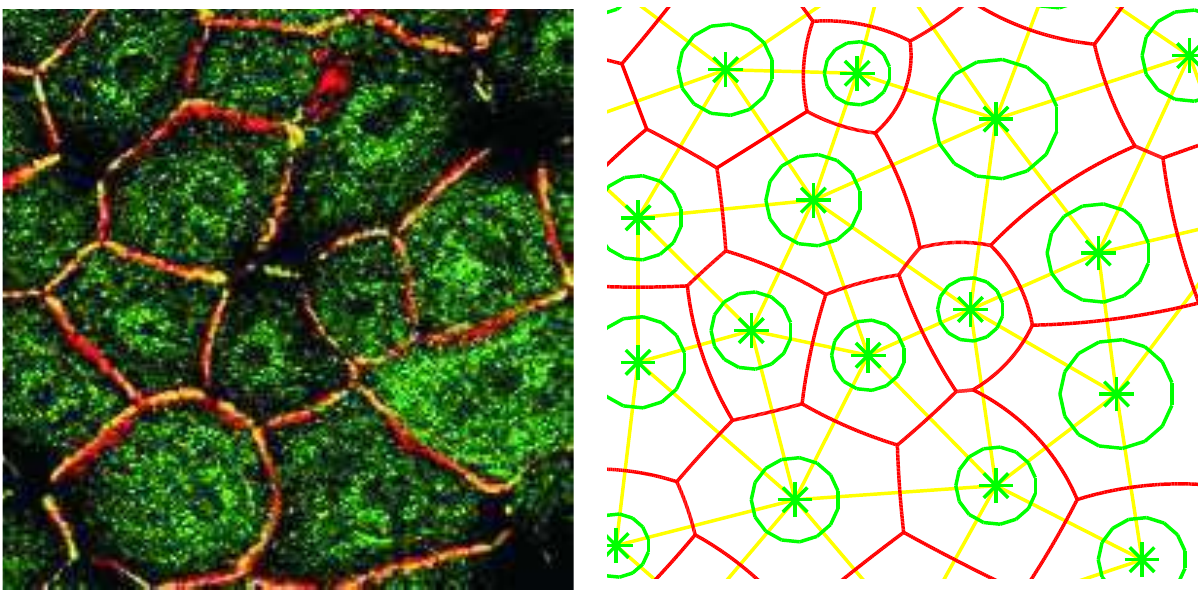}
  \caption{Typical microscopic picture of an epithelial monolayer (left)
           cultured of human keratinocytes, 
           cell nuclei (green, black) and cell-cell contacts 
           (yellow, red) are visualized by suitable staining
           (reproduced from \cite{Marie2003}).
           In the simulated cell tissue (right) the sizes and 
           positions of cell bodies (green) have been roughly 
           adapted to deliver the observed contact arcs (red).
          }\label{fig:cell_model}
\end{figure}
see figures \ref{fig:stratum_spinosum}, \ref{fig:cell_model} and 
figure 5E \mbox{in} \cite{Tinkle2008}.
Thus, for the further discussions in this paper, 
the quotient method will be used exclusively to define 
Voronoi cells. Yet the tessellation is still unbounded due to 
infiniteness of marginal cells.

\subsection{Closure of the Voronoi tessellation}

In order to avoid infinitely extended cells at the tissue margin, 
the initial definition \ref{def:Vi} of a Voronoi cell has to be modified.
To this end, the method of finite closure for the difference 
method, as used by Drasdo and coworkers \cite{Drasdo1995,Galle2005},
is extended to be generally applicable.
\begin{definition}\label{def:closed_Vi}
  Let $\mcPm \in \mathbb{R}^+$.
  The finite Voronoi cell of a generator  $( \mcBri, w_i > 0 )$ 
  is defined as
  \begin{equation}\label{eq:cVNG}
    \mathcal{V}_i = \left\{
        \mathbf{x} \in \mathbb{R}^n :
          \mcPi (\vx) < \min{\left( \mcPj (\vx), \mcPm \right)} 
        \quad \forall j \neq i
      \right\}.
  \end{equation}
  The exterior boundary closing a marginal $\mathcal{V}_i$ is
  \begin{equation}\label{eq:closed_Gammaij}
    \Gamma_{i0} = \left\{ \vx \in \mathbb{R}^n : \mcPi (\vx) = \mcPm \right\}
        \setminus \bigcup_{j \neq i} \mathcal{V}_j,
  \end{equation}
  and the total boundary of the Voronoi neighborhood around $\vxi$ is
  \begin{equation*}
    \partial \mathcal{V}_i = \Gamma_{i0} \cup \bigcup_{j \neq i} \Gamma_{ij}, 
  \end{equation*}
  where now the contact border between cell $i$ and $j$ is given by
  \begin{equation}
    \Gamma_{ij} = \Bigl\{ \vx \in \mathbb{R}^n: \; 
        \mcPi (\vx) = \mcPj (\vx) \leq 
        \min \bigl( \mathcal{P}_k (\vx), \mcPm \bigr) \quad 
        \forall k \neq i,j \Bigr\}.
  \end{equation}
\end{definition}
For any choice of $\mcPm > 0$, the Voronoi tessellation generated 
from a finite set of cell bodies $\{ ( \mcBri, w_i ) \}$ 
comprises a bounded region of the whole space, 
representing a cell tissue of finite extension.
In figure \ref{fig:vor_closure} we depict the model representation of an 
\begin{figure}[htbp]
  \centering
  \includegraphics[width=0.6\textwidth]{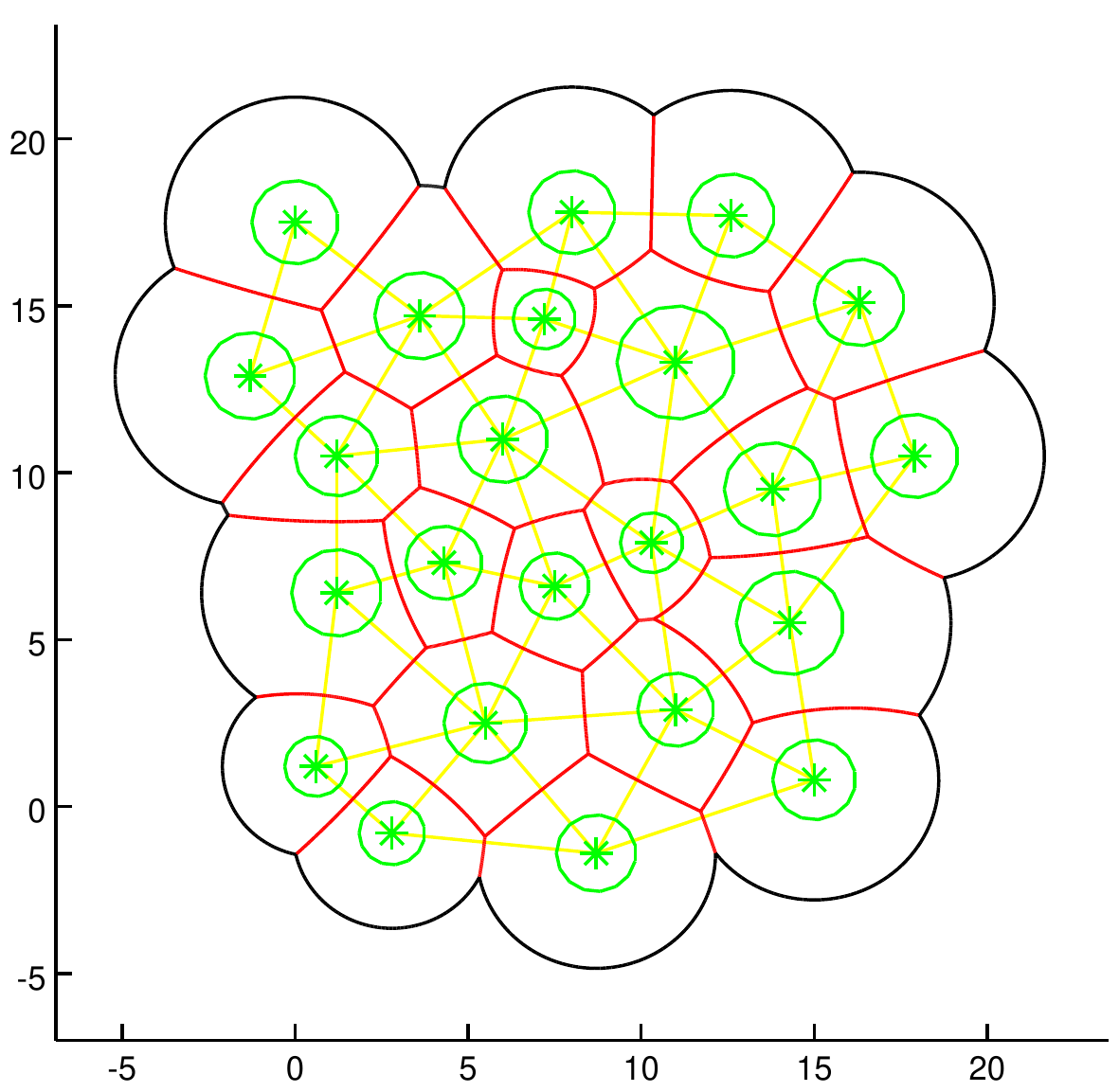}
    \caption{Model representation of a bounded cell monolayer using the 
             quotient method with $\mcPm$-cutoff. Axis 
             tics are in units of $\mu \mathrm m$.
            }\label{fig:vor_closure}
\end{figure}
\emph{in vivo} cell monolayer (see figure \ref{fig:cell_model}) 
generated by the quotient method 
using definition (\ref{eq:ratrule}) with $w_i = r_i$. 
The exterior boundaries $\Gamma_{i0}$ of the Voronoi neighborhoods 
for marginal cells are circular arcs drawn as black lines.
Apparently the size of a cell $i$
is influenced by the two parameters $r_i$ and $\mcPm$.
While $\mcPm$ regulates the overall cell size 
by globally scaling each $r_i$, the ratios $r_i / r_j$
determine the partition of space between each cell pair $i,j$ 
by specifying the actual position of $\Gamma_{ij}$. 
We remark that both $\mcPm$ and $\{ r_i \}$ are accessible to experimental 
determination using image analysis tools, see figure \ref{fig:cell_model}
and especially figure \ref{fig:stratum_spinosum}(c).

The necessary condition $\mcPi (\vx) < \mcPm$ for a 
point $\vx$ to be within cell $i$ defines a ball $\mcBRif$ 
around $\vxi$, which will be called free ball further on.
It has the squared radius
$\Rif^2 = w_i^2 + \mcPm$ in the difference method and 
$\Rif^2 = w_i^2 \cdot \mcPm$ in the quotient method. 
As illustrated in figure \ref{fig:pmax}, 
\begin{figure}
  \centering
  \includegraphics[width=0.8\textwidth]{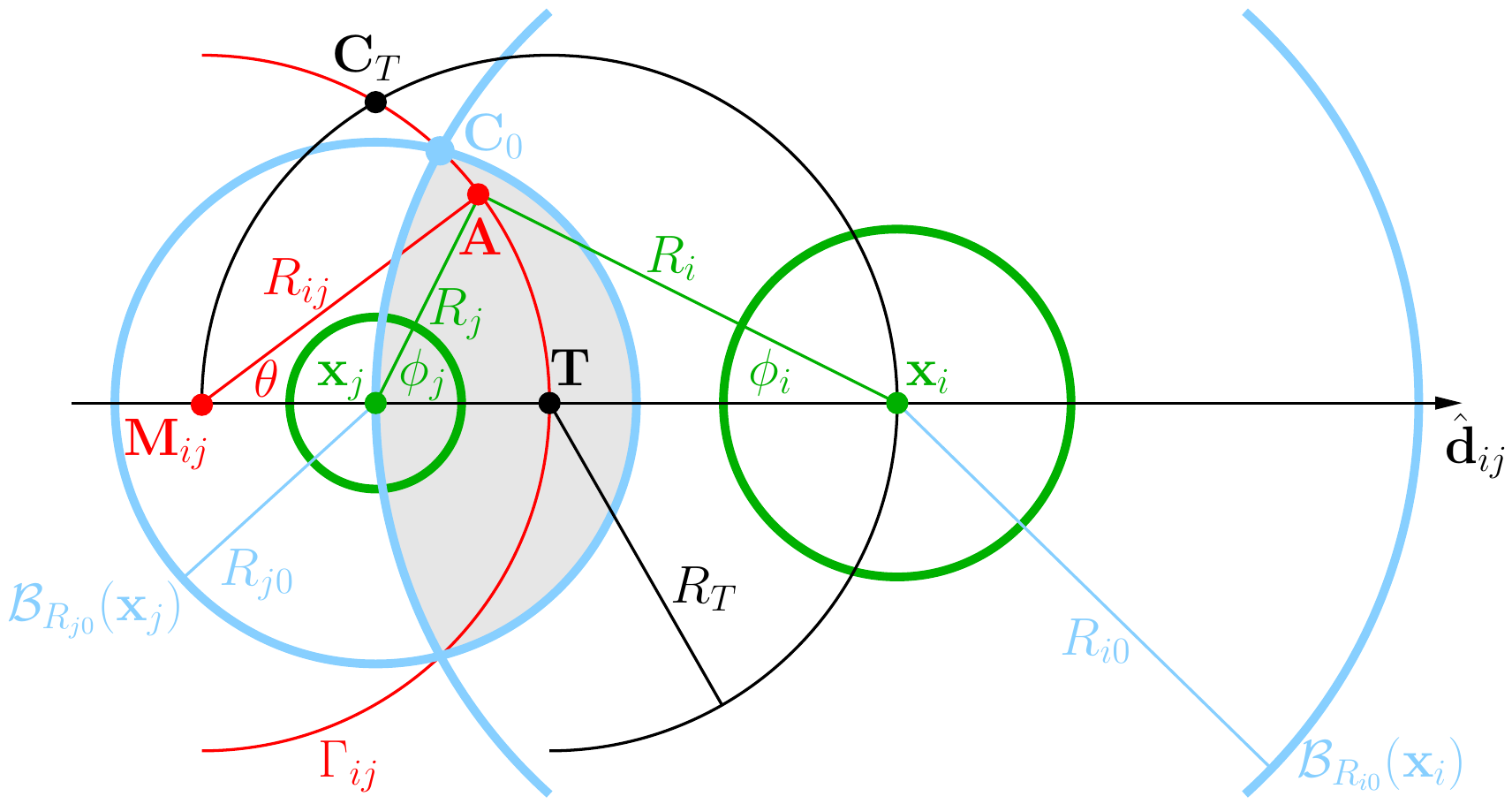}
  \caption{Maximal $\mcP$-distance $\mcPm$ and its effect on the 
           neighbor relation of two cells $i,j$. The cells $i,j$ can only 
           share a common $\Gamma_{ij}$ and thus be neighbors, if 
           there is a non-empty overlap region (shaded) given by the 
           intersection of their free balls $\mcBRif, \mcBRjf$ 
           (blue outer circles). The other quantities are explained 
           in the text.
          }\label{fig:pmax}
\end{figure}
this has the effect that two cells $i,j$ can only be neighbors, 
if their free balls $\mcBRif, \mcBRjf$ overlap.
In case of no contact, $\mcVi = \mcBRif$ represents an 
isolated 
spherical cell, which 
clearly has to include its cell body $\mcBri$. Thus, the condition 
\begin{equation}\label{eq:Rif}
  r_i^2 < \Rif^2 = \left\{ \begin{aligned}
      & r_i^2 + \mcPm \; & \text{(difference method)} \\
      & r_i^2 \cdot \mcPm \; & \text{(quotient method)}
  \end{aligned} \right.
\end{equation}
is imposed, meaning that $\mcPm$ has to be chosen so that 
\begin{equation}\label{eq:pmax_min}
  \begin{aligned}
    & \mcPm > 0 \quad \text{(difference method)}& \\
    & \mcPm > 1 \quad   \text{(quotient method)}.&
  \end{aligned}
\end{equation}
In this way, the larger $\mcPm$, the larger is the radius of isolated 
cells relative to their cell body. 

Within the difference method, this $\mcPm$-closure is 
straight-forward because the planar cell contacts lead to 
starlike (even convex) Voronoi cells $\mcVi$.
Recall the definition of starlikeness with respect to the center $\vxi$: 
$\forall \vx \in \mcVi$ also $\overline{\vxi \vx} \subset \mcVi$. 
In fact, a $\mcPm$-closed generalized Voronoi tessellation 
has been applied to epithelial tissue modeling  
by prescribing $\mcBRif$ for each cell \cite{Galle2005}. 
However, within the quotient method the situation is more complicated. 
In analogy to the difference method it is reasonable to require 
that the Voronoi cells $\mcVi$ are star-like domains with respect to $\vxi$.
In order to ensure starlike cells within the quotient method, 
$\mcPm$ must not be chosen too large. 
Consider the cell pair as sketched in figure \ref{fig:pmax}.
The straight line connecting $\vxi$ and $\vCT$ is a tangent to 
$\Gamma_{ij}$. Thus it is clear from the geometry 
that both $\mcVi$ and $\mcVj$ are star-like domains with 
respect to $\vxi, \vxj$, if their corresponding 
free balls $\mcBRif, \mcBRjf$ do not
extend beyond the point $\vCT$. 
Before we proceed, we introduce the \emph{cell size homogeneity} quotient
\begin{equation}
  \qqq = \min_{i,j} \frac{ r_i + r_j }{| r_i - r_j |} =
         \frac{ r_\text{max} + r_\text{min} }{ r_\text{max} - r_\text{min} },
\end{equation}
where the last equality follows from monotonicity arguments.
Therefore, $Q = Q( \{ r_i : i = 1 \dots N \} )$ is a measure 
of the uniformity of cell sizes within a tissue, with $Q = \infty$ 
for equal $r_i$  and $Q \approx 1$ for $r_\text{max} >> r_\text{min}$.

\begin{proposition}[Starlike cells]\label{prop:starlike}
  For a finite Voronoi tessellation generated from 
  non-overlapping $\{ \mcBri \}$ by 
  using the quotient method in definition (\ref{eq:ratrule}) 
  with weights $w_i = r_i$, 
  the resulting Voronoi cells $\mcVi$ are starlike with respect to $\vxi$, 
  if the maximal $\mcP$-distance $\mcPm$ fulfills the homogeneity constraint 
  \begin{equation}\label{eq:pmax_max1}
    1 < \mcPm \leq \qqq.
  \end{equation}
\end{proposition}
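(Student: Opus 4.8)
The plan is to use the fact that $\mcVi$ is an intersection of elementary regions and that starlikeness with respect to a \emph{common} centre is preserved under intersection, so that one only has to tame the single non-convex constraint. By definition (\ref{eq:cVNG}) one has $\mcVi = \mcBRif \cap \bigcap_{j\neq i}\{\vx:\mcPi(\vx)<\mcPj(\vx)\}$, where by (\ref{eq:ratrule_boundary})--(\ref{eq:ratrule_MR}) each set $\{\mcPi<\mcPj\}$ is bounded by the Apollonius sphere $\Gamma_{ij}$ of centre $\vMij$ and radius $\Rij$. First I would classify these constraints by radius. Since $\mcPi(\vxi)=0<\mcPj(\vxi)$ for every $j\neq i$, the centre $\vxi$ lies in all of them and in $\mcBRif$, so $\vxi\in\mcVi$; moreover for a neighbour with $r_j>r_i$ the admissible region $\{\mcPi<\mcPj\}$ is the \emph{interior} of $\Gamma_{ij}$, whereas for $r_j<r_i$ it is the \emph{exterior}. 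The free ball $\mcBRif$ and every interior-type constraint are convex and contain $\vxi$, hence starlike with respect to $\vxi$; the only possible obstruction to starlikeness of $\mcVi$ is therefore an exterior-type constraint coming from a \emph{smaller} neighbour $j$ (the case $r_i>r_j$).

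Next I would test starlikeness ray by ray: for a unit direction $\mathbf u$ the set $\{t\ge 0:\vxi+t\mathbf u\in\mcVi\}$ must be an interval $[0,\tau)$. The free ball contributes $[0,\Rif)$ and each convex constraint contributes an interval starting at $0$, so the intersection of all these is again an interval; only an exterior constraint can break it, by removing a gap $(t_1,t_2)$ where the ray crosses the ball bounded by $\Gamma_{ij}$. The key geometric observation, matching figure \ref{fig:pmax}, is that among all rays genuinely crossing $\Gamma_{ij}$ the exit parameter $t_2$ is minimised in the tangent limit, where it equals the tangent length $\ell_{ij}=\sqrt{|\vxi-\vMij|^2-\Rij^2}$ from $\vxi$ to $\Gamma_{ij}$ (the distance $|\vxi-\vCT|$ to the tangent point $\vCT$ of the figure); for any actual gap one has $t_2>\ell_{ij}$. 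Consequently, if the free ball does not reach past the tangent point, i.e.\ $\Rif\le \ell_{ij}$, then every gap satisfies $t_2>\ell_{ij}\ge\Rif$, so intersecting with $[0,\Rif)$ deletes the gap and $\mcVi\cap\text{ray}$ remains an interval. I expect the main obstacle to be exactly this step: verifying $\inf_{\mathbf u}t_2=\ell_{ij}$ by parametrising entry/exit roots $t=|\vxi-\vMij|\cos\theta\pm\sqrt{\Rij^2-|\vxi-\vMij|^2\sin^2\theta}$ in the angle $\theta$ to the centre direction, and checking that the degenerate tangent case is harmless because $\mcVi$ is open.

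It then remains to convert $\Rif\le\ell_{ij}$ into the stated bound on $\mcPm$. Using $w_i=r_i$ and (\ref{eq:ratrule_MR}) I would evaluate the relevant quantities on the cell--cell axis, obtaining $\Rij^2=\dij^2 r_i^2 r_j^2/(r_i^2-r_j^2)^2$ and $|\vxi-\vMij|^2=\dij^2 r_i^4/(r_i^2-r_j^2)^2$, whence $\ell_{ij}^2=\dij^2 r_i^2/(r_i^2-r_j^2)$; together with $\Rif^2=r_i^2\mcPm$ from (\ref{eq:Rif}) the condition $\Rif^2\le\ell_{ij}^2$ reduces to $\mcPm\le \dij^2/(r_i^2-r_j^2)$. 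Finally, since the cell bodies do not overlap we have $\dij\ge r_i+r_j$, so for every ordered pair with $r_i>r_j$
\begin{equation*}
  \frac{\dij^2}{r_i^2-r_j^2}\;\ge\;\frac{(r_i+r_j)^2}{(r_i-r_j)(r_i+r_j)}\;=\;\frac{r_i+r_j}{r_i-r_j}\;\ge\;\qqq .
\end{equation*}
Hence the single, configuration-independent inequality $\mcPm\le\qqq$ of (\ref{eq:pmax_max1}) guarantees $\Rif\le\ell_{ij}$ for all neighbours at once, while the accompanying lower bound $\mcPm>1$ is merely the earlier non-degeneracy requirement (\ref{eq:pmax_min}); starlikeness of every $\mcVi$ with respect to $\vxi$ follows.
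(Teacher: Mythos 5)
Your proposal is correct and follows essentially the same route as the paper: both reduce starlikeness to the requirement that the free ball $\mcBRif$ not extend past the tangent point $\vCT$ of the Apollonius sphere bounding $\Gamma_{ij}$ (your tangent length $\ell_{ij}$ is exactly the paper's $\RiT = r_i \dij / \sqrt{|r_i^2 - r_j^2|}$, obtained there via similar triangles), and both then conclude from $\Rif^2 = \mcPm r_i^2$ and the non-overlap inequality $\dij \geq r_i + r_j$ that $\mcPm \leq \qqq$ suffices. The difference is one of rigor rather than route: your constraint decomposition and ray-by-ray interval argument, with monotonicity of the exit root $t_2$, proves the tangency criterion that the paper simply declares ``clear from the geometry'' of figure \ref{fig:pmax}, and it handles the presence of multiple neighbors explicitly.
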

This condition on the tissue properties will be crucial later on 
and guarantees that each actin fiber bundle emanating 
radially from $\partial \mcBri$ intersects the boundary $\partial \mcVi$ 
only once, see figure \ref{fig:Density}.
\begin{proof}:
  From fundamental trigonometric relations follows 
  the angle $\angle ( \vT, \vxj, \vCT )$, namely
  \begin{equation}\label{eq:phis_max}
    \phi_j^T = \frac{ \pi }{ 2 }, 
  \end{equation}
  and geometric similarity of the triangles 
  $\triangle( \vMij, \vxi, \vCT ), \triangle( \vxi, \vCT, \vxj )$.
  Thus, with $r_j < r_i$, we have for point $\vA = \vCT$
  \begin{gather}
    \cos \thm = \frac{ r_j }{ r_i } 
                                                 \label{eq:thetamax} \\
    \RjT = \frac{ r_j }{ \sqrt{| r_i^2 - r_j^2 |} } \cdot \dij, \qquad 
    \RiT = \frac{ r_i }{ \sqrt{| r_i^2 - r_j^2 |} } \cdot \dij.
                                                 \label{eq:RiRjmax}
  \end{gather}
  With the last two equations, the maximal distances of a 
  point $\vA$ on $\Gamma_{ij}$ from the cell centers 
  have been identified for each cell pair. 
  Starlikeness of $\mcVi$ is equivalent to the condition 
  $\Rif^2 \leq \RiT^2$, where $\Rif^2 = \mcPm r_i^2$, so 
  that $\mcPm \leq \dij^2 / | r_i^2 - r_j^2 |$, 
  which can be fulfilled by requiring $\mcPm \leq \qqq$,
  since $\forall i,j: \; (r_i + r_j)^2 \leq \dij^2$. 
  With the condition (\ref{eq:pmax_min}) the assertion 
  follows. 
\end{proof}
In particular, starlikeness prohibits engulfment of one cell by the other, 
so that $\mcBRif$ may not contain $\mcBRjf$ completely for $r_i > r_j$. 
Note that within sufficiently large tissues, the smallest and 
biggest cell will usually not be in contact, which 
relaxes inequality (\ref{eq:pmax_max1}) into the condition:
\begin{equation}\label{eq:pmax_max2}
  1 < \mcPm \leq 
    \min_{ \text{\rm neighbors} \; i,j } \frac{ r_i + r_j }{ | r_i - r_j | } 
  =: \qnb.
\end{equation}

In such a circular Voronoi diagram there may be $\mathcal O(N^2)$ 
cell-cell contacts and vertices, 
in particular for low cell size homogeneity quotients $Q$.
The supplementary material contains the extensively commented 
\texttt{GNU octave} routine 
\href{http://www.theobio.uni-bonn.de/people/mab/sup01/index.html}{\texttt{mwvoro.m}}
(Matlab compatible) used to create and visualize a 
circular closed Voronoi tessellation, together with 
configuration and plotting facilities. 
The partition of space into distinct cells and their 
neighborhood relations has an algorithmic complexity of 
$\mathcal O(N^2)$, $N$ being the number of cells. 
Due to the extension of closing marginal cells by $\mcPm$-arcs, 
we do not follow the method proposed by Aurenhammer and 
Edelsbrunner \cite{Aurenhammer1983}. 
In particular, we do not use polyedral ``cell complexes'' 
in an inverted three-dimensional embedding of the Voronoi generators.
Nevertheless we retain the same optimal algorithmic complexity.

\section{Cell shape and dynamics}\label{sec:forces}

When a single cell is placed on a two-dimensional and 
adhesive substratum, it usually spreads into all directions 
attaining a circular shape like a fried egg,
as can be observed in figure \ref{fig:kcyt}.
\begin{figure}
  \centering
  \includegraphics[width=0.6\textwidth]{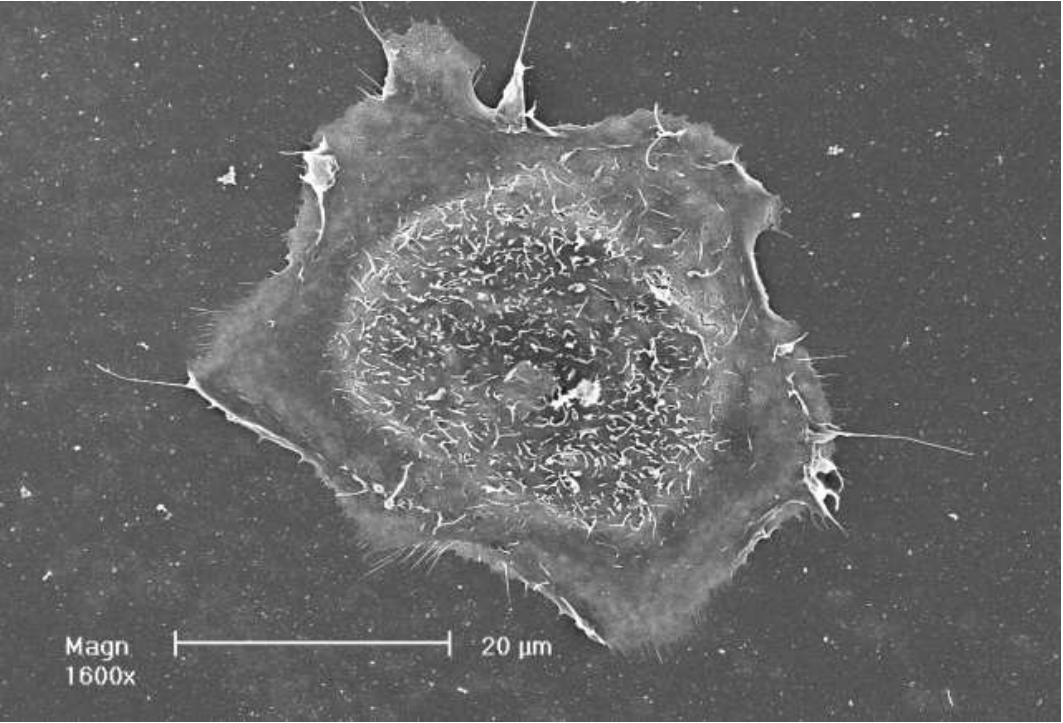}
    \caption{Electron microscopic picture of a keratinocyte
             (courtesy of Gregor Wenzel).
             There are two distinct regions within the cell. An inner, 
             more dense region with ruffled membrane structures appearing 
             in white (cell body), and an outer, flat region with 
             larger ruffles or filopodia near the cell margin (lamella),
             which in the case of no contact with other cells forms a
             ring of more or less constant diameter around the cell body. 
             See also figure \ref{fig:stratum_spinosum}.
    }\label{fig:kcyt}
\end{figure}
Thereby, the exterior visco-elastic lamella 
along the free boundary $\Gamma_{i0}$, which consists of parts of $\mcBRif$, 
supports the protruding and retracting cell edge.
This smooth flat region contains a network of dynamic actin filaments 
situated around the inner, 
almost solid cell body $\mcBri$, see \eg \cite{Alberts,Sivaramakrishnan2008}. 
The maximal spreading radius $\Rif = \sqrt{ \mcPm } \cdot r_i$ 
is determined by the strength of adhesion to the substrate 
and the equilibrium between protrusive activity at the cell 
periphery $\Gamma_{i0}$ and the contractile retrograde actin flow
\cite{KuuselaAlt2009}.
Assuming that for given adhesiveness 
the averaged local cytoskeletal network 
volume fraction ($\theta$ in \cite{KuuselaAlt2009}) 
in the lamellae has a certain value $q_0$ independent of cell size, 
the weighting of $\Rif$ 
proportional to the cell body radius follows.

Usually, the irregular activity of living cells at their 
lamella edge leads to a curled cell boundary, see figures 
\ref{fig:stratum_spinosum}(c) and \ref{fig:kcyt}.
Neglecting fluctuations on short time $\mathcal{O}(10s)$ and 
length scales $\mathcal{O}( 1\mu \mathrm{m} )$, 
the free portions of the cell edge $\Gamma_{i0}$
are approximately taken as circular in this model.
The actual cell edge fluctuates within the vicinity of the smooth arcs, 
representing the averaged position of the plasma membrane, 
and will later on be taken as the source of 
stochastic perturbation forces, see section \ref{sec:floc}.
Thus, in our model the active lamella 
region of a single free cell is approximately ring shaped and 
has a width of 
\begin{equation}\label{eq:lamwid}
  \deltif = ( \sqrt{\mcPm} - 1 ) r_i.
\end{equation}

Once two epithelial cells $i,j$ come close enough to interact, 
the two adjoining lamellae compete for the occupation 
of the region in between them.  
Eventually they form a contact border, which exhibits microscopic 
fluctuations due to local plasma membrane flickering. 
Yet it approximately attains the shape of a circular arc,  
whereby small gaps between the two cell membranes are neglected.
This experimental fact (see \eg corresponding figures in
\cite{Young2000,Marie2003,Tinkle2008} and
\ref{fig:cell_model}, \ref{fig:stratum_spinosum} in this article) 
is well represented by the Voronoi border $\Gamma_{ij}$ resulting 
from the quotient method defined by equation (\ref{eq:ratrule}). 
In this way, within our tissue model, 
the cell boundaries are merely composed of piecewise circular arcs, 
and the cell bodies are not necessarily located in the middle of the cells. 
By suitable choice of $w_i$ (Lemma \ref{lem:rat_w_eq_r})
there is always some lamella region separating the cell body from 
the neighbor cell ($\delta_i > 0$, also \cf figure \ref{fig:ratrule}).

\subsection{Interaction forces between cells}

The cytoskeleton with its network of filaments 
often features bundled structures, 
which are commonly visible as so-called stress fibers, 
emanating from the cell body or nucleus in radial direction.
According to \cite{Alberts}, bundles of filamentous actin 
attach to transmembrane complexes called adherens junctions, 
which are made from \eg catenins on the cytosolic side and cadherins 
at the exterior of the cell. 
Furthermore, intermediate filaments such as the rope-like keratin 
tie in with rivet-like desmosomes at the cell membrane. 
By connecting neighboring cells, these structures stiffen 
and strengthen the tissue coherence. 
For example, in certain epithelia, cadherin-catenin adherens junctions 
comprise a whole transcellular adhesion belt.

Inspired by this observation, it is assumed 
that the attractive force between two cell bodies $\mcBri, \mcBrj$ 
is transduced by radial filament structures
extending towards the cell boundaries. 
Thereby, the filaments of one cell 
connect to those of the other 
and form pairs along the contact border $\Gamma_{ij}$. 
Thus, the intercellular adherens junctions emerge 
according to the respective filament densities 
as emanating from cell $i$ and $j$, respectively 
(see figure \ref{fig:Adhesion}). 
\begin{figure}[htb]
  \centering
  \includegraphics[width=0.6\textwidth]{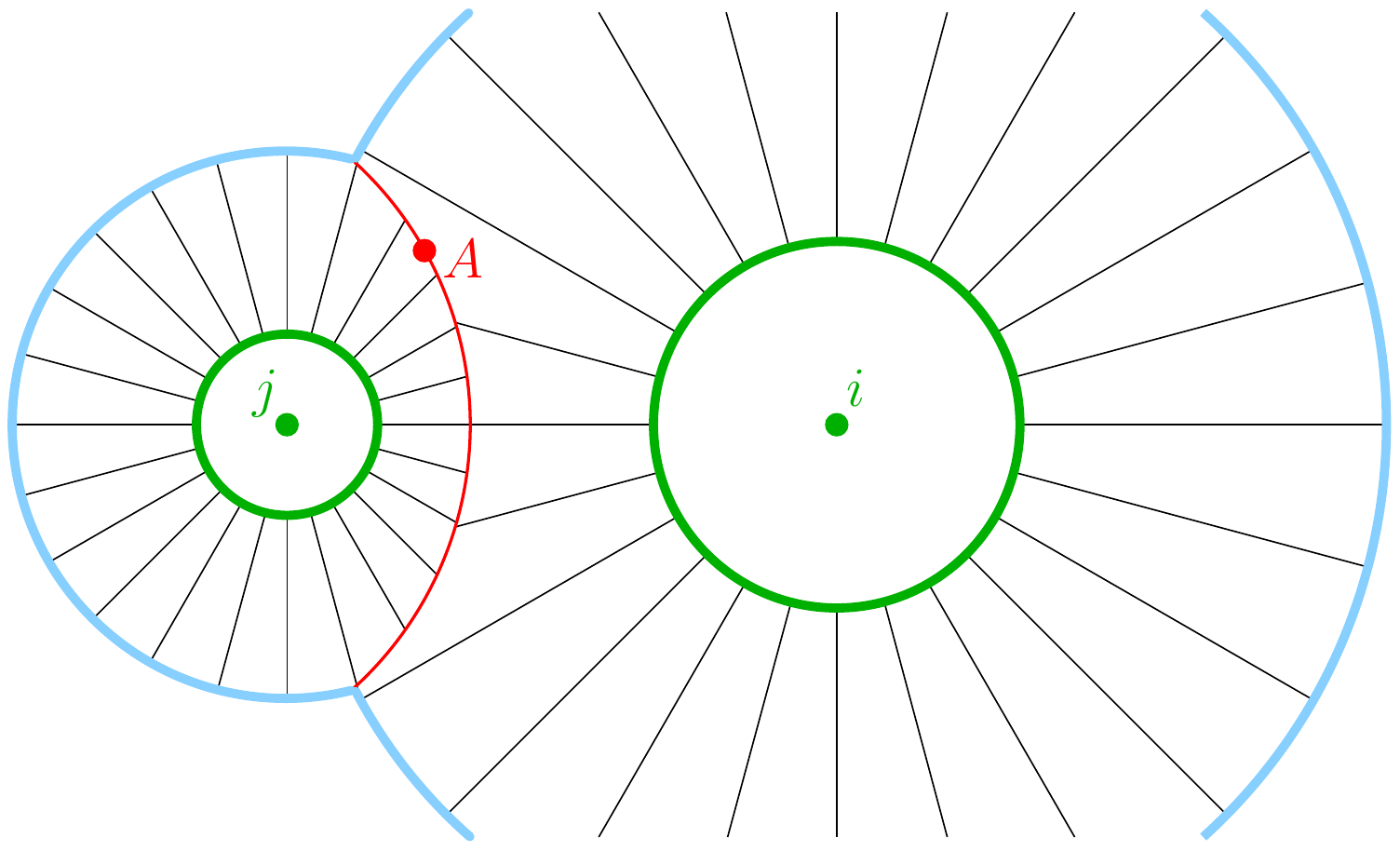}
    \caption{Pairing of filaments from one cell to the other cell.
             For the definition of symbols and angles to describe 
             the geometry of pairing filaments of a  
             neighboring cell pair see figure \ref{fig:pmax}.
            }\label{fig:Adhesion}\label{fig:Density}
\end{figure}
Furthermore, the connecting cell-cell junctions are not fixed but undergo 
dissociation, diffusion, and renewed association. 
Motivated by protein (\eg cadherin) 
diffusion properties in membranes \cite{Gambin2006,Brevier2008}, 
this process is considered to be fast (seconds) compared to 
the slower time scale (several minutes) 
of cell deformation and translocation.
In this way, 
pair formation of cross-attachments between filament bundles 
from both cell bodies can be regarded as a pseudo-stationary stochastic 
process \cite{Evans1997}. 
In order to compute the interaction force between two cells, 
one needs a suitable expression for the density of pairing filaments 
$\rho (\theta)$ at the border of the cells $i,j$.

\subsection{Filament pair density at cell-cell contacts}\label{sec:filapair}

Consider the cell pair as illustrated in figures \ref{fig:pmax} and 
\ref{fig:Density} with cell body radii $r_i > r_j$ and the 
distinguished Voronoi weights as in equation (\ref{eq:rat_wi_eq_ri}).
Let $\theta$ parameterize the contact arc $\Gamma_{ij}$ 
given by  $\vMij, \Rij$, and let $\vA$ be the corresponding 
point upon that arc.
Starting from the surface of the cell bodies $\mcBri$ and $\mcBrj$, 
filaments extend in radial direction under angles 
$\phi_i (\theta)$ and $\phi_j (\theta)$, respectively, 
to eventually meet at $\vA$. 
Furthermore, let
$R_i$ and $R_j$ denote the distances between the cell centers and $\vA$. 
The density of filaments is assumed to be constant on the 
surface of each cell body, given by a universal value $\rhon > 0$. 
In order to construct the pairing density of filaments $\rho (\theta)$ 
along the contact surface,
these cell body surface densities are mapped onto $\Gamma_{ij}$ 
by equating the corresponding surface elements
\begin{equation}\label{eq:map_rho}
    \rho_i (\theta) \Rij \ddd \theta = \rhon r_i \ddd \phi_i, \qquad
    \rho_j (\theta) \Rij \ddd \theta = \rhon r_j \ddd \phi_j.
\end{equation}
With $\vA = \vA (\theta) \in \Gamma_{ij}$, 
(\cf figures \ref{fig:pmax} and \ref{fig:Density}), it holds
\begin{gather}
  \Rij \sin{\theta} =  R_i \sin{\phi_i} = R_j \sin{\phi_j}
      \label{eq:fdsin} \\
  \Rij \cos{\theta} = |\vxj - \vMij| + R_j \cos{\phi_j}.
      \label{eq:fdcos}
\end{gather}
The defining condition for the contact border in equation 
(\ref{eq:pi_eq_pj}) can be written as
\begin{equation}\label{eq:fdReta}
  R_j = \eta R_i \qquad \text{with} \; 
  \eta = \frac{w_j}{w_i} = \frac{r_j}{r_i} < 1.
\end{equation}
Thus, from equation (\ref{eq:fdsin}) we obtain the simple relation
\begin{equation}\label{eq:phieta}
  \sin{\phi_i} = \eta \cdot \sin{\phi_j}
\end{equation}
between the two angles $\phi_i (\theta)$ and $\phi_j (\theta)$, so that
differentiation with respect to $\theta$ yields the proportionality
\begin{equation}\label{eq:dphidth}
  \frac{ \ddd \phi_i }{ \ddd \theta } =
  \eta \cdot \frac{ \cos{\phi_j} }{ \cos{\phi_i} } \cdot
    \frac{ \ddd \phi_j }{ \ddd \theta } =
  \frac{ \eta }{ \kappa_\eta ( \phi_j ) } \cdot
    \frac{ \ddd \phi_j }{ \ddd \theta },
\end{equation}
where $\kappa_\eta ( \phi_j ) = 
\sqrt{ 1 + (1-\eta^2) \cdot \tan^2 \phi_j }$.
Moreover, by solving equation (\ref{eq:fdsin}) for 
$R_j$ in terms of $\Rij$, inserting it into equation (\ref{eq:fdcos}), 
and using the relations (\ref{eq:ratrule_MR}) 
we get an explicit expression for $\tan{\phi_j}$ in terms of $\theta$
\begin{equation}\label{eq:tanphij}
  \tan{\phi_j} = \frac{ \sin{\theta} }{ \cos{\theta} - \eta },
\end{equation}
which holds for all $|\theta| < \thm$, with $\cos \thm = \eta$, 
or equivalently, $|\phi_j| < \pi/2$, 
see equations (\ref{eq:phis_max},\ref{eq:thetamax}).
Finally, by differentiation of equation (\ref{eq:tanphij}) 
with respect to $\theta$ we obtain 
\begin{equation}\label{eq:dphjdth}
  \frac{ \ddd \phi_j }{ \ddd \theta } = 
  \frac{ \tan{\phi_j} }{ 1 + \tan^2 \phi_j } \cdot
    \biggl( \tan{\phi_j} + \frac{1}{ \tan{\theta} } \biggr) =
  \frac{ 1 - \eta \cos{\theta} }{ 1 - 2 \eta \cos{\theta} + \eta^2 } >
  0.
\end{equation}

It is assumed, that the pairing density function $\rho (\theta)$, 
depending on $\rho_i (\theta)$ and $\rho_j (\theta)$, is 
even in $\theta$, maximal at $\theta=0$, and strictly monotonically 
decreasing for increasing $|\theta|$. Here, two exemplary models 
to specify such a density function $\rho (\theta)$ are discussed:

\paragraph{Model 1: Minimal density pairing.}
If locally one cell has less filaments binding to $\Gamma_{ij}$ than 
the other, then there will be a pairing match for all of its filaments.
Thus, the local density of pairs on $\Gamma_{ij}$ will equal the 
lower filament density:
\begin{equation}\label{eq:rho_theta_1}
  \rho(\theta) = \min \Bigl( \rho_i (\theta), \rho_j (\theta) \Bigr) =
                 \frac{ \tilde{\rho} }{\Rij} 
                 \min \biggl( 
                   r_i \frac{ \ddd \phi_i }{ \ddd \theta }, 
                   r_j \frac{ \ddd \phi_j }{ \ddd \theta }
                 \biggr),
\end{equation}
where we used the identities (\ref{eq:map_rho}).
With $\kappa_\eta > 0$ we conclude from equation (\ref{eq:dphidth}) 
that $\rho (\theta) = \rho_i (\theta) \leq \rho_j (\theta)$.
Therefore, an explicit representation of $\rho$ 
in terms of $\phi_j$ and its derivative is
\begin{equation*} 
  \rho_\text{min} (\theta) = \rho_i (\theta) = 
  \tilde{\rho} \cdot
    \frac{ r_j }{ \Rij \cdot \kappa_\eta ( \phi_j ) } \cdot
    \frac{ \ddd \phi_j }{ \ddd \theta }.
\end{equation*}
The emerging behavior of $\rho (\theta)$ is 
visualized in figure \ref{fig:density_1}, 
\begin{figure}[htb]
  \centering
  \includegraphics[width=0.48\textwidth]{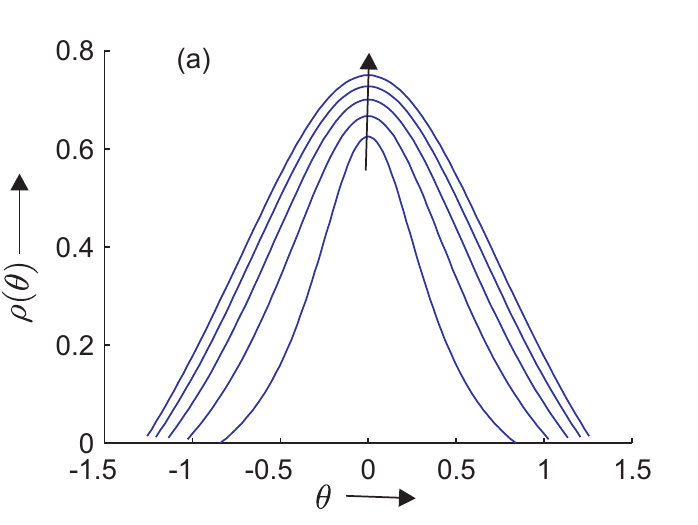}
  \includegraphics[width=0.48\textwidth]{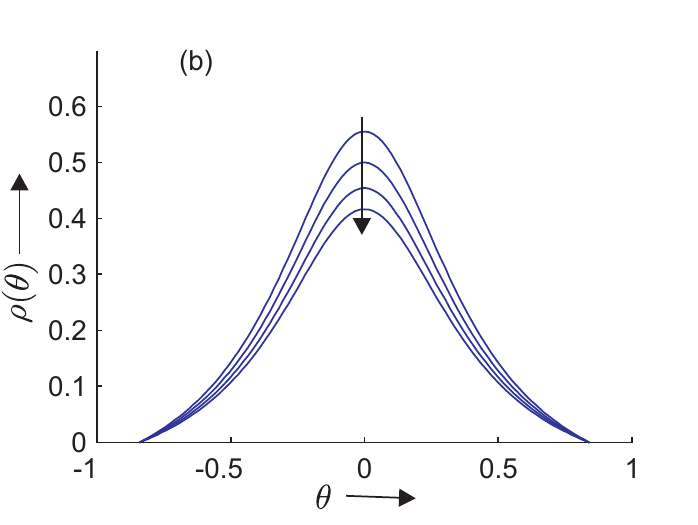} \\
  \caption{Plot of density function $\rho (\theta)$, 
           described by model 1 (minimal density pairing). 
           (a): The distance of cell bodies $\deltij > 0$ and $r_j > 0$ 
                are fixed constants, while $r_i > r_j$ successively 
                increases.
           (b): $r_i > r_j > 0$ are fixed constants, while 
                the distance of cell bodies successively increases.
          }\label{fig:density_1}
\end{figure}
where the maximal angle $\thm$, as found before in 
equation (\ref{eq:thetamax}), can be 
clearly seen in plot (b). It does not appear so prominent in 
plot (a), where it varies with $r_i$.

\paragraph{Model 2: Mean density pairing.}
Assuming that each filament from either of the neighboring cells has 
a probability to randomly form a pair at some junction on $\Gamma_{ij}$, 
the resulting pairing density can be defined as the
geometric mean of $\rho_i$ and $\rho_j$: 
\begin{equation}\label{eq:rho_theta_2}
  \rho_\text{mean} (\theta)  =  
  \sqrt{ \rho_i (\theta) \cdot \rho_j (\theta) } =
  \tilde{\rho} \cdot 
    \frac{ r_j }{ \Rij \cdot \sqrt{ \kappa_\eta (\phi_j) } } \cdot
    \frac{ \ddd \phi_j }{ \ddd \theta }.
\end{equation}
In figure \ref{fig:density_2} the emerging behavior of $\rho (\theta)$ is 
visualized, showing an even more expressed cut-off at $\theta = \thm$
\begin{figure}[htb]
  \centering
  \includegraphics[width=0.48\textwidth]{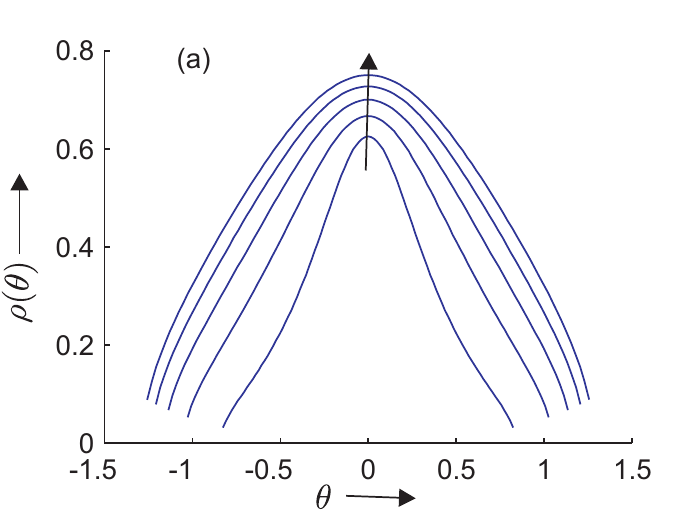}
  \includegraphics[width=0.48\textwidth]{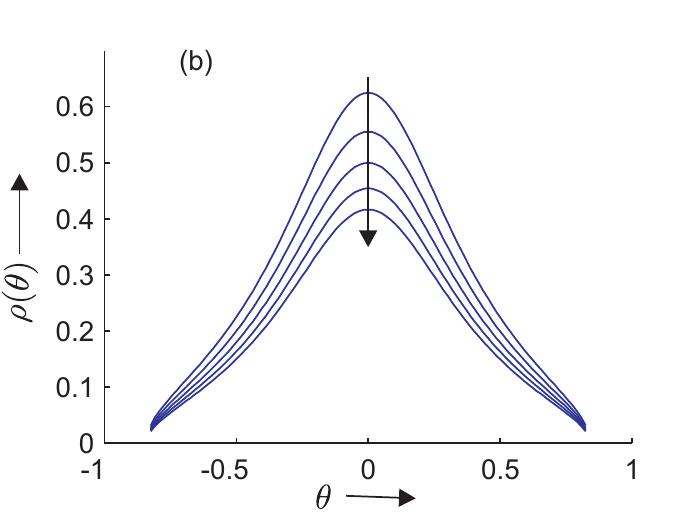}
  \caption{Plot of density function $\rho (\theta)$, 
           described by model 2 (mean density pairing).
           (a): The distance of cell bodies $\deltij > 0$ and $r_j > 0$ 
                are fixed constants, while $r_i$ successively 
                increases.
           (b): $r_i > 0$ and $r_j > 0$ are fixed constants, while 
                the distance of cell bodies $\deltij$ successively 
                increases.
          }\label{fig:density_2}
\end{figure}

From figures \ref{fig:density_1} and \ref{fig:density_2} it 
becomes apparent that $\rho (\theta)$ is even in $\theta$, 
maximal for $\theta = 0$, and strictly decreasing 
for increasing $|\theta|$ in both methods. 
Whereas model 1 captures the maximal filament pairing 
that could be realized for long term association 
at fixed adherens junctions, model 2 describes 
the pseudo-steady state of 
short term stochastic filament association, 
which will be considered further on.

\subsection{Pair interaction force}\label{sec:fint}

Consider two cells $i,j$ touching each other, so that 
if their cell body distance $\deltij = \delta_i + \delta_j$ 
was further increased, they would dissociate. 
By equation (\ref{eq:lamwid}) this limiting 
rupture distance $\deltru$ is given as
\begin{equation}\label{eq:deltru}
  \deltru = \deltif + \deltjf = 
            \bigl( \smcPm - 1 \bigr) ( r_i + r_j ).
\end{equation}
Then according to the previous assumptions, 
any paired couple of actin fibers meeting at an adherens junction 
in the contact boundary $\Gamma_{ij}$ 
near the intersection point $\vSij$ (see figure \ref{fig:ratrule}) 
develops a certain positive stress between the two cell bodies. 
According to the assumption made at the beginning of this chapter, 
this stress depends on the mean volume fraction $q_0$ 
of the contractile cytoskeletal network, 
which before touching was equal in both contacting lamellae
of width $\deltif, \deltjf$, respectively. 
If now $\deltij$ further decreases, 
then both lamellae will be compressed 
by the equal factor 
$\delta_i / \deltif = \delta_j / \deltjf = \deltij / \deltru < 1$ 
as a consequence of the Voronoi partition laws 
(\ref{eq:cVNG}) and (\ref{eq:closed_Gammaij}). 
Thus, we can suppose that 
(a) the mean volume fraction in both lamellae 
    increases to the same value $q$ satisfying the inverse relation 
    \begin{equation}\label{eq:voldelt}
      \frac{q}{q_0} = \frac{\deltru}{\deltij},
    \end{equation}
    and
(b) any paired actin fibers develop the same stress between their 
    adherens junction and the corresponding cell body, 
    with a strength $\tilde f = f (q)$ that, for simplicity, 
    depends only on the common cytoskeletal volume fraction $q$.
Since the cytoskeletal network consists not only of cross-linked 
actin-myosin filaments but also of more or less flexible 
microtubuli and intermediate filaments (as keratin, for example) 
\cite{Koestler2008,Taute2008,Sivaramakrishnan2008,Alberts}, 
the stress function $f(q)$ has to decrease to (large) negative 
values for increasing $q \rightarrow q_\text{max} = 1$. 
Here we chose the simple, thermodynamically compatible 
strictly decreasing model function 
\begin{equation*}
  f(q) = \finth \Bigl( \ln (1-q) - \ln q - \ln z_c \Bigr).
\end{equation*}
The corresponding convex generalized free energy $\mathcal{F}$ 
satisfies 
\begin{equation*}
  \mathcal{F} (1-q) = (1-q) \Bigl( f(q) - \finth \Bigr)
\end{equation*}
for $0 < q < 1$ (\cf \cite{AltAlt2008}), 
where the positive constant $z_c < 1 / q_0 - 1$ 
determines the critical volume fraction 
$q_c = 1 / ( 1 + z_c ) > q_0$ such that $f(q_c) = 0$. 
Applying transformation (\ref{eq:voldelt}) we finally obtain 
an actin fiber stress function that depends only 
on the relative cell body distance $\Delij = \deltij / \deltru < 1$, 
namely 
\begin{equation}\label{eq:fcyt}
  f(\Delij) = \finth \cdot \ln \Bigl( 
      \frac{ \Delij - \Delmi }{ \Delcr - \Delmi } 
    \Bigr),
\end{equation}
where $0 < \Delmi = q_0 < q_0 ( 1 + z_c ) = \Delcr < 1$.

The derivation of this stress model relies on 
the simplifying assumption that 
according to equation (\ref{eq:voldelt}) 
the stress of each paired filament 
extending from cell body $\mcBri$ to 
the adherens junction at $\Gamma_{ij}$ 
is completely determined 
by the adhesion strength 
(appearing as coefficient \finth) 
and the cytoskeletal state $q$ 
of the intermediate lamella near 
the horizontal cell-cell connection axis in direction $\udij$, 
see figures \ref{fig:pmax} and \ref{fig:Density}. 
Moreover, relative to this coordinate frame 
the paired filament orientations are 
$\uRi = ( - \cos \phi_i, \sin \phi_i )$ and 
$\uRj = (   \cos \phi_j, \sin \phi_j )$, 
so that the corresponding adherens junction at $\Gamma_{ij}$ 
experiences two force vectors 
$\vffi = - f (\Delij) \cdot \uRi$ and $\vffj = - f (\Delij) \cdot \uRj$ 
with opposing horizontal components.
However, their resultant vector $\vffi + \vffj$ 
generally does not vanish (except for $\phi_i = \phi_j = 0$). 
It has a negative vertical component 
$- f (\Delij) \cdot ( \sin \phi_i + \sin \phi_j )$,
which could pull the adherens junction 
towards the cell-cell connection line 
along the contact boundary $\Gamma_{ij}$.

Therefore, some counterforces 
due to substrate adhesion via \eg integrin \cite{Friedl1998,Hegerfeldt2002} 
or frictional drag have to be supposed in order to guarantee 
the assumed pseudo-stationary equilibrium condition for $\Gamma_{ij}$.
Using the simplifying decomposition 
in horizontal and vertical components, 
we arrive at the following model expression 
for the force $\vffij$ applied by a single filament pair 
onto the cell body center $\vxj$:
\begin{equation}\label{eq:vffij}
  \begin{split}
  \vffij &= \frac{1}{2} \bigl( \vffi - \vffj \bigr)^\text{(hor)} -
           \frac{\alpha}{2} \bigl( \vffi + \vffj \bigr)^\text{(ver)} \\
         &= \frac{ f (\Delij) }{2} \Bigl(
                    ( \cos \phi_i + \cos \phi_j ) \udij +
             \alpha ( \sin \phi_i + \sin \phi_j ) \udij^\perp
           \Bigr),
  \end{split}
\end{equation}
where $\alpha \geq 0$ is an additional adhesion or friction parameter. 
By relying on the pairing filament density $\rho (\theta)$  
in section \ref{sec:filapair}, 
we obtain an integral expression 
for the total pair interaction force applied by cell $i$ onto 
cell $j$:
\begin{equation}\label{eq:interaction_force}
  \vFiji = R_{ij} \int_{ \Gamma_{ij} } \ddd \theta \;
          \rho (\theta) \cdot \vffij (\theta)
\end{equation}
where the trigonometric relations between $\phi_i$, $\phi_j$ and 
the parameterization angle $\theta$ have to be extracted from equations 
(\ref{eq:map_rho}) -- (\ref{eq:fdReta}). Conversely, the force of 
cell $j$ onto $i$ is determined by the relations 
\begin{equation}\label{eq:fintsym}
  \Fjihor = - \Fijhor \qquad \Fjiver = \Fijver.
\end{equation}
The emerging cell pair interaction force as described by 
equation (\ref{eq:interaction_force}) 
is shown in figures \ref{fig:fiji3d} and \ref{fig:fijicut}.
\begin{figure}[htbp]
  \centering
  \includegraphics[width=0.60\textwidth]{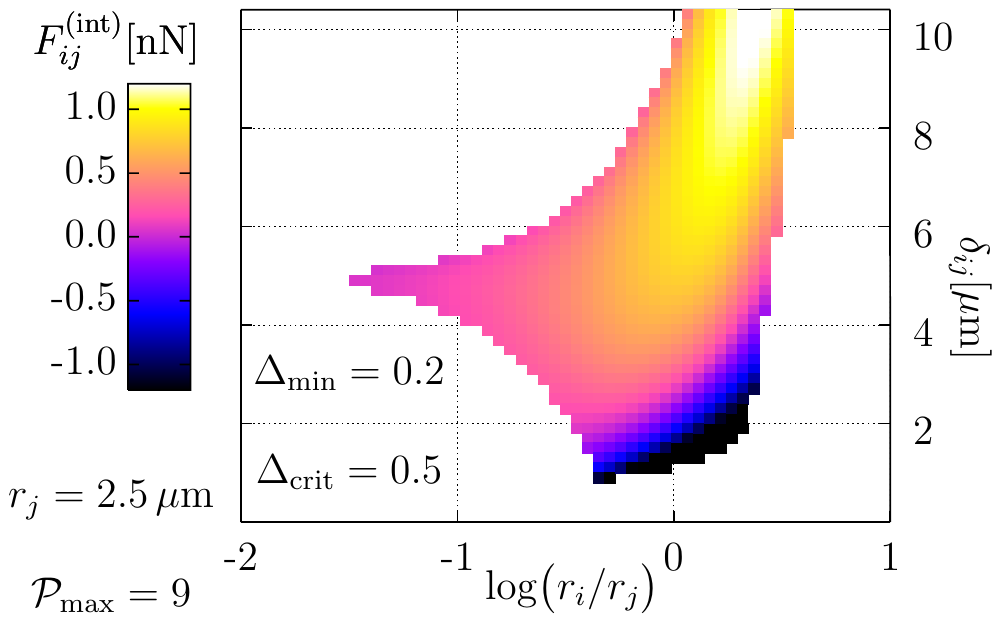}
    \caption{Modulus $F_{ij}^\text{(int)}$ 
             of pair interaction force $\vFiji$, see 
             equation (\ref{eq:interaction_force}). In the
             empty regions of the plot, $\vFiji$ is not defined. There, 
             $\mcBRif \cap \mcBRjf$ extends beyond $\Gamma_{ij}$ as 
             bounded by $\thm$, or the cells are not in contact at all.
            }\label{fig:fiji3d}
\end{figure}
\begin{figure}[htbp]
  \centering
  \includegraphics[width=0.48\textwidth]{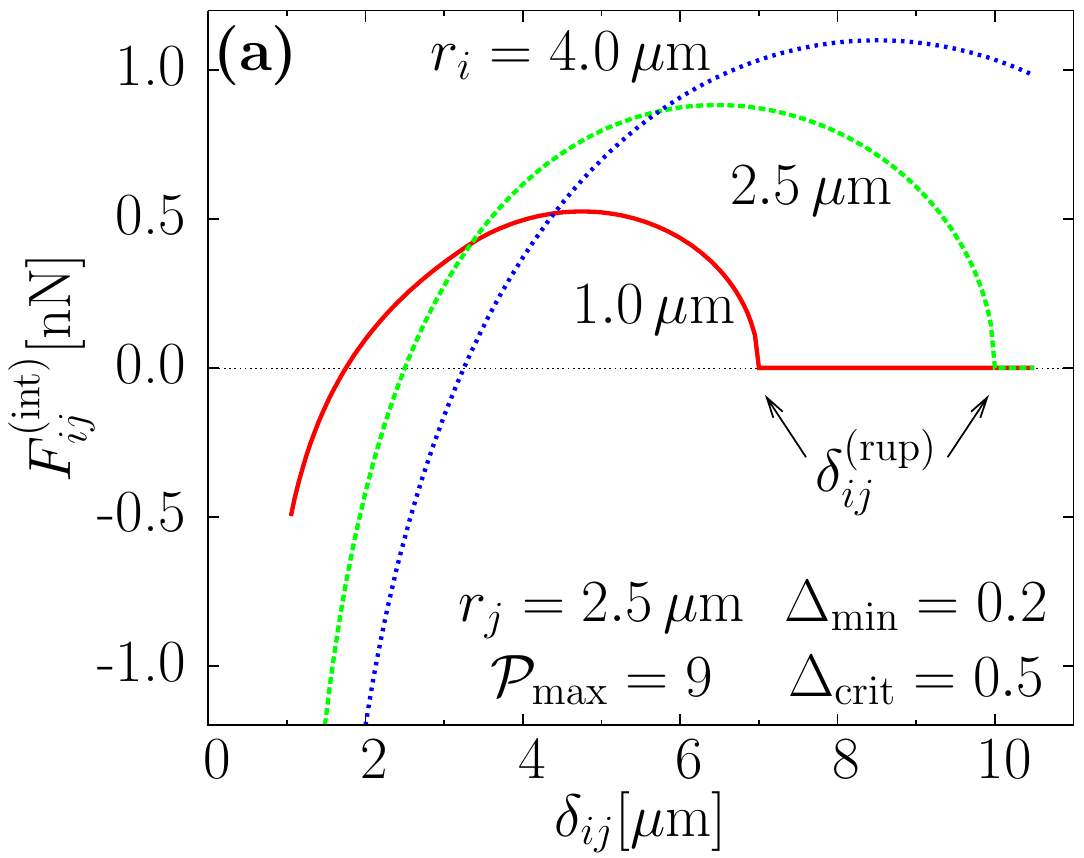} \quad
  \includegraphics[width=0.48\textwidth]{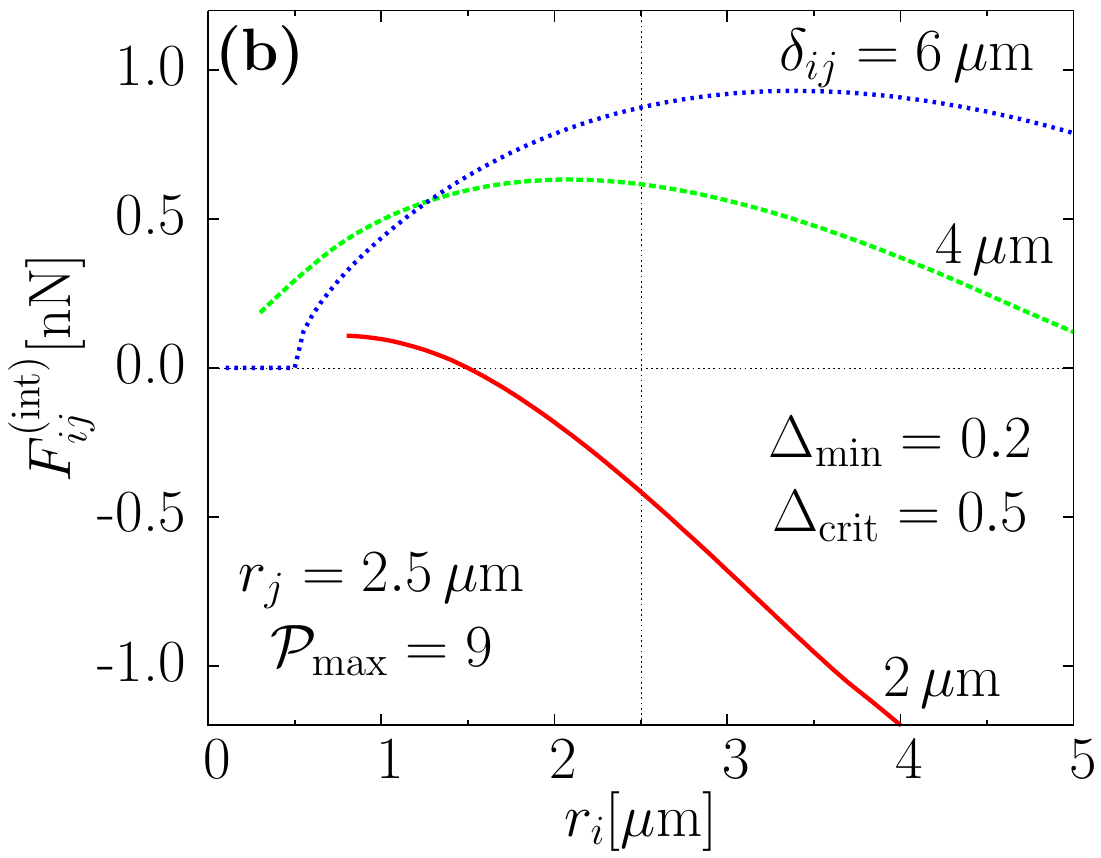}
  \caption{Modulus of pair interaction force $\vFiji$ 
           depending on (a) $\deltij$ and (b) $r_i$. 
           The global view depending on both $\deltij$ 
           and the logarithmic ratio $r_i / r_j$ was presented in figure 
           \ref{fig:fiji3d}.
          }\label{fig:fijicut}
\end{figure}
A natural and maximal cut-off distance for the force is given by  
the finiteness of the Voronoi tessellation, whereby neighboring is only 
possible for sufficiently small cell center distances  
$ \deltij < \deltru + r_i + r_j$, \ie 
$\overline{\mcBRif} \cap \overline{\mcBRjf} \neq \emptyset$.
Once two previously isolated cells come close enough 
for contact, there is a strong tendency to attach, 
which facilitates multicellular tissue formation. 
The interaction force is attractive until the cell distance 
$\deltij$ reaches $\deltcr = \Delcr \cdot \deltru$, 
where $\vFiji$ vanishes. 
Finally, if $\deltij$ drops below $\deltcr$, 
$\vFiji$ becomes repulsive and therefore hinders tissue collapse 
at distances approaching $\deltmi = \Delmi \cdot \deltru$.
Note that with $\Delij > \Delmi$ 
the lower bound from inequality (\ref{eq:pmax_max2}) 
on the homogeneity of cell radii due to fixed $\sqrt{\mcPm}$ 
can be relaxed to 
\begin{equation}\label{eq:pmax_max}
  1 < \frac{ \mcPm  }{ \bigl( \Delmi ( \sqrt{ \mcPm } - 1 ) + 1 \bigr)^2 }
    \leq \qnb. 
\end{equation}
Correspondingly, 
$\mcPm$ can be increased for given cell homogeneity $\qqq$ or $\qnb$.
For example, the constraint (\ref{eq:pmax_max}) 
yields $\qnb = 6.25$ for $\sqrt{\mcPm} = 3$, or 
$r_\text{min} \geq 0.73 \cdot r_\text{max}$ for each cell pair. 
In fact, the actual distances $\Delij$ in a tissue will be 
larger than $\Delmi$, effectively relaxing (\ref{eq:pmax_max}) even 
further.

\subsection{Locomotion force at the free boundary}\label{sec:floc}

In addition to the dynamics induced by pair interaction forces, cells 
at the tissue margin may migrate into open space. 
The locomotion force causing such a migration 
is due to lamellipodial protrusion and retraction, 
which is unhindered only 
at the free cell boundary $\Gamma_{i0}$.
In a similar manner as before, we assume that this locomotion or 
free boundary force onto the cell body $\mcBri$ is determined by 
connecting radial filament bundles as indicated in figure \ref{fig:Density}. 
The filament density of cell $i$ along its free 
boundary $\Gamma_{i0}$ is given by
\begin{equation}\label{eq:free_boundary_density}
  \rhofi = \frac{ \rhon r_i }{ \Rif } = \frac{\rhon}{\sqrt{\mcPm}},
\end{equation}
and thus independent of $r_i$.
In this way, the locomotive force of a cell $i$ reads as
\begin{equation}\label{eq:free_boundary_force}
  \vFifb = \floc \int_{\Gamma_{i0}} \ddd s_i \,
               \rhofi \uRif (\phi_i), 
\end{equation}
with arc length $s_i = \Rif \phi_i$ and 
$\uRif (\phi_i) = (\cos \phi_i, \sin \phi_i)$. 
Moreover, in order to heuristically account for ubiquitous perturbations 
due to lamellipodial fluctuations or possible signals, 
we implement stochastic force increments at the tissue margin 
\begin{equation}\label{eq:stochastic_force}
  \ddd \vFist = b_0 \int_{\Gamma_{i0}} \ddd \mathbf{B}_{t, s_i}. 
\end{equation}
Here we assume a uniform and anisotropic vector 
noise $\mathbf{B}_{t, s_i}$ defining a 
spatio-temporal Brownian sheet 
in arc length and time coordinates 
with independent Gaussian increments 
satisfying $\Var \bigl( \ddd \mathbf{B}_{t s_i} \bigr) = 
\ddd s_i  \cdot \ddd t$.
For each time $t$, stochastic integration 
results in a simple weighted Gaussian noise term 
with random increments $\ddd \mathbf{W}_t$ 
\begin{equation}\label{eq:Wdiscr}
  \ddd \vFist = b_0 \sqrt{|\Gamma_{i0} |} \ddd \mathbf{W}_t
  \; \widehat = \;
    b_0 \sqrt{ \frac{ |\Gamma_{i0}| \ddd t}{2} } \pmb \xi_t,
\end{equation}
where $\pmb \xi_t$ is a vector of Gaussian random numbers, 
which is chosen independently for every time step 
in a corresponding numerical realization of the stochastic process.

\subsection{Drag forces}

Apart from interaction and free boundary forces, the cell 
is subject to drag forces $\vFifr$ slowing down its movement.
Such drag forces are generally functions of 
the cell body velocity $\dot{\vx}_i = \vvi$. 
Here we assume the simplest dependency of a linear 
force-velocity relation
\begin{equation}\label{eq:drag_force}
  \vFifr = - \gamma_i \vvi,
\end{equation}
with drag coefficient $\gamma_i = \gamma (r_i)$. Arising from friction 
with the substratum, $\gamma_i$ could depend on the area of the 
cell body, \eg $\gamma (r_i) \propto r_i^2$, however, 
for simplicity, here we take $\gamma_i = \tilde{\gamma}$ 
independent of cell body sizes.

\subsection{Dynamics of cell movement}\label{sec:dynamics}

The previously described, active and anisotropic forces 
$\vFiji, \vFifb$ arising from the actin filament network 
act onto the cell center $\vxi$
causing translocation of the cell.
However, friction, see equation (\ref{eq:drag_force}), 
is considered to be dominating 
and inertia terms are neglected \cite{Galle2005,Schaller2005,Honda2004},  
so that the emerging deterministic overdamped 
Newtonian equations of motion read as 
\begin{equation}\label{eq:newton}
  \vvi = 
  \frac{1}{\gamma_i} \biggl( 
    \vFifb + \sum_{j \, \text{neighbor}} \vFjii
  \biggl) =: \frac{ \vFi }{ \gamma_i }.
\end{equation}
Moreover, any change of the translocation direction as well as adjustment 
of speed to the pseudo-steady state as given by the previous equation 
(\ref{eq:newton}) requires some (mean) time $T_i$ for  
restructuring and reinforcing the anisotropic actin network.
The simplest way to model this adjustment process 
is by a linear stochastic filter of first order for the velocity 
\cite{Arnold1974}. 
Together with equation (\ref{eq:stochastic_force}) this results in 
the stochastic differential equation (SDE) system 
\begin{equation}\label{eq:dynamics}
  \ddd \mathbf v_i = 
    \frac{1}{T_i} \biggl( \frac{\vFi}{\gamma_i} - \mathbf v_i \biggl) \ddd t +
    b_i \sqrt{| \Gamma_{i0} |} \ddd \mathbf W_t,
  \qquad \qquad
  \ddd \vxi = \mathbf v_i \ddd t, 
\end{equation}
with $b_i = b_0 / \gamma_i$.
Similarly as the friction $\gamma_i$, also the mean adjustment 
time $T_i$ could have some dependence on $r_i$, however, here 
we restrict ourselves to the case of cells with homogeneous 
activity time scale $\forall i: T_i = T$.

For each time $t$, the forces 
(\ref{eq:interaction_force},\ref{eq:free_boundary_force},\ref{eq:Wdiscr},\ref{eq:drag_force})
can be computed explicitly from the Voronoi tessellation of
the generating cell bodies $\{ \mcBri: i = 1 \dots N \}$ 
using a spatial discretization of $\{ \Gamma_{ij} \}$ 
in the parameterizing angle $\theta$.
Next, the velocities $\{\vvi\}$ and positions $\{\vxi\}$ 
of the cell centers are updated according to 
both equations (\ref{eq:dynamics}) in an 
explicit Euler-Maruyama step \cite{Kloeden1999}.
Finally, the Voronoi tessellation is computed anew from the 
updated generators $\{ \mcBri \}$.

Higher order stochastic integration schemes were not applied, 
since such procedures necessitate the distribution of both 
forces and perturbations onto the powers of a Taylor expansion. 
In the general case, the involved derivatives 
of cell-cell contacts $\{ \Gamma_{ij} \}$ and 
cell margins $\{ \Gamma_{i0} \}$
cannot be computed easily \emph{a priori}. 
In particular, the change of a contact $\Gamma_{ij}$ may 
depend on the behavior of several distinct nearby cells $k \neq i,j$. 
Altogether, here we use the versatile basic method 
for integrating the equations of motion, 
because it is applicable regardless of the structure 
of the underlying SDE system.

\section{Cell pair contacts}\label{sec:pair_contacts}

From the force plots in figure \ref{fig:fijicut} it is clear that 
the interaction force $F_{ij}^\text{(int)}$ between a cell pair 
exhibits a sharp onset 
when two formerly dissociated cells come into contact. 
For any such cell pair in contact and each cell-cell body 
distance $\delta = \deltij < \deltru$, see equation (\ref{eq:deltru}), 
there is a unique pair of maximal contact angles $\phif, \phjf$.
Therefore, according to equations
(\ref{eq:Rif}) and (\ref{eq:phieta}) with $\eta = r_j / r_i$ 
and figure \ref{fig:pmax} the relation 
\begin{equation*}
  \Rjf \cdot \sin \bigl( \phjf (\delta) \bigr) = 
  \Rif \cdot \sin \bigl( \phif (\delta) \bigr).
\end{equation*}
holds and 
the free boundaries of both cells $k=i,j$ are characterized by
\begin{equation*}
  \Gamma_{k0} = \Bigl\{ 
    \Rkf \cdot \bigl( 
      \cos \phi_k, \sin \phi_k
    \bigr): \; |\phi_k| \geq \phi_{k0} (\delta)
  \Bigr\}.
\end{equation*}
Thus, by solving the integral 
in equation (\ref{eq:free_boundary_force}) and 
regarding relation (\ref{eq:free_boundary_density}) 
we obtain for the deterministic part of the locomotive forces
\begin{equation*}
  \begin{split}
    \vFifb &= 2 \floc \rhon r_i \cdot  
               \sin \bigl( \phi_{i0} (\delta) \bigr) \udij \\
            &= 2 \floc \rhon r_j \cdot
               \sin \bigl( \phi_{j0} (\delta) \bigr) \udij = - \vFjfb, 
  \end{split}
\end{equation*}
again using the decomposition in horizontal and vertical components.
This means, that under our model conditions 
the mean locomotive forces of the two cells 
are exactly opposite, independent of their size. 
Since the interaction forces have the same property 
(see equation (\ref{eq:fintsym}) with $\Fijver = 0$), we conclude that 
the sum of the deterministic driving forces 
onto the cell pair vanishes, $\vFi + \vFj = 0$. 
From equation (\ref{eq:rho_theta_2})  
the interaction force onto cell $j$ is computed as 
\begin{equation}\label{eq:touch_vFiji}
  \vFiji = 2 \rhon r_j f \biggl( \frac{\delta}{\deltru} \biggr) \cdot
           C \bigl( \phi_{j0} (\delta) \bigr) \udij, 
\end{equation}
with $C (\phi) = \int_0^\phi \ddd \varphi \, 
\bigl( \cos \varphi + \sqrt{ 1 - \eta \sin^2 \varphi } \bigr) / 
\kappa_\eta (\varphi)^{1/2}$, 
and $\kappa_\eta$ as defined in equation (\ref{eq:dphidth}). 
\begin{figure}[htb]
  \centering
  \includegraphics[height=0.33\textwidth]{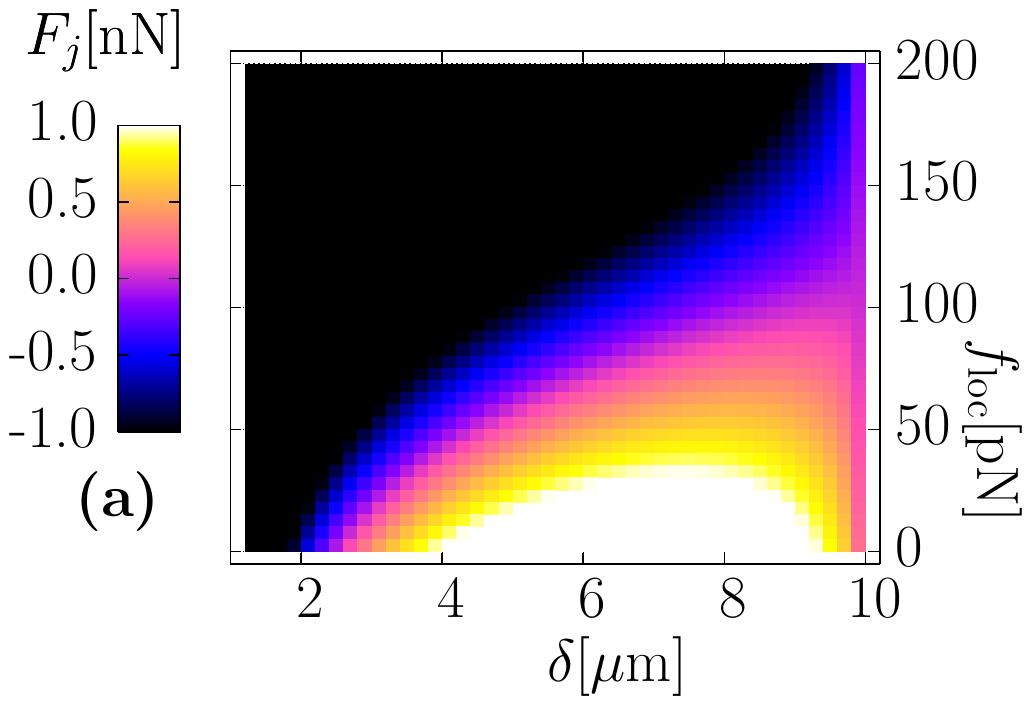} \quad
  \includegraphics[height=0.33\textwidth]{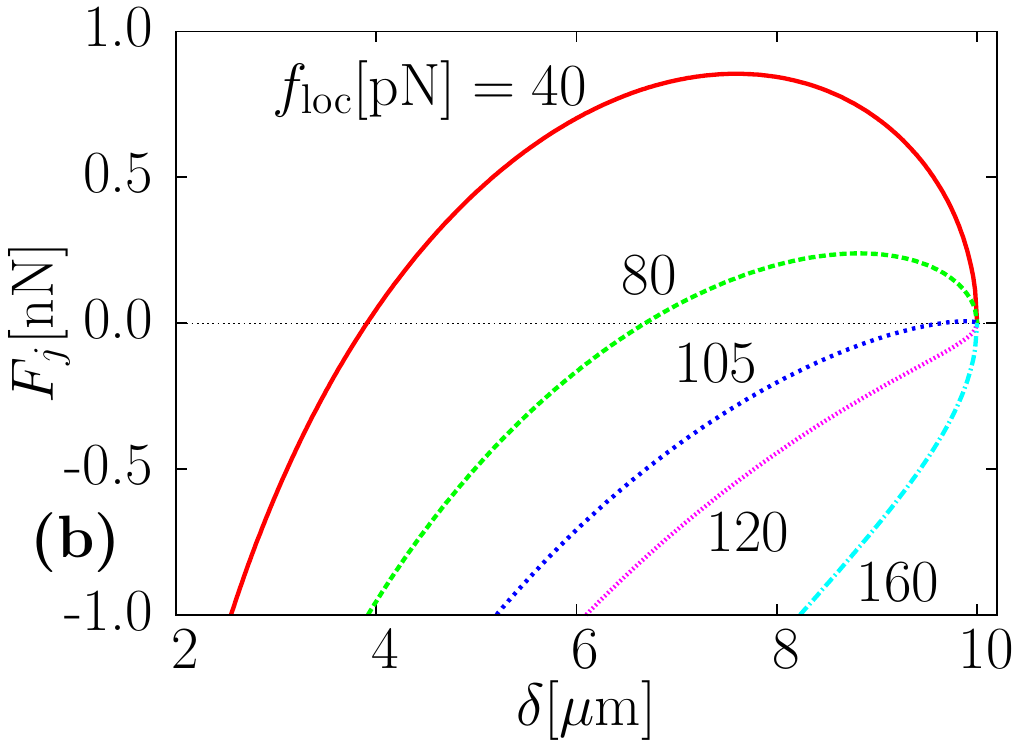}
  \caption{Force balancing in an isolated cell pair. 
           The deterministic force 
           $F_j = F_{ij}^\text{(int)} + F_j^\text{(loc)}$ 
           acts on the cell center $\vxj$.
           Parameters are $f_\text{int} = 60 \, \mathrm{pN}$, $\alpha = 1$, 
           $r_i = 3.0$, $r_j = 2.0$, $\rhon = 6.6 / \mathrm{\mu m}$, 
           $\Delmi = 0.2$, $\Delcr = 0.5$. 
          }\label{fig:deltfloc}
\end{figure}
In figure \ref{fig:deltfloc} the resulting scalar horizontal force 
$F_j = F_{ij}^\text{(int)} + F_j^\text{(loc)}$ 
is plotted as a function of cell body distance $\delta$ 
and the locomotion force parameter $\floc$, revealing the 
emergence of a stable deterministic contact equilibrium 
$F_{ j, \text{det} } = 0$ for lower values of $\floc < 105 \, \mathrm{pN}$. 
On the other hand, for larger locomotion parameters, 
no such equilibrium exists and the cell distance always increases 
until the pair separates at $\delta = \deltru$ 
($= 10 \mu \mathrm{m}$ in figure \ref{fig:deltfloc}).

In order to study the full stochastic contact and segregation dynamics 
described by the SDE system (\ref{eq:dynamics}) with noise 
amplitudes $\tilde b_k = b_0 \sqrt{ G_k ( \phi_{k0} (\delta) ) } / \gamma_k$, 
$G_k (\phi) = |\Gamma_{k0}| = 2 \sqrt{\mcPm} r_k (\pi-\phi)$, 
we set all vertical noise components to zero, for simplicity. 
Then due to $F_i + F_j = 0$ the 
dynamics is determined by differential equations for 
the cell overlap $z$ and the difference 
$u$ in horizontal cell velocities:
\begin{equation}\label{eq:overlap}
  z = \deltru - \delta > 0, \qquad 
  u = v_j^\text{(hor)} - v_i^\text{(hor)}.
\end{equation}
\begin{proposition}\label{prop:overlap}
  For cell pair dynamics restricted to the horizontal connection line 
  the stochastic ODE system in equation (\ref{eq:dynamics}) transforms 
  into a nonlinear Ornstein-Uhlenbeck system  
  for the overlap $z$ and its temporal change $u$, namely
  \begin{eqnarray}
    \ddd z &=& u \, \ddd t \label{eq:zornstein} \\
    \ddd u &=& \Bigl( F(z) - u \Bigl) \frac{ \ddd t }{ T } +
               b(z) \cdot \ddd W_t \label{eq:wornstein}
  \end{eqnarray}
  Defining the mean drag coefficient 
  $1 / \gamma_{ij} := 1 / \gamma_i + 1 / \gamma_j$ and
  $C(\phi)$ as in equation (\ref{eq:touch_vFiji}), we have 
  \begin{gather}
    \label{eq:forceFz}
    F (z)  = \frac{ 2 \rhon r_j }{ \gamma_{ij} } \Bigl( 
                 f (z) C (\phi) - \floc \cdot \sin \phi 
               \Bigr), \quad
    f(z) = \finth \log \Bigl( \frac{\zzr-z}{\zzr-z_c} \Bigr), \\
    \label{eq:forcefbz}
    b^2 (z) = b_0^2 
              \biggl( \frac{ G_j (\phi) }{ \gamma_j^2 } + 
                      \frac{ G_i (\phi) }{ \gamma_i^2 } \biggr),
  \end{gather}
  with $\zzr = \deltru \cdot ( 1 - \Delmi )$ and
  $z_c = \deltru \cdot ( 1 - \Delcr )$. 
  The relation between $z = \deltru - \delta$ and  
  $\phi = \phjf (\delta)$ can be written as
  \begin{equation}\label{eq:zofphi}
    z \equiv \Rjf \biggl( 1 - \sqrt{ 1 - \sin^2 \phi } \biggr) +
             \Rif \biggl( 1 - \sqrt{ 1 - \eta^2 \sin^2 \phi } \biggr).
  \end{equation}
\end{proposition}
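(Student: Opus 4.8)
The plan is to reduce the two coupled vector SDEs (\ref{eq:dynamics}) for cells $i,j$ to a scalar second‑order system by projecting everything onto the connection axis $\udij$ and passing to the relative coordinates $z,u$ of (\ref{eq:overlap}). Restricting the motion to the horizontal line means $\udij$ does not rotate, so the scalar projections $v_i^\text{(hor)} = \vvi\cdot\udij$ and $v_j^\text{(hor)} = \mathbf v_j\cdot\udij$ satisfy ordinary one‑dimensional SDEs without any curvature/centrifugal correction, and the vertical noise is set to zero by assumption. First I would establish the kinematic identity (\ref{eq:zornstein}): writing $\deltij = \dij - r_i - r_j$ with fixed radii and differentiating $\dij = |\vxi - \vxj|$ along the axis gives $\dot\deltij = \udij\cdot(\vvi - \mathbf v_j) = v_i^\text{(hor)} - v_j^\text{(hor)} = -u$, whence $z = \deltru - \deltij$ yields $\ddd z = u\,\ddd t$ immediately.

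\textbf{Drift of $u$.} Next, for $\ddd u = \ddd v_j^\text{(hor)} - \ddd v_i^\text{(hor)}$ I would project the drift of (\ref{eq:dynamics}) onto $\udij$ and subtract. The two self‑relaxation terms combine to $-u/T$, while the force terms give $\tfrac{1}{T}\bigl(F_j^\text{(hor)}/\gamma_j - F_i^\text{(hor)}/\gamma_i\bigr)$. The crucial simplification is the force balance $\vFi + \vFj = 0$ established just above the statement (from (\ref{eq:fintsym}) with $\Fijver=0$ together with the exactly opposite locomotion forces $\vFifb = -\vFjfb$), so $F_i^\text{(hor)} = -F_j^\text{(hor)}$ and the bracket collapses to $F_j^\text{(hor)}(1/\gamma_i + 1/\gamma_j) = F_j^\text{(hor)}/\gamma_{ij}$ by the definition of the mean drag coefficient. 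This already gives (\ref{eq:wornstein}) with $F(z) = F_j^\text{(hor)}/\gamma_{ij}$.

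\textbf{Identifying $F(z)$, $f(z)$ and $b(z)$.} Inserting the explicit horizontal force on $j$, namely the interaction force (\ref{eq:touch_vFiji}) plus the locomotion force $\vFjfb = -2\floc\rhon r_j \sin\phjf\,\udij$, produces exactly $F(z) = \tfrac{2\rhon r_j}{\gamma_{ij}}\bigl(f(z)C(\phi) - \floc\sin\phi\bigr)$. The rewriting $f(\Delij)\mapsto f(z)$ follows by substituting $\Delij = \deltij/\deltru = 1 - z/\deltru$ into (\ref{eq:fcyt}) and using $\zzr = \deltru(1-\Delmi)$, $z_c = \deltru(1-\Delcr)$, which turns the argument of the logarithm into $(\zzr - z)/(\zzr - z_c)$. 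For the diffusion term I would use that the driving sheets of cells $i$ and $j$ are independent: the horizontal noise increment of $\ddd u$ is $\tilde b_j\,\ddd W_t^{(j)} - \tilde b_i\,\ddd W_t^{(i)}$ with $\tilde b_k = b_0\sqrt{G_k(\phi_{k0})}/\gamma_k$, and the difference of two independent scaled Brownian increments is, in law, a single Brownian increment $b(z)\,\ddd W_t$ with $b^2(z) = \tilde b_i^2 + \tilde b_j^2$, giving (\ref{eq:forcefbz}).

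\textbf{Geometry and main obstacle.} Finally, the relation (\ref{eq:zofphi}) comes from noting that at rupture the free balls $\mcBRif,\mcBRjf$ are externally tangent, $\dij^\text{(rup)} = \Rif + \Rjf$, so that $z = \deltru - \deltij = \Rif + \Rjf - \dij$; decomposing $\dij$ along the axis through the common intersection point of $\partial\mcBRif$ and $\partial\mcBRjf$ gives $\dij = \Rjf\cos\phjf + \Rif\cos\phi_i$, and substituting $\cos\phi_i = \sqrt{1-\eta^2\sin^2\phjf}$ from (\ref{eq:phieta}) (legitimate since $\phjf < \thm < \pi/2$, so $\cos\phjf>0$) yields the stated identity. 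I expect the principal difficulties to be conceptual rather than computational: carefully justifying the one‑dimensional reduction (that freezing $\udij$ introduces no spurious drift when projecting the Ornstein–Uhlenbeck dynamics, and that discarding the vertical noise is consistent with $\vFi+\vFj=0$), and correctly merging the two independent noise sources into one state‑dependent amplitude $b(z)$, which is an equality in distribution realized by a change of the driving Brownian motion rather than a pathwise identity. The trigonometric identity (\ref{eq:zofphi}) is then routine once the tangency interpretation $z = \Rif + \Rjf - \dij$ is recognized.
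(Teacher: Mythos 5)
Your proposal is correct and follows essentially the same route as the paper, whose own derivation is embedded in the text immediately preceding the proposition: the force balance $\vFi + \vFj = 0$ with $\Fijver = 0$, projection onto $\udij$ giving the drift $F_j^{\mathrm{(hor)}}(1/\gamma_i + 1/\gamma_j) = F_j^{\mathrm{(hor)}}/\gamma_{ij}$, combination in law of the two independent boundary noises into $b^2(z) = \tilde b_i^2 + \tilde b_j^2$, and the free-ball intersection geometry $z = \Rif + \Rjf - \dij$ together with $\sin\phif = \eta \sin\phjf$. One trivial slip: the justification of $\cos\phjf > 0$ should be $\phjf \leq \pi/2$, which follows from starlikeness and equation (\ref{eq:phis_max}), rather than the chain $\phjf < \thm < \pi/2$ --- in fact $\phjf$ generally exceeds $\thm$ and tends to $\pi/2$ as $\theta \to \thm$ --- but the conclusion you need is unaffected.
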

Note that the overlap $z$ is a monotone function of $\sin \phi$, 
which is proportional to the vertical extension 
of the overlap region as spanned by the contact arc $\Gamma_{ij}$
(shaded in figure \ref{fig:pmax}).

\subsection{Asymptotic stochastic differential equations}

Disruption of a connected pair occurs as $z \to 0$, 
so that an expansion at zero of all terms 
in Proposition \ref{prop:overlap} is justified. 
First, from equation (\ref{eq:zofphi}) we derive 
the asymptotic relation 
\begin{equation*}
  z = 
  \Rjf \frac{ 1 + \eta }{2} \sin^2 \phi \Bigl(
    1 + \mathcal{O} \bigl( \sin^2 \phi \bigr)
  \Bigr)
\end{equation*}
for $z>0$ so that
\begin{equation*}
  \phi \sim \sin \phi = 
  \sqrt{ \frac{ 2 }{ (1+\eta) \Rjf } } \cdot \sqrt{z} \cdot \Bigl(
    1 + \mathcal{O} (z)
  \Bigr).
\end{equation*}
Thus, the locomotion term of the force $F(z)$ in equation (\ref{eq:forceFz}) 
has a singularity at zero like $\sqrt{z}$. The same holds 
for the interaction term; as a surprise, the corresponding integral 
can be expanded in $\phi$ independent of the ratio $\eta = r_j / r_i$:
\begin{equation*}
  C (\varphi) = \int_0^\phi \ddd \varphi \, \Bigl( 
    2 - \varphi^2 + \mathcal O \bigl( \varphi^4 \bigr)
  \Bigr) =
  2 \cdot \sin \phi \Bigl( 1 + \mathcal O \bigl( \phi^4 \bigr) \Bigr).
\end{equation*}
The conclusion is that the horizontal pair force can be approximated as
\begin{equation*}
  F(z) = 2 \frac{ \rhon r_j }{ \gamma_{ij} } \sin \phi 
        \biggl(
          2 f(z) \Bigl( 1 + \mathcal O \bigl( z^4 \bigr) \Bigr) - \floc 
        \biggr),
\end{equation*}
where only the prefactor depends on the cell body sizes. 
The deterministic equilibrium overlap $z_* > 0$, as the zero of $F$,  
is approximately determined by the force equality
\begin{equation*}
  f \bigl( z_* \bigr) = 
  \frac{ \floc }{ 2 } \cdot 
    \Bigl( 1 + \mathcal O \bigl( z_*^2 \bigr) \Bigr).
\end{equation*}
Thus, by using the definition of $f$ in equation (\ref{eq:forcefbz}), we 
define the \emph{contact parameter} 
\begin{equation}\label{eq:contact_parameter}
  \lambda := 
  \zzr - (\zzr-z_c) \exp \biggl( \frac{ \floc }{ 2 \finth } \biggr).
\end{equation}
Let us consider the deterministic 
overdamped dynamics $\dot z = F(z)$ obtained from the 
stochastic ODE system (\ref{eq:zornstein}, \ref{eq:wornstein}) 
in the limit $T \to 0$. With $\gamma_i = \gamma_j = \tilde \gamma$ 
we prove the 
\begin{proposition}[Asymptotic contact and disrupture dynamics] 
  Given a pair of cells with body radii $r_i$ and $r_j$, 
  with small contact parameter $|\lambda|$. 
  Then there exists a unique stable equilibrium $z_* > 0$ 
  if and only if $\lambda > 0$, namely  
  $ z_* = \lambda \bigl( 1 + \mathcal O ( |\lambda|^2 ) \bigr) > 0$.
  Moreover, in the limit 
  $T \to 0$, $\beta_0 = b_0 / \tilde \gamma T = \mathrm{const.}$,
  and the corresponding stochastic differential equation (SDE)
  can be approximately written as
  \begin{equation}\label{eq:rupdyn}
    \ddd z = 
    \mu \sqrt{ z_+ } \ln \biggl(
      \frac{ \zzr - z }{ \zzr - \lambda }
    \biggr) \ddd t +
    \beta (z) \ddd W_t.
  \end{equation}
  Here $z_+ = \max (0,z)$ and 
  \begin{equation*}
    \mu = \rhon \frac{ G_{ij} }{ 2 \tilde \gamma } \cdot \finth, \qquad
    \beta (z) = 
    \beta_0 \biggl( 2 \sqrt{\mcPm} \Bigl( 
        ( r_i + r_j ) \pi - G_{ij} \cdot \sqrt z 
    \Bigr) \biggr)^{1/2},
  \end{equation*}
  with 
  $G_{ij} = \bigl( 8 \sqrt{\mcPm} \cdot r_i r_j / ( r_i + r_j ) \bigr)^{1/2}$.
\end{proposition}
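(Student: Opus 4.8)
The plan is to collapse the two-dimensional system (\ref{eq:zornstein})--(\ref{eq:wornstein}) to a scalar diffusion for the overlap $z$ by adiabatically eliminating the velocity $u$ in the overdamped limit $T\to 0$, and then to feed in the small-$z$ expansions established just above the statement. Multiplying (\ref{eq:wornstein}) by $T$ gives $T\,\ddd u = (F(z)-u)\,\ddd t + T\,b(z)\,\ddd W_t$ together with $\ddd z = u\,\ddd t$, which I read as a kinetic Langevin equation of ``mass'' $T$ and unit friction. In the Smoluchowski limit the fast variable $u$ equilibrates to $F(z)$ in the drift, while the velocity noise is transferred to the position through the overdamped rescaling, leaving $\ddd z = F(z)\,\ddd t + \beta(z)\,\ddd W_t$ with $\beta$ proportional to $b$ and held finite by keeping $\beta_0$ constant. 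Since the whole reduction is only asserted ``approximately'', I would carry out this elimination at the level of matching the drift and the stationary velocity variance, not as a convergence-of-laws theorem.

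For the deterministic equilibrium I would insert $C(\phi)=2\sin\phi\,(1+\mathcal O(\phi^4))$ from (\ref{eq:touch_vFiji}) and the expansion $\sin\phi\sim\sqrt{2/((1+\eta)\Rjf)}\,\sqrt z$ (from (\ref{eq:zofphi})) into (\ref{eq:forceFz}), factoring $F(z)=g(z)\,(2f(z)-\floc)\,(1+\mathcal O(z^2))$ with a strictly positive prefactor $g(z)\propto\sqrt z$. On the physical domain $z>0$ the zeros of $F$ thus coincide with those of $2f(z)-\floc$; strict monotonicity of $f(z)=\finth\ln((\zzr-z)/(\zzr-z_c))$ yields at most one such zero, and solving $f(z_*)=\floc/2$ exactly reproduces the contact parameter (\ref{eq:contact_parameter}), so $z_*=\lambda$ to leading order and $z_*=\lambda(1+\mathcal O(|\lambda|^2))$ once the $\mathcal O(z_*^2)$ correction is resolved by the implicit function theorem. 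The equivalence $z_*>0\Leftrightarrow\lambda>0$ follows because $\lambda>0$ is precisely the condition $f(0)>\floc/2$ that locates the root in $(0,z_c)$; stability is then immediate from $F'(z_*)=g(z_*)\cdot 2f'(z_*)<0$, the prefactor-derivative term dropping out where $2f(z_*)-\floc=0$ and $f'<0$.

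To land on the stated SDE I would combine the two expansions. The identity $\zzr-\lambda=(\zzr-z_c)\exp(\floc/2\finth)$ turns $2f(z)-\floc$ into $2\finth\ln((\zzr-z)/(\zzr-\lambda))$, so the drift becomes $\mu\,\sqrt z\,\ln((\zzr-z)/(\zzr-\lambda))$ with $\mu$ absorbing the constant prefactor, and replacing $z$ by $z_+=\max(0,z)$ records that the reduced motion lives only on $z>0$ (cells in contact) where $\sqrt z$ is real and vanishes at disruption. For the noise I would expand $G_i+G_j=2\smcPm\,((r_i+r_j)\pi-(r_i\phif+r_j\phjf))$ using $\phi_{k0}\sim\sin\phi_{k0}$ and $\sin\phif=\eta\sin\phjf$, whereupon $r_i\phif+r_j\phjf$ collapses to a single $\sqrt z$ term and produces $\beta(z)=\beta_0\,(2\smcPm\,((r_i+r_j)\pi-G_{ij}\sqrt z))^{1/2}$.

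Beyond bookkeeping the $\mcPm$- and $\eta$-dependent constants, the hard part will be justifying the overdamped elimination itself: passing from (\ref{eq:zornstein})--(\ref{eq:wornstein}) to a scalar SDE is a singular perturbation, and getting the noise prefactor right (it is fixed by the velocity autocorrelation, not merely by the stationary variance) is the delicate point I would only sketch. Moreover both the drift and the diffusion inherit a $\sqrt z$ singularity at the disruption boundary $z=0$, so the reduced equation is non-Lipschitz there; existence, uniqueness, and the precise role of the $z_+$-truncation near that boundary need the care appropriate to square-root diffusions rather than the standard Lipschitz theory.
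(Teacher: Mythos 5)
Your proposal is correct and follows essentially the same route as the paper: the paper's own argument consists precisely of the small-$z$ expansions $C(\phi) = 2\sin\phi\,(1+\mathcal O(\phi^4))$ and $\sin\phi \sim \sqrt{z}$, the factored force $F(z) \propto \sqrt z\,(2f(z)-\floc)$ whose zero $f(z_*)=\floc/2\,(1+\mathcal O(z_*^2))$ defines the contact parameter $\lambda$, and the overdamped limit $T\to 0$ with $\beta_0$ held constant. You merely make explicit a few steps the paper leaves implicit — the identity $2f(z)-\floc = 2\finth\ln\bigl((\zzr-z)/(\zzr-\lambda)\bigr)$, stability via $F'(z_*)<0$, the collapse of $r_i\phif+r_j\phjf$ to a single $\sqrt z$ term in the noise, and the honest caveats about the singular-perturbation reduction and the non-Lipschitz behavior at $z=0$ — all of which are consistent with the paper's derivation.
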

Note that the $\log$ term in 
equation (\ref{eq:rupdyn}) can be approximated 
by $\bigl( \lambda - z + \mathcal O (z^2) \bigr) / \zzr$, 
and $z^* = \lambda$ is a measure of the mean cell-cell overlap.

As soon as the contact between cells is lost, $z<0$, their 
cell center distance $d_{ij} = \sqrt{\mcPm} (r_i+r_j) - z$ 
would perform a pure Brownian motion for $T=0$ or, 
for $T>0$, a persistent random walk with 
the standard recurrence probability to hit 
the touching state $z=0$ again. 
However, for situations of tissue cells moving 
on two-dimensional substrates, the production of adhesive fibers 
(as fibronectin) or remnants of plasma membrane plus adhesion molecules 
in the wake of a migrating cell would induce a positive bias of the 
locomotion force towards the other cell \cite{Kirfel2003}, 
which could be assumed as 
proportional to the cell boundary distance $-z$, 
at least for small distances. 
Therefore, and for the aim of exploring the resulting 
stationary process, instead of equations (\ref{eq:rupdyn}) 
we consider the extended SDE model
\begin{equation}\label{eq:sde_extended}
  \ddd z = \tilde F (z) \ddd t + \beta (z) \ddd W_t
\end{equation}
with
\begin{equation*}
  \tilde F (z) = \left\{
  \begin{aligned}
    & \mu \sqrt z \ln \biggl( \frac{\zzr - z}{\zzr - \lambda} \biggr)
      \quad &\text{for} \; z \geq 0 \\
    & - \nu z \quad &\text{for} \; z \leq 0
  \end{aligned}
  \right.
\end{equation*}
where we introduce an additional bias parameter $\nu > 0$ 
describing an indirect attraction between separated cells.

\subsection{Analysis of the stationary contact problem}

By solving the stationary Kolmogorov forward equation, we compute the 
approximate \emph{stationary probability distribution} 
for the overlap $z$ as
\begin{equation}\label{eq:distri}
  \begin{split}
    p(z) =& 
    p_0 \exp \biggl( 
      \int_0^z \ddd z \; 
        \frac{ \tilde F (z) }{ \beta^2 (z) }
    \biggr) \\
    =& \left\{
      \begin{aligned}
        &\frac{\mu}{ \psi \zzr } z^\frac{3}{2} \Bigl(
          \frac{2}{3} \lambda + \frac{\lambda}{2 \chi} z^\frac{1}{2} -
          \frac{2}{5} z 
        \Bigr) \quad & \text{for} \; z \geq 0 \\
        &- \frac{\nu}{2 \psi} |z|^2 \quad & \text{for} \; z \leq 0
      \end{aligned}
    \right.
  \end{split}
\end{equation}
with a unique normalization factor $p_0$ and additional lumped parameters 
that arise from expanding the singular noise variance 
$\beta^2 (z) = 2 \psi \bigl( 
1 - \sqrt{ z(t) } / \chi + \mathcal O (z^2(t) \bigr)$:
\begin{align*}
  \psi =& \sqrt{\mcPm} \beta_0^2 \pi (r_i+r_j)^2 \\
  \chi^2 =& 8 \pi^2  \sqrt{\mcPm} (r_i+r_j)^2 \Bigl( 
    \frac{1}{r_i} + \frac{1}{r_j} 
  \Bigr).
\end{align*}
In figure \ref{fig:distriP}(b) and (e), the probability distribution 
\begin{figure}[phtb]
  \centering
  \includegraphics[width=1.0\textwidth]{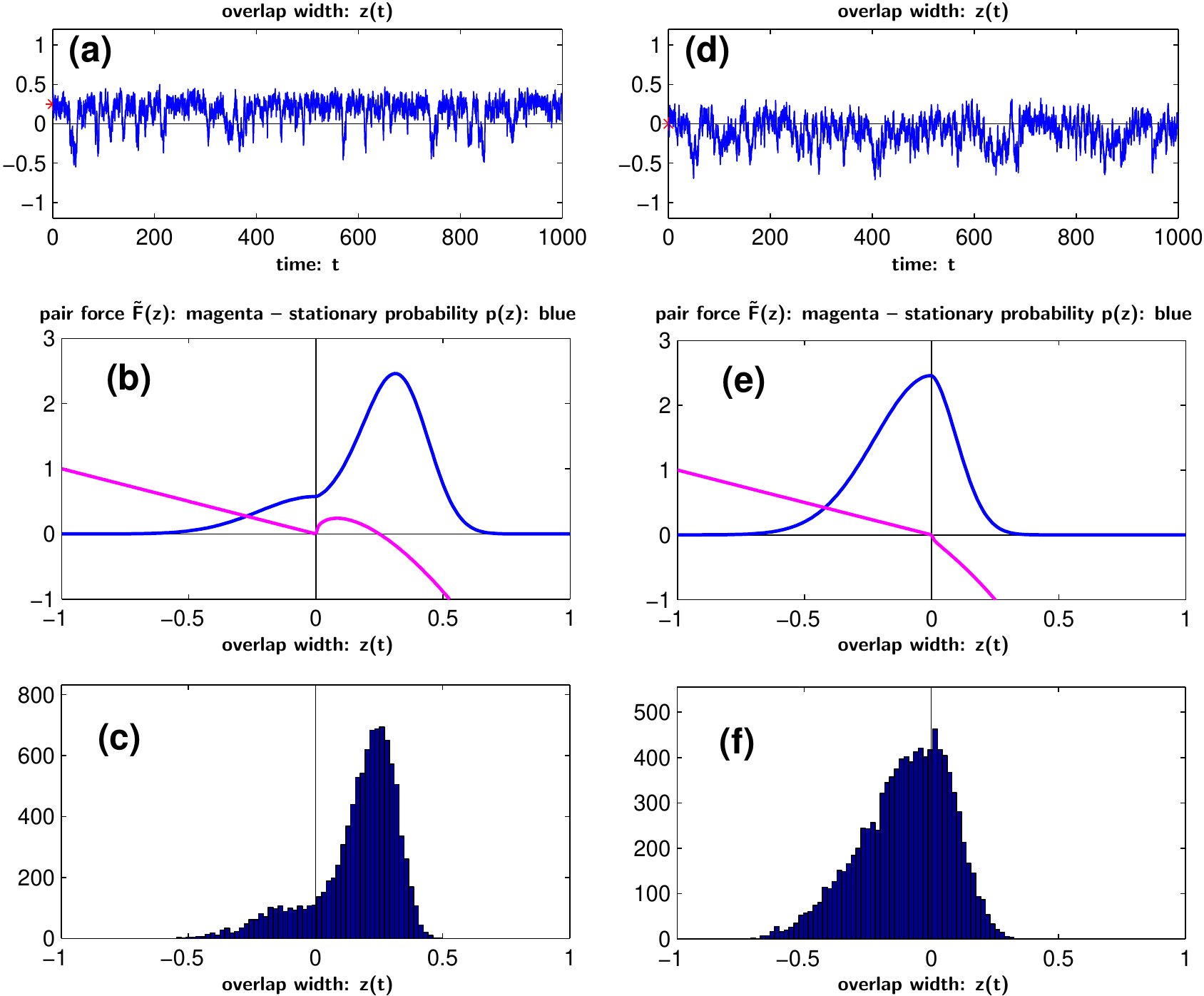}
  \caption{Analytical distribution (bold/blue in b,e), the corresponding 
           simulation histograms (c,f), and the time series 
           (a,d) of the overlap width $z$.
           This quantity emerges from the 
           dynamics of equation (\ref{eq:sde_extended}) for 
           contact parameter $\lambda > 0$ (a-c) and 
           $\lambda < 0$ (d-f), respectively, 
           with force $\tilde F (z)$ (grey/magenta in b,e). 
           Time $t$ is given in units of $\text{min}$, 
           overlap $z$ in units of $\mu \mathrm m$, 
           and force $F$ in $\mathrm{nN}$. 
           Parameters used 
           are $\beta_0 = 0.1$, $\psi = \beta_0^2$, 
           $\zzr = 0.1$ and $z_c = 0.5$.
  }\label{fig:distriP}
\end{figure}
$p(z)$ according to equation (\ref{eq:distri}) is plotted for 
two special parameter sets together with the force 
function $\tilde F (z)$, whereas in figure 
\ref{fig:distriP}(c) and (f) histograms for the corresponding 
numerical realizations of the stochastic differential equation 
(\ref{eq:sde_extended}) are shown. 
For the contact parameter $\lambda = 0.25 > 0$, 
see definition \ref{eq:contact_parameter}, there 
appears a skew-shaped modified Gauss distribution around the 
unique center $z_* = \lambda$ with a certain probability 
$P_\text{off} = \int_{-\infty}^0 \ddd z \, p(z) \approx 0.15$ 
for the cells to be separated. In contrast, for $\lambda = -0.15 < 0$, 
the standard Gauss distribution with variance $\psi / \nu$ for 
positive separation distances $-z$ is cut off by a decreasing 
distribution of the positive overlap $z$ with mean value 
$Z_\text{cont} = \int_0^\infty \ddd z \, z p(z) \approx 0.08 \ll z_*$ 
and a certain contact probability 
$P_\text{cont} = 1 - P_\text{off} \approx 0.32$.

Figure \ref{fig:distriPoff} shows the plot of the separation probability 
$P_\text{off}$ over
\begin{figure}[phtb]
  \centering
  \includegraphics[width=0.8\textwidth]{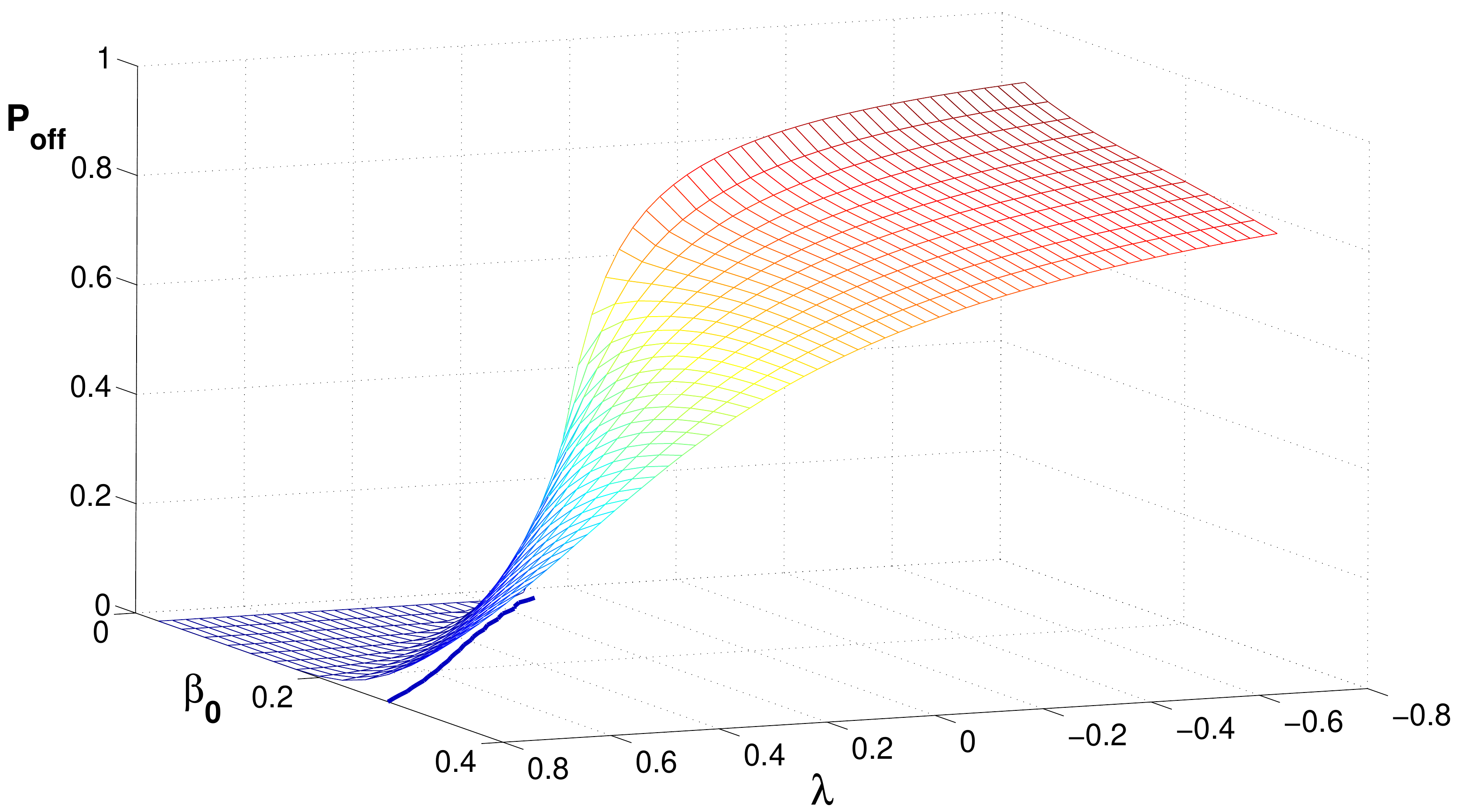}
  \caption{Separation probability 
           $P_\text{off} = \int_{-\infty}^0 \ddd z \, p(z)$ 
           according to (\ref{eq:distri})
           plotted over the parameter plane of 
           contact ($\lambda$) and noise ($\beta_0$). 
           In this plane the contour curve $P_\text{off} = 0.05$ 
           is drawn in blue.
           The other parameters are the same as in figure 
           \ref{fig:distriP}.
           More details are given in the text.
  }\label{fig:distriPoff}
\end{figure}
the contact and noise parameters $\lambda$ and $\beta_0$ 
together with the contour curve for the critical value 
$\lambda_{0.05} = \lambda (\beta_0)$, so that the separation 
probability $P_\text{off}$ is less than $5 \%$ for 
contact parameters $\lambda > \lambda_{0.05}$, \ie for 
\begin{equation*}
  \frac{ 2 \finth }{\floc} > 
  \log \biggl(
    \frac{ \Delcr - \Delmi }{ 1 - \Delmi - \lambda_{0.05} / \deltru }
  \biggr).
\end{equation*}
For the two cases of contact parameters $\lambda > 0$ and $\lambda < 0$, 
the corresponding time series of the overlap distance $z$ 
as plotted in figure \ref{fig:distriP}(a) and (d), respectively, 
reveal a clearly distinct temporal separation and contact behavior.

In the first case, with lower locomotion force parameter $\floc$ 
relative to the interaction parameter $\finth$, the contact state 
persists for longer time intervals. Thereby, the overlap fluctuates 
around the equilibrium value $z_* = \lambda$, with mean 
contact duration 
$\tau_\text{cont} \approx \exp( 2.8 ) \, \text{min} \approx 16 \, \text{min}$.
This proper contact time   
is evaluated from the maximum of the right-side $\tau$-hump in 
the logarithmic histogram of figure \ref{fig:distritau}(a). 
\begin{figure}[phtb]
  \centering 
  \includegraphics[width=0.495\textwidth]{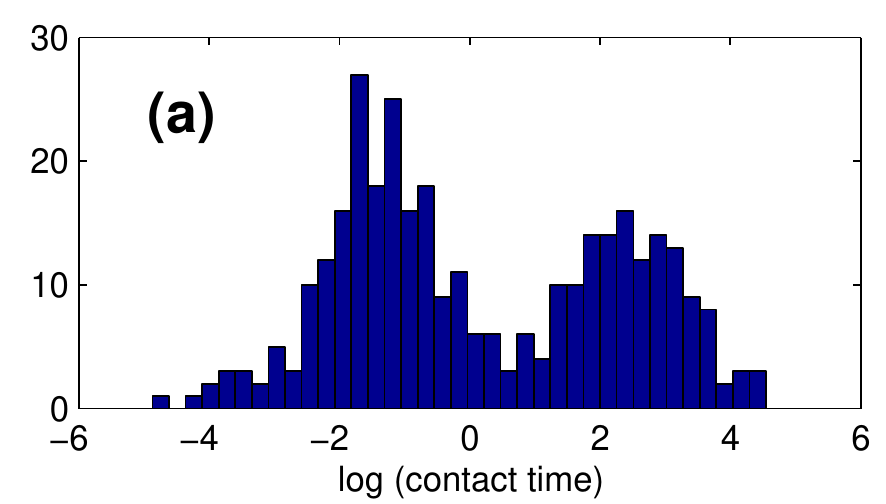} 
  \includegraphics[width=0.495\textwidth]{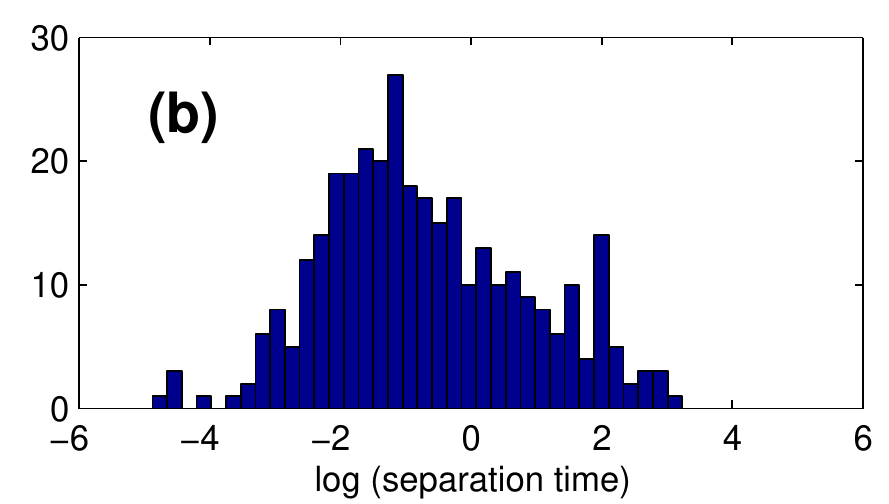}
  \caption{Logarithmically scaled duration histogram of contact 
           intervals (a) and separation intervals (b) for a longer 
           time series ($3000 \, \text{min}$) of the situation in figure 
           \ref{fig:distriP}(a).
  }\label{fig:distritau}.
\end{figure}
The contact states are interrupted by periods of 
faster flickering, as can be seen from the change between 
on and off with a mean duration of $\tau_\text{flicker} \approx
\exp (-1.5) \approx 15 \mathrm s$, which is visible at
the joint lower maximum in both figures \ref{fig:distritau}(a) and (b). 
Otherwise, an intermediate separation of the order of 1-5 min
occurs.

Yet for the second case, $\lambda < 0$, with relatively higher 
locomotion force parameter $\floc$, we observe in figure 
\ref{fig:distriP}(d) dominating periods of flickering 
around the steady state of touching $z_* = 0$, which 
now is a stable deterministic equilibrium, even super-stable for 
$z>0$ with convergence $z(t) \sim ( t_* - t )^{2/3}$ in 
finite time. Longer periods of separation are also evident but 
their distribution does not 
differ much from the situation for $\lambda > 0$ 
(similar to the histogram in figure \ref{fig:distritau}(b), not 
shown here).

\section{Tissue simulations}\label{sec:sim}

After these analytic considerations in the case of cell pair formation
we return to the full equations of motion (\ref{eq:dynamics}). 
Since both time 
and length scale of cell motility processes are well known, the 
only remaining free scaling figure is the magnitude of cell forces. 
In accordance to \cite{Ananthakrishnan2007}, 
here we assume that a typical bundle of 
several actin filaments can exert 
a force of approximately $10 \, \mathrm{pN}$. 
A single cell can, 
with the overall filament density parameter $\tilde \rho$ 
and the force prefactors $\floc, \finth$ 
as in table \ref{tab:parms}, 
reach an effective traction of $\mathcal O (1000 \, \mathrm{pN})$
\begin{table}[htbp]
  \centering 
  \renewcommand{\arraystretch}{1.2}
  \begin{tabular}{|c|c|c|}
    \hline
      $\sqrt{\mcPm} = 3$ & 
      $\tilde \rho  = 9.55 / \mu \mathrm m$ &
      $\tilde \gamma = 2.5 \cdot 10^4 \, \mathrm{pN \, s} / \mu \mathrm m$ \\ 
    \hline
      $\ddd t = 2 \, \mathrm s$ &
      $\Delmi = 0.1$ & 
      $\finth = 60 \, \mathrm{pN}$ \\
      $T      = 120 \, \mathrm s$ & 
      $\Delcr = 0.2 \dots 0.7$ & 
      $\floc  = 10 \dots 20 \, \mathrm{pN}$ \\
    \hline
      \multicolumn{2}{|c|}{
        $b_0 \cdot (\mcPm)^{1/4} = 
          8.31 \, \mathrm{pN} / \sqrt{ \mu \mathrm m \, \mathrm s }$ 
                          } &
      $\alpha = 0 \dots 0.17$ \\
    \hline
  \end{tabular}
  \renewcommand{\arraystretch}{1.0}
  \caption{Simulation and model parameters as described 
           previously. Unless indicated otherwise, all simulations 
           have been performed with this parameter set.
  }\label{tab:parms}
\end{table}
from a force as given by equation (\ref{eq:interaction_force}).
The drag coefficient $\tilde \gamma$ then naturally follows from 
experimentally observed cell velocities \cite{Zahm1997,Friedl1998,Moehl2005}. 
As explained before, $T$ is the persistence time of the intracellular 
cytoskeletal reorganization, and $\mcPm$ determines the relative 
size of the lamella region around the cell body $\mcBri$. 
Moreover, the scaled interaction distances $\Delmi, \Delcr$ as 
defined in equation (\ref{eq:fcyt}) determine the sign 
and scaling of the cell pair interaction force $\vFiji$. 
The relative strength of the vertical component of $\vFiji$ 
in equation (\ref{eq:vffij}) is given by $\alpha$.
Finally, the stochastic perturbation parameter 
$b_0$ in equation (\ref{eq:Wdiscr}) contains  
a factor $(\mcPm)^{-1/4}$ in order to obtain the same amount of 
perturbation for cells with equal body radii $r_i$. 
Since we look for robust features in the simulations, 
$b_0$ was chosen fairly high.

\subsection{Emergence of tissue shape and multiple stable states}

Consider a simple proto-tissue of seven cells as shown in 
\begin{figure}[bthp]
  \centering
  \includegraphics[width=0.98\textwidth]{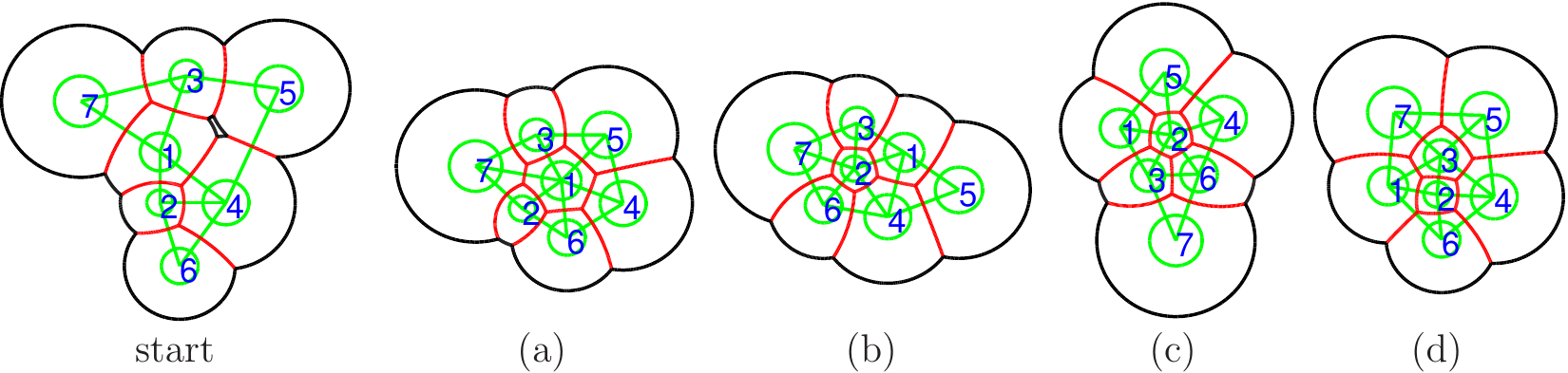} \\
  \vskip 0.03\textwidth
  \includegraphics[width=0.50\textwidth]{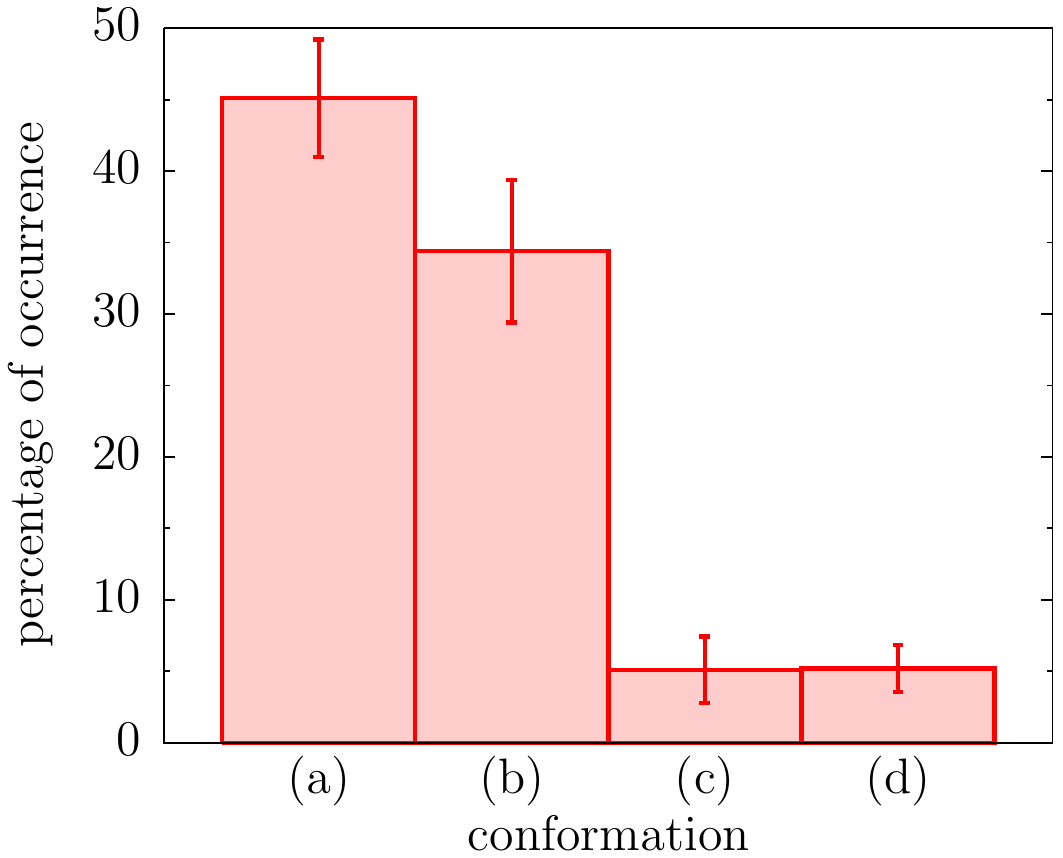}
    \caption{Different tissue conformations (a-d) 
             evolving from the configuration `start' 
             after a simulation time of $8 \, \mathrm h$; 
             here 
             $r_\text{max} = 2.0 \, \mu \mathrm m$ (cell $7$), 
             $r_\text{min} = 1.0 \, \mu \mathrm m$ (cell $2$),
             and parameters 
             $\floc = 20 \, \mathrm{pN}, \; 
             \alpha = 0, \;
             \Delcr = 0.25$.
             The percentage of occurrence 
             of a particular conformation then was computed, 
             and the error bars were obtained by a simple 
             bootstrap method.
             Other features are 
             further explained in the text.
            }\label{fig:tunnel}
\end{figure}
figure \ref{fig:tunnel} `start'. 
Using the parameters 
$\floc = 20 \, \mathrm{pN}, \alpha = 0, \Delcr = 0.25$, 
a series of $1000$ simulations has been performed. 
After a simulation time of $8 \mathrm h$, 
the emerging tissue conformations  
as distinguished by Delaunay network topology 
have been recorded. 
In the course of these $8 \mathrm h$, 
significant changes appear within the tissue, 
and apparently several equilibrium conformations emerge.
For the four most prevalent conformations 
the percentage of occurrence 
is displayed in figure \ref{fig:tunnel}. 
One observes two rather globular shapes (a), (d), 
where either the big cell $1$ or the two small cells $2,3$ 
are engulfed by the others, respectively. 
Furthermore, there are two more elongated shapes (b), (c), 
where only the single small cell $2$ 
is completely surrounded by other cells. 
Being distinguished by topology, 
(b) and (c) are in fact quite close in shape, 
despite of their rotational variation. 

From the high occurrence of the topological conformations (a), (b) 
one might conclude that these two conformations are the most stable ones. 
Thus, and 
to clarify the interrelations between the 
conformations (a-d), we investigate (a) and (b) in longer simulations. 
To this end, by starting from the configurations (a) and (b) 
(see figure \ref{fig:tunnel}), both tissues have been evolved 
for $40$ additional hours of simulation time. 
In order to characterize the shape of tissue
with respect to global and elongated shape, 
here we observe \emph{tissue size}, \ie the maximum diameter, 
and \emph{tissue circularity} 
\begin{equation*}
  \Omega = \frac{ 2 \sqrt{ \pi A_\text{tiss} } }{ \sum_i |\Gamma_{i0}| } 
         \leq 1, 
\end{equation*}
where $A_\text{tiss}$ is the total area of the tissue. 
Note that in connected tissues $\Omega = 1$ would be attained 
for a purely circular globe. 
Other observables are possible but less indicative 
in this context. 
\begin{figure}[tbh]
  \centering
  \includegraphics[width=0.98\textwidth]{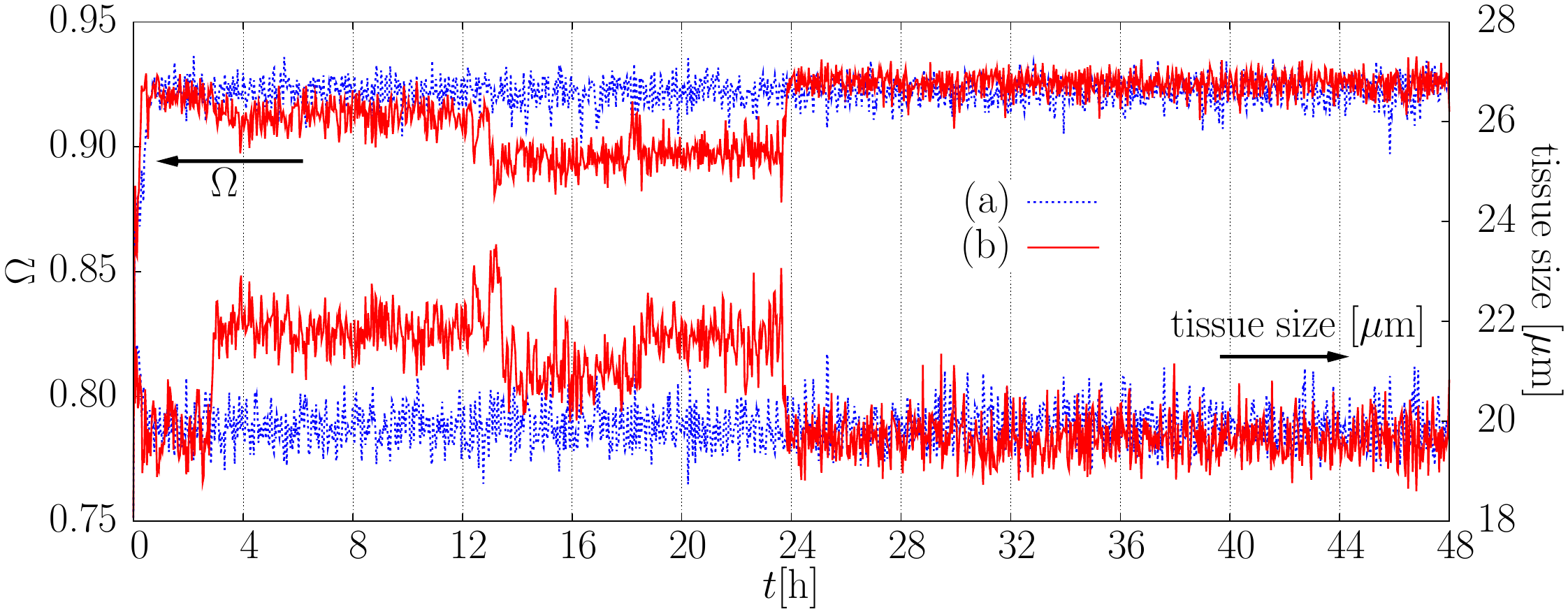}
  \caption{Two time series (blue, red) for the tissue 
           in figure \ref{fig:tunnel} `start'. 
           After $8 \, \mathrm h$ the two distinct 
           topological conformations 
           from figure \ref{fig:tunnel}(a) (blue) 
           and \ref{fig:tunnel}(b) (red) have emerged. 
           These topological conformations are characterized by 
           distinct tissue size (right axis) and 
           circularity $\Omega$ (left axis). 
           While (a) is apparently a stable topological conformation 
           that does not change even for strong stochastic perturbations, 
           (b) relaxes into a topological conformation of globular shape  
           via several intermediate steps, see supplementary 
\href{http://www.theobio.uni-bonn.de/people/mab/sup01/index.html}{\texttt{mova.avi}}, 
\href{http://www.theobio.uni-bonn.de/people/mab/sup01/index.html}{\texttt{movb.avi}}.
          }\label{fig:tunneltime}
\end{figure}
In figure \ref{fig:tunneltime}, 
the two time series (blue, red) soon reach the different conformations 
of figure \ref{fig:tunnel}(a) and (b) respectively, 
at times around $t \sim 8 \mathrm h$.
While the circularity $\Omega$ is 
only slightly different in the two cases, 
the tissue size is clearly higher for the elongated conformation (b).
Apart from stochastic fluctuations and an initial 
equilibration phase for $t < 1 \, \mathrm h$, 
both observables attain a constant value for time series (a). 
In contrast, for time series (b) there appear distinct states between 
$t \sim 2.5 \, \mathrm h$ and $t \sim 24 \, \mathrm h$. 
Indeed these observations are reflected by the actual evolution 
of the tissue. While the topology of 
the tissue does not change after $t = 2 \, \mathrm h$ for time series (a) 
(see supplementary \href{http://www.theobio.uni-bonn.de/people/mab/sup01/index.html}{\texttt{mova.avi}}), 
tissue (b) 
(\href{http://www.theobio.uni-bonn.de/people/mab/sup01/index.html}{\texttt{movb.avi}})
goes through several different conformational states. 
At $t=13.5 \, \mathrm h$ it attains the same topology as conformation (c), 
identifying (c) as a transient state
(\href{http://www.theobio.uni-bonn.de/people/mab/sup01/index.html}{\texttt{movc.avi}}). 
Afterwards, approximately at 
$t \sim 18.5 \, \mathrm h$, cell $7$ establishes contact with cells $1,3$, 
so that the tissue shape is similar to (c). Finally, shortly before 
$t = 24 \, \mathrm h$, the tissue reaches its final conformation 
similar to (d) except for the order of the marginal cells. 
Conformation (d) emerges in a similar manner as (a), 
however instead of cells $1,5$ initially cells $3,4$ form 
a neighbor pair, quickly leading to the stable final arrangement 
in less than $0.5 \, \mathrm h$ (see supplementary 
\href{http://www.theobio.uni-bonn.de/people/mab/sup01/index.html}{\texttt{movd.avi}}).
Moreover, the time series 
of conformation (b) in figure \ref{fig:tunneltime} 
suggests, that during $t = 0.5 \dots 3 \, \mathrm h$ the tissue 
already attains a shape of similar compactness and stability 
as in figure \ref{fig:tunnel}(a). Nevertheless, the Delaunay mesh (green lines) 
is not convex there 
(\href{http://www.theobio.uni-bonn.de/people/mab/sup01/index.html}{\texttt{movb.avi}}),
which explains this surprising instability.

It appears, that the stability of a tissue is related 
to its globular shape. This is not a surprise, 
since the stochastic forces in equation (\ref{eq:free_boundary_force}) 
are defined only on free cell boundaries $\Gamma_{i0}$, 
and therefore act only on marginal cells. 
Thus, by minimizing the extent of all $\Gamma_{i0}$,
a maximal circularity minimizes stochastic perturbations, 
which enhances the stability of the tissue. 
Additionally, for a tissue to change its topology, 
its cells have to overcome barriers as imposed by the
other cells. For example, cell $3$ has to displace 
cell $1$ in   
\href{http://www.theobio.uni-bonn.de/people/mab/sup01/index.html}{\texttt{movb.avi}} 
at $t \approx 3 \, \mathrm h$ in order to make contact with $2$. 
Depending on the particular configuration, the severity of 
these barriers might range from prohibitive to practically 
non-existent. 
Influenced by the strength of the stochastic 
interactions, these barriers then determine the time 
scale of further relaxation to equilibrium. 
In this sense, the notion of equilibrium is directly 
related to an inherent time scale. 
According to the 
previous evolution of the tissue, there may be 
several stable topological conformations for a given time scale.

\subsection{Stability of tissue formation}

In order to explore the ramifications of piecewise spherical cells 
within our model framework, we study the influence of 
$\mcPm$ and $\Delcr$ on tissue formation.
To this end, a simulation has been performed starting 
from an exemplary configuration as in figure \ref{fig:vor_closure} 
with $\Delcr = 0.3, \sqrt{\mcPm} = 3,\alpha = 0.17$ 
and $\floc = 10 \, \mathrm{pN}$. 
After $8 \, \mathrm h$, either $\Delcr$ or $\sqrt{\mcPm}$ was 
modified to a nearby parameter position as indicated 
in figure \ref{fig:pmaxtyp}, 
\begin{figure}[tbhp]
  \centering
  \includegraphics[width=0.98\textwidth]{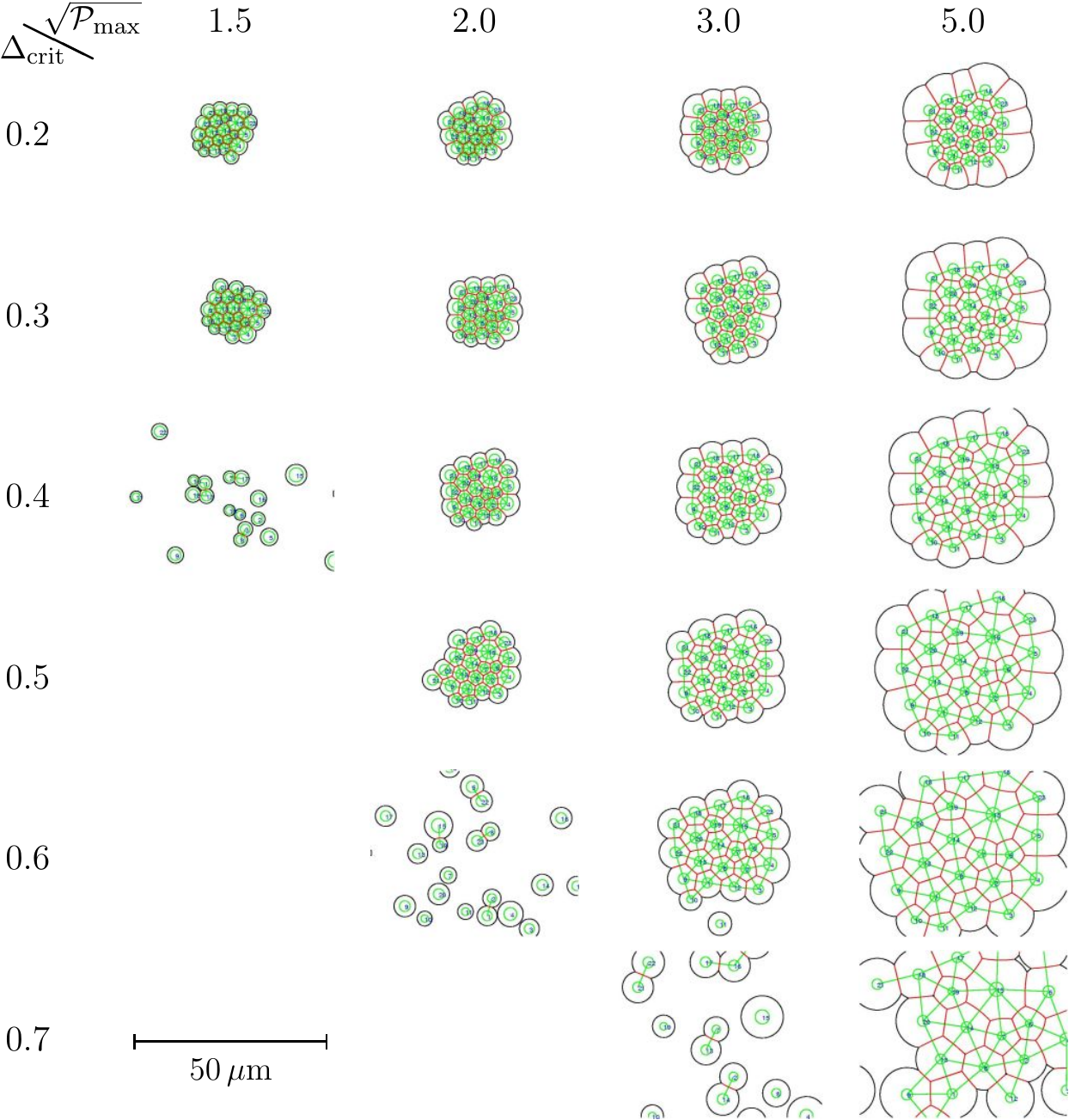}
  \caption{Stability of tissue for various values of the parameters 
           $\sqrt{\mcPm}$ and $\Delcr$, where $\floc = 10 \, \mathrm{pN}$, 
           and $\alpha = 0.17$.
           When increased, both parameters 
           lead to an increased tissue size. For sufficiently large 
           $\Delcr$, the tissue eventually dissociates. 
           The extremal cell body radii are 
           $r_\mathrm{min} = 0.9 \, \mu \mathrm m$ and 
           $r_\mathrm{max} = 1.7 \, \mu \mathrm m$ 
           in all simulations presented in this figure.
  }\label{fig:pmaxtyp}
\end{figure}
and the simulation was continued for further $8 \, \mathrm h$. 
This procedure was repeated until the whole panel in 
\ref{fig:pmaxtyp} was filled with the final tissue configurations.

For fixed $\{ r_i \}$, the overall size of the tissue is 
determined by both $\sqrt{\mcPm}$, defining the free cell 
radius $\Rif$ in units of $r_i$, and $\Delcr$, presetting the 
equilibrium cell-cell body distance in units of $\deltru$, 
see equations (\ref{eq:Rif}), (\ref{eq:fcyt}) 
and (\ref{eq:deltru}). Correspondingly, in 
figure \ref{fig:pmaxtyp}, the overall tissue extension 
increases from left to right and from top to bottom. 
Furthermore, for given $\Delcr$, tissues with higher $\sqrt{\mcPm}$ 
exhibit a rather compact, almost quadratic shape. 
We speculate that this is due to spontaneous formation of 
distinct protrusions arising from stochastic perturbations 
and leading to an increase of locomotion at the corners. 
In contrast, tissues with lower $\sqrt{\mcPm}$ 
feature more irregular margins. 
Similarly, for given $\sqrt{\mcPm}$, 
larger values of $\Delcr$ yield more irregularity, 
most pronounced directly before dissociation of the tissue.

Apparently, the emerging interaction forces 
are sufficiently strong for tissue aggregation only if there is  
enough space for the adaptation of neighboring cell lamellae. 
This space serves as a cushion for accommodating near-range 
repulsion from multiple neighbor cells and at the same time 
poses a resistance to stochastic perturbations 
by mid-range neighbor cell attraction, see figure \ref{fig:deltfloc}. 
Otherwise the tissue dissociates, leading to isolated 
cells exclusively driven by stochastic perturbations. 
In order to quantify these findings, consider the 
relative lamella width $\lamwid$ and the dimensionless   
cell overlap $\covl$ defined by 
\begin{equation*}
  \lamwid = \frac{ \smcPm - 1 }{ \smcPm } < 1, \qquad \qquad
  \covl = 1 - \Delcr < 1.
\end{equation*}
Inserting the marginal values $\smcPm$ and $\Delcr$ of 
those tissues in figure \ref{fig:pmaxtyp} that are not 
yet dissociated, one recognizes that the product 
$\covl \cdot \lamwid =: \sprod$ is approximately constant, 
with $\sprod \approx 0.24$, 
\begin{figure}
  \centering
  \includegraphics[width=0.60\textwidth]{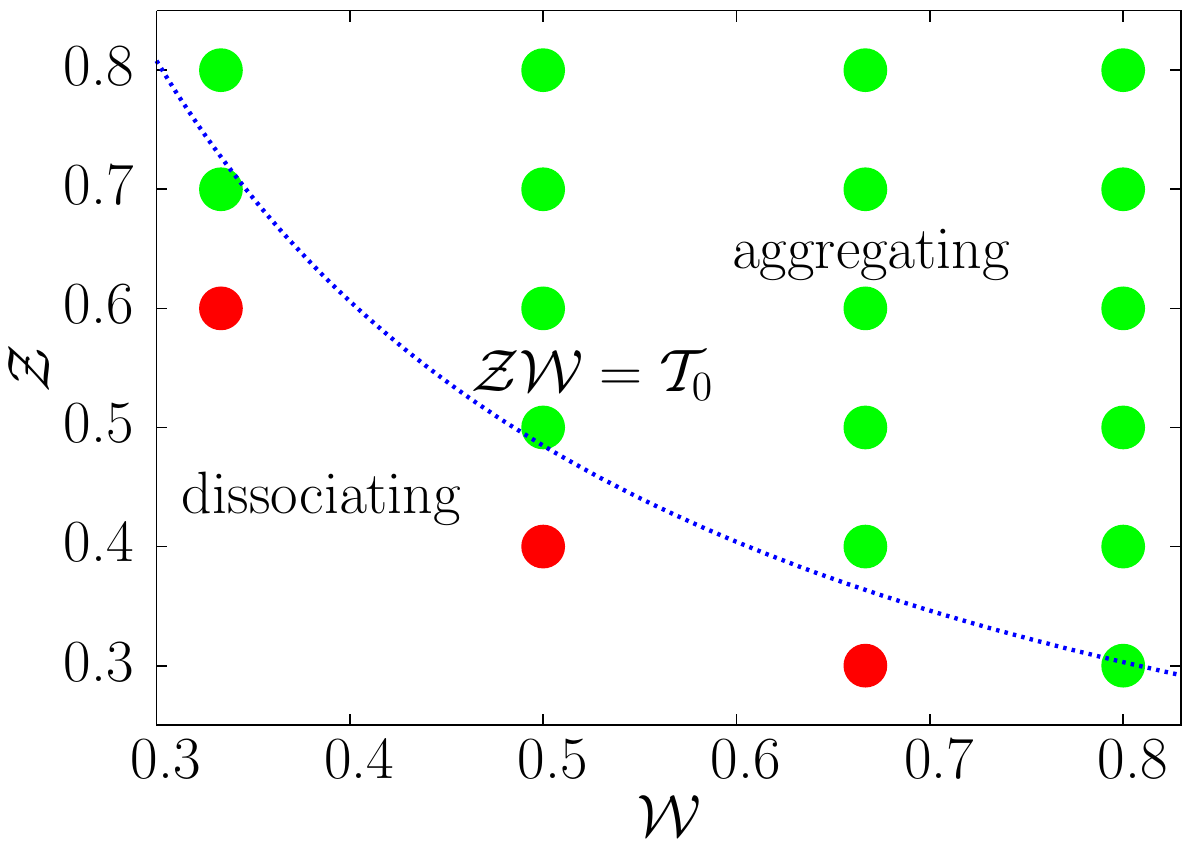} 
    \caption{Tissue aggregation and dissociation 
             depending on the relative lamella width $\lamwid$ 
             and the dimensionless cell overlap $\covl$. 
             The line 
             $\covl \cdot \lamwid = \sprod \approx 0.24 \pm 3\% $ 
             separates associating from dissociating tissues 
             and was determined from a linear regression
             against $\covl = \sprod / \lamwid$. See also 
             the explanations in the text.
            }\label{fig:sprod}
\end{figure}
see figure \ref{fig:sprod}. 
This $\sprod$ can be identified as a threshold value 
guaranteeing tissue coherence under the condition 
\begin{equation}\label{eq:condsprod}
  \covl \cdot \lamwid \geq \sprod, 
\end{equation}
where $\sprod$ eventually depends on the other parameters, 
which were fixed here.
This confirms that for tissue aggregation to occur, 
$\smcPm$ and with it the relative lamella width $\lamwid$ 
must be sufficiently large.

On the other hand, we have established this result 
under the tissue homogeneity condition (\ref{eq:pmax_max}) 
guaranteeing starlikeness of cells, which now can be rewritten 
as 
\begin{equation}\label{eq:starlike2}
  \covl_\text{max} \cdot \lamwid = 1 - \frac{1}{\qnb}, 
\end{equation}
with $\covl_\text{max} := 1 - \Delmi$ defining the 
maximal dimensionless overlap. 
Since $\lamwid < 1$ by construction, 
inequality (\ref{eq:starlike2}) always holds 
for very high cell size homogeneities $\qnb \geq 1/\Delmi$. 
However, for lower $\qnb$ there is an upper bound on $\lamwid$ 
restricting the available space for the cell lamellae. 
If, in addition, starlikeness of cells is enforced 
for all possible neighborhood constellations, 
then $\qnb$ has to be replaced by $Q$, 
see equation (\ref{eq:pmax_max1}). 
In this way the relations (\ref{eq:condsprod}) and (\ref{eq:starlike2}) 
lead to the sufficient condition for tissue coherence 
\begin{equation}\label{eq:starlike3}
  \frac{ \sprod }{ 1 - \Delcr } \leq 
  \lamwid = 1 - \frac{ 1 }{ \smcPm } \leq
  \frac{ 2 }{ (1-\Delmi)(1+r_\text{max}/r_\text{min}) }.
\end{equation}
From these estimates we finally conclude that 
for given model force parameters 
($\Delmi$, $\Delcr$, $\finth$, $\alpha$, $\floc$, $b_0$) 
the 
formation of integer tissue aggregates 
with overall starlike cells is guaranteed within a certain 
finite range of the free cell size parameter $\smcPm$,  
where the upper bound decreases with an increasing ratio 
$r_\text{max} / r_\text{min}$ of extremal cell body radii.
Within the limits of inequality (\ref{eq:starlike3}) 
the lamellae regions are wide enough 
to perform the necessary deformations by 
adapting to the surrounding neighbors through shape changes. 
Thus, nature's freedom in developing aggregating tissues 
may be constrained by a tradeoff between the relative size 
of cells with respect to their bodies ($\smcPm$) and 
the cell size heterogeneity ($1/Q$).

\section{Results and discussion}\label{sec:results}

In this article we have investigated the emergence of tissue 
aggregation using finite, generalized Voronoi neighborhoods  
as a basis for the description of cells within epithelial tissues. 
It was shown that the two-dimensional 
geometric structure observed in tissues, in particular the 
circular shape of contact arcs and the  
size distribution of cells can be captured 
by the experimentally accessible characteristics 
of cell body radii $\{ r_i \}$ 
and the relative extension $\mcPm$ of the lamellae.
The quotient method defined 
by equation (\ref{eq:ratrule_part}) with unique weight factors $w_i = r_i$ 
implements the expected partition of influence regions and 
additionally suggests the definition of directed force 
densities on the contact and free boundaries. 
In contrast, the difference method is not appropriate, 
because the influence region of a cell growing in size 
becomes reduced within a cell pair.
Nevertheless this method has been used previously for modelling 
three-dimensional tissue dynamics \cite{SchallerThesis,Schaller2005}. 
By using the particular additive weights $w_i = r_i$, 
the specific geometrical properties (``orthocircles'') 
allow for a dense regular triangulation of space between 
neighboring cell bodies even in the case of overlap, 
see \cite{Beyer2007} for details. 

In the asymptotic case of two cells in brief contact, \ie
$\deltij \approx \deltru$, the approximate equations of 
overdamped motion without persistence were treated   
analytically. Thereby, the contact singularity at 
the rupture points could be resolved and a stationary 
probability distribution was found under the assumption 
that separated cells indirectly attract each other due 
to biased locomotion.
Corresponding simulations confirm that in the case of dominating 
attraction, \ie contact parameter $\lambda > 0$, there is indeed a unique, 
stable equilibrium position facilitating cell-cell attachment 
after the first contact. 

The simulations of multicellular dynamics, \cf figure \ref{fig:pmaxtyp},
reveal a surprising richness of tissue structure, 
with emerging shapes from elongated to globular and 
from compact quadratic to irregular and dispersed. 
Ranging from widely spread to tightly contracted, 
different lamella protrusions appear for varying $\mcPm$. 
Similarly, within a more or less compressed tissue 
as in figure \ref{fig:vor_closure}, 
single cells attain varying forms 
from almost circular at the margins to polygonal in the interior.
Moreover, distinct stresses throughout the tissue  
lead to both narrowly compressed or widely stretched 
cell shapes. 

Furthermore, due to the stochastic nature of 
neighbor constellation and cell-cell interactions, 
different stable tissue configurations may evolve from 
the same starting configuration. The underlying dynamical 
time scales vary over several orders of magnitude. 
This can be interpreted in terms of an incomplete balance 
of competing interactions within a system. 
In the course of relaxation to equilibrium, this 
is known to cause \emph{frustration}, meaning that the system ends up in a 
local minimum of the involved (generalized) free energy 
\cite{SemmrichKroy2007,Purnomo2008,Janke2008}. 
Thus, the structure of the proposed anisotropic and active 
force interactions exhibits interesting non-trivial features, 
yet it is sufficiently elementary to allow for 
a rigorous treatment under certain further assumptions.

Finally, 
the limits of aggregation due to a prohibitively small 
tissue coherence threshold $\sprod$ have been explored. 
This leads to our main result 
following from purely geometric arguments: 
In proposition \ref{prop:starlike} we have derived that 
relatively large lamellae extensions $\mcPm$ 
are restricted by the 
cell size homogeneity $\qnb$ when requiring starlikeness of cells. 
On biological grounds one might argue that not 
all observed cells are starlike (in the mathematical sense). 
In such a case, however, the exertion of forces onto neighboring cells is 
certainly hindered -- especially in cell regions 
that cannot be reached by radial filaments. There, the necessary 
centro-radial support of the apical actin cortex 
may be weakened due to the necessary bending of 
filament bundles.
Reversing the argument, for the cell to provide 
macroscopic stability within the tissue, it is 
beneficial to attain a starlike shape.

Several model refinements are of interest. 
On the cell biological side, cell division is a  
commonly observed phenomenon, having special consequences 
for pattern formation and self-organization during 
embryonic development. 
In this direction also the growth 
of cells, or even the complete cell cycle including 
necrosis or apoptosis could be considered. 
On the mathematical side 
more elaborate stochastics, such as differentially modeled 
stochastic forces on the free boundaries as well as on the 
contact borders could be implemented. 
In particular, in the light of recent experimental results 
\cite{Sivaramakrishnan2008,Koestler2008}, 
varying filament orientations or 
binding and unbinding processes of linker molecules 
might be considered. 
Moreover, we emphasize that a three-dimensional generalization 
of the multiplicatively weighted Voronoi tessellation is straight-forward, 
opening a wide potential for applications, for example 
epidermal tissue organization in the intestinal crypt 
without artificially imposed constraints.
Finally, explicit variables for cell polarization and 
reorganization of the cytoskeleton could possibly 
lead to refined dynamics closer resembling the 
behavior observed \emph{in vivo}.

\textbf{Acknowledgements:}
We thank D.~B\"ar, G.~Schaller and M.~Meyer-Hermann for helpful discussions, 
and G.~Wenzel and G.~Kirfel for providing us with microscopic pictures. 
This work was supported by the
\foreignlanguage{german}{Deutsche Forschungsgemeinschaft, SFB 611}  
``Singular Phenomena and Scaling in Mathematical Models''.

\flushleft

\end{document}